\definecolor{citecolor}{RGB}{39,125,154}
\useunder{\uline}{\ul}{}
\definecolor{citecolor}{RGB}{39,125,154}
\definecolor{tsnecolor1}{HTML}{2D2C40}
\definecolor{tsnecolor2}{HTML}{0050C2}
\definecolor{tsnecolor3}{HTML}{B3002A}
\definecolor{tsnecolor4}{HTML}{006058}
\definecolor{tsnecolor5}{HTML}{3A4E8C}
\definecolor{tsnecolor6}{HTML}{D99F6C}
\definecolor{bgc}{HTML}{DBE7FA}
\newcommand{\graymark}[1]{\scalebox{0.9}{\textbf{#1}}}
\tiny\color{gray},
\newtheorem{definition}{Definition}
\newtheorem{fact}{Fact}
\newtheorem{lemma}{Lemma}
\newtheorem{theorem}{Theorem}
\newtheorem{corollary}{Corollary}
\newtheorem{proposition}{Proposition}
\newtheorem{remark}{Remark}
\newcommand{\ie}{\textit{i.e.}}
\newcommand{\eg}{\textit{e.g.}}
\newcommand{\adj}{\tilde{\mathbf{A}}}
\newcommand{\psd}{\mathbf{P}}
\newcommand{\emb}{\mathbf{E}}
\newcommand{\x}{\mathbf{x}}
\newcommand{\y}{\mathbf{y}}
\newcommand{\z}{\mathbf{z}}
\newcommand{\M}{\mathbf{M}}
\newcommand{\V}{\mathbf{V}}
\newcommand{\mom}{\mathbf{m}}
\newcommand{\vom}{\mathbf{v}}
\title{
    Graph-enhanced Optimizers for Structure-aware Recommendation Embedding Evolution
}
\author{%
Cong~Xu$^{\dagger}$ 
\quad Jun~Wang$^{\dagger}$ 
\quad Jianyong~Wang $^{\S}$
\quad Wei~Zhang$^{\dagger}$\thanks{Corresponding author. This work was supported in part by National Natural Science Foundation of China (No. 92270119 and No. 62072182) and Shanghai Institute of Artificial Intelligence for Education.} \\
$^{\dagger}$East China Normal University
\quad
$^{\S}$Tsinghua University \\
  \tt\small 
  $^{\dagger}$\{congxueric, wongjun, zhangwei.thu2011\}@gmail.com
  \quad $^{\S}$jianyong@tsinghua.edu.cn \\
}
\begin{document}

\maketitle

\begin{abstract}
Embedding plays a key role in modern recommender systems because they are virtual representations of real-world entities and the foundation for subsequent decision-making models.  In this paper, we propose a novel embedding update mechanism, \underline{S}tructure-aware Embedding \underline{Evo}lution (SEvo for short), to encourage related nodes to evolve similarly at each step. Unlike GNN (Graph Neural Network) that typically serves as an intermediate module, SEvo is able to directly inject graph structural information into embedding with minimal computational overhead during training.  The convergence properties of SEvo along with its potential variants are theoretically analyzed to justify the validity of the designs.  Moreover, SEvo can be seamlessly integrated into existing optimizers for state-of-the-art performance. Particularly SEvo-enhanced AdamW with moment estimate correction demonstrates consistent improvements across a spectrum of models and datasets, suggesting a novel technical route to effectively utilize graph structural information beyond explicit GNN modules.
Our code is available at \url{https://github.com/MTandHJ/SEvo}.
\end{abstract}

\section{Introduction}

Surfing Internet leaves footprints such as clicks~\cite{gong2022positive}, browsing~\cite{chen2022gdsrec}, and shopping histories~\cite{zhou2018deep}.
For a modern recommender system~\cite{chen2019behavior,el2022twhin}, the entities involved (\eg, goods, movies) are typically embedded into a latent space based on these interaction data.
As the embedding takes the most to construct and is the foundation to subsequent decision-making models,
its modeling quality directly determines the final performance of the entire system.
According to the homophily assumption~\cite{mcpherson2001birds,zhu2020beyond}, 
it is natural to expect that related entities have closer representations in the latent space.
Note that the similarity between two entities refers specifically to those extracted from interaction data~\cite{wang2023collaboration} or prior knowledge~\cite{sharma2022survey}.
For example, goods selected consecutively by the same user or movies of the same genre are often perceived as more relevant.
Graph neural networks (GNNs)~\cite{bruna2013spectral,convolutional2016Defferrard,MPNN2017Gilmer} are a widely adopted technique to exploit such structural information, 
in concert with a weighted adjacency matrix wherein each entry characterizes how closely two nodes are related.  
Rather than directly injecting structural information into embedding,
GNN typically serves as an intermediate module in the recommender system.
However, designing a versatile GNN module suitable for various recommendation scenarios is challenging. 
This is especially true for sequential recommendation~\cite{chen2020handling,ye2023graph}, 
which needs to take into account both structural and sequential information at the same time.
Moreover, the post-processing fashion inevitably increases the overhead of training and inference, 
limiting the scalability for real-time recommendation.

\begin{figure*}
	\centering
	\includegraphics[width=0.95\textwidth]{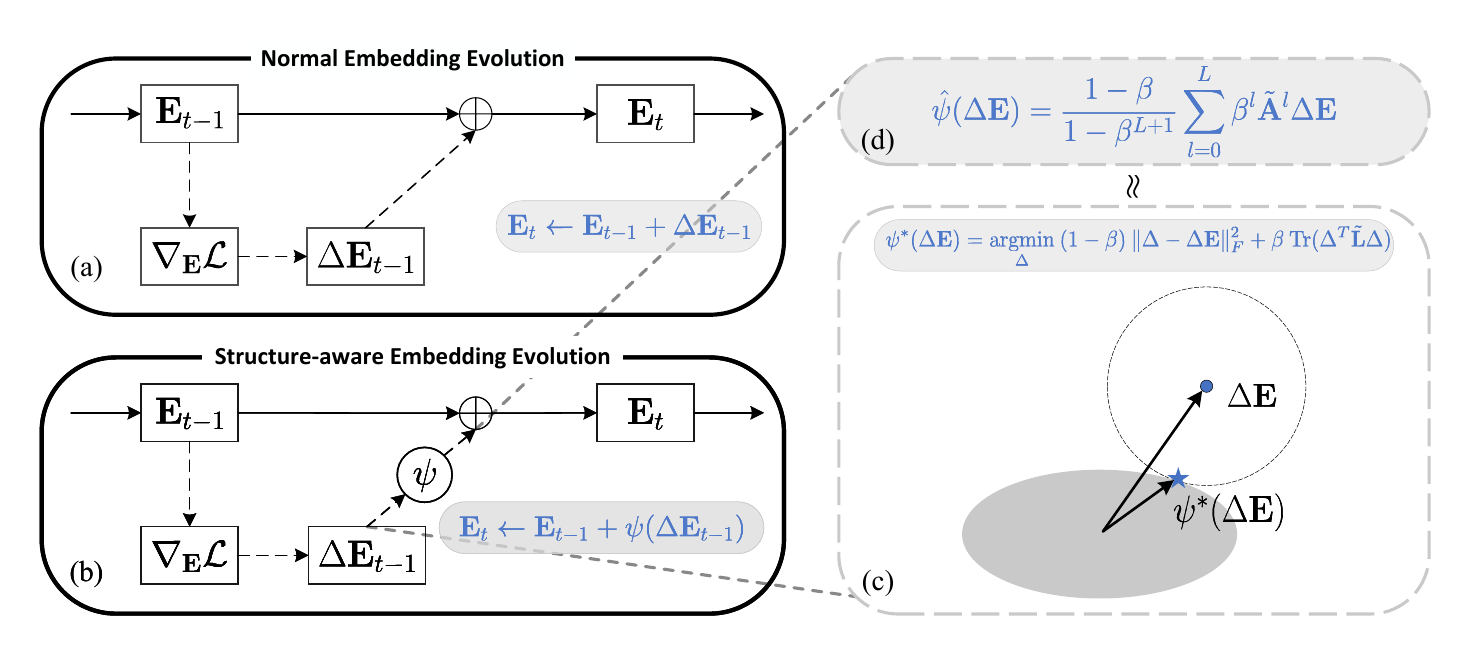}
	\caption{
    Overview of SEvo.
    (a) Normal embedding evolution.
    (b) (Section~\ref{section-SEvo}) Structure-aware embedding evolution.
    (c) (Section~\ref{section-SEvo-method}) Geometric visualization of the variation from $\Delta \emb$ to $\psi^*(\Delta \emb)$. The gray ellipse represents the region with proper smoothness.
    (d) (Section~\ref{section-convergence}) The $L$-layer approximation with a faster convergence guarantee.
	}
	\label{fig-overview}
\end{figure*}

In this work, we aim to directly inject graph structural information into embedding through a novel embedding update mechanism.
Figure \ref{fig-overview} (a) illustrates a normal embedding evolution process, in which the embedding $\mathbf{E}$ is updated at step $t$ as follows:
\begin{align}
  \label{eq-normal}
  \emb_t \leftarrow \emb_{t-1} + \Delta \emb_{t-1}.
\end{align}
Note that the variation $\Delta \emb$ is primarily determined by the (anti-)gradient.
It points to a region able to decrease a loss function concerning recommendation performance \cite{BPR2009Rendle},
but lacks an explicit mechanism to ensure that the variations between related nodes are similar.
The embeddings thus cannot be expected to capture model pairwise relations while minimizing the recommendation loss.

Conversely, structural information can be effectively learned if related nodes evolve similarly at each update.
The structure-aware embedding evolution (SEvo), depicted in Figure~\ref{fig-overview}~(b), is developed for this goal.
A special transformation is applied so as to meet both smoothness and convergence~\cite{zhou2003learning}.
Given that these two criteria inherently conflict to some extent,
we resort to a graph regularization framework~\cite{zhou2003learning,chen2014signal} to balance them.
While analogous frameworks have been used to understand feature smoothing and modify message passing mechanisms for GNNs \cite{ma2021unified,zhu2021interpreting},
applying this to variations is not as straightforward as to features~\cite{gasteiger2018predict} or labels~\cite{huang2020combining}. 
Previous efforts are capable of smoothing, but cannot account for strict convergence.
The specific transformation form must be chosen carefully; subtle differences may slow down or even kill training.
Through comprehensive theoretical analysis, we develop an applicable solution with a provable convergence guarantee.

Apart from the anti-gradient, the variation $\Delta \emb$ can also be derived from the moment estimates. 
Therefore, existing optimizers, such as SGD \cite{sutskever2013importance} and Adam \cite{kingma2014adam}, can benefit from SEvo readily.
In contrast to Adam,
AdamW \cite{loshchilov2018decoupled} decouples the weight decay from the optimization step, 
making it more compatible with SEvo as it is unnecessary to smooth the weight decay as well.
Furthermore, we recorrect the moment estimates of SEvo-enhanced AdamW when encountering sparse gradients.
This modification enhances its robustness across a variety of recommendation models and scenarios.
Extensive experiments over six public datasets have demonstrated that it can effectively inject structural information to boost recommendation performance.
It is important to note that SEvo does not alter the inference logic of the model, 
so the inference time is exactly the same and very little computational overhead is required during training.

The main contributions of this paper can be summarized as follows.
\textbf{1)} The graph regularization framework \cite{zhou2003learning,chen2014signal} has been widely used for feature/label smoothing. 
To the best of our knowledge, we are the first to apply it to variations as an alternative to explicit GNN modules for recommendation.
\textbf{2)} The final formulation of SEvo is non-trivial (previous iterative~\cite{zhou2003learning} or Neumann series~\cite{stewart1998matrix} approximation methods proved to be incompatible in this case) and comes from comprehensive theoretical analyses.
\textbf{3)} We further present SEvo-enhanced AdamW by integrating SEvo and recorrecting the original moment estimates.
These modifications demonstrate consistent performance, yielding average 9\%$\sim$23\% improvements across a spectrum of models.
For larger-scale datasets containing millions of nodes, the performance gains can be as high as 40\%$\sim$139\%.
\textbf{4)} Beyond interaction data, we preliminarily explore the pairwise similarity estimation 
based on other prior knowledge: node categories to promote intra-class representation proximity,
and knowledge distillation \cite{KD:KD:Hinton:2015} to encourage a light-weight student to mimic the embedding behaviors of a teacher model.

\section{Structure-aware Embedding Evolution}
\label{section-SEvo}

In this section, we first introduce some necessary terminology and concepts, in particular smoothness.
SEvo and its theoretical analyses will be detailed in Section \ref{section-SEvo-method} and \ref{section-convergence}.
The proofs hereafter are deferred to Appendix \ref{section-proofs}.

\subsection{Preliminaries}

\textbf{Notations and terminology.}
Let $\mathcal{V} = \{v_1, \ldots, v_n\}$ denote a set of nodes and $\mathbf{A} =[w_{ij}] \in \mathbb{R}^{n \times n}$ a \textit{symmetric} adjacency matrix,
where each entry $w_{ij} = w_{ji}$ characterizes how closely $v_i$ and $v_j$ are related.
They jointly constitute the graph $\mathcal{G} = (\mathcal{V}, \mathbf{A})$.
For example, $\mathcal{V}$ could be a set of movies, and $w_{ij}$ is the frequency of $v_i$ and $v_j$ being liked by the same user.
Denoted by $\mathbf{D}  \in \mathbb{R}^{n \times n}$ the diagonal degree matrix of $\mathbf{A}$,
the normalized adjacency matrix and the corresponding Laplacian matrix are defined as $\adj = \mathbf{D}^{-1/2}\mathbf{A} \mathbf{D}^{-1/2}$ and $\tilde{\mathbf{L}} = \mathbf{I} - \adj$, respectively.
For ease of notation, we use $\langle \cdot, \cdot \rangle$ to denote the inner product,
$\langle \x, \y \rangle = \x^T \y$ for vectors and $\langle \mathbf{X}, \mathbf{Y} \rangle = \text{Tr}(\mathbf{X}^T \mathbf{Y})$ for matrices.
Here, $\text{Tr}(\cdot)$ denotes the trace of a given matrix.

\textbf{Smoothness.}
Before delving into the details of SEvo, 
it is necessary to present a metric to measure the smoothness \cite{zhou2003learning,chen2014signal} of node features $\mathbf{X}$ as a whole.
Denoted by $\mathbf{x}_i, \mathbf{x}_j$ the row vectors of $\mathbf{X}$ and $d_i = \sum_{j} w_{ij}$, we have
\begin{equation}
  \label{eq-graph-regularization-term}
  \begin{array}{ll}
  \mathcal{J}_{smoothness}(\mathbf{X}; \mathcal{G}) 
  := \text{Tr}(\mathbf{X}^T \tilde{\mathbf{L}} \mathbf{X})
  = \sum_{i, j \in \mathcal{V}} w_{ij} \bigg\|\frac{\mathbf{x}_i}{\sqrt{d_i}} - \frac{\mathbf{x}_j}{\sqrt{d_j}} \bigg\|_2^2.
  \end{array}
\end{equation}
This term is often used as graph regularization for feature/label smoothing~\cite{gasteiger2018predict,huang2020combining}.
A lower $\mathcal{J}_{smoothness}$ indicates smaller difference between closely related pairs of nodes, 
and in this case $\mathbf{X}$ is considered smoother.
However, smoothness alone is not sufficient from a performance perspective.
Over-emphasizing this indicator instead leads to the well-known over-smoothing issue~\cite{li2018deeper}.
How to balance variation smoothness and convergence is the main challenge to be addressed below.

\subsection{Methodology}
\label{section-SEvo-method}

The entities involved in a recommender system are typically embedded into a latent space \cite{chen2019behavior,el2022twhin},
and the embeddings in $\emb \in \mathbb{R}^{n \times d}$ are expected to be smooth so that related nodes are close to each other.
As discussed above, $\emb$ is learnable and updated at step $t$ by Eq. \eqref{eq-normal},
where the variation $\Delta \emb$ is mainly determined by the (anti-)gradient. 
For example, $\Delta \emb = -\eta \nabla_{\emb} \mathcal{L}$ when gradient descent with a learning rate of $\eta$ is used to minimize a loss function $\mathcal{L}$.
However, the final embeddings based on this evolution process may be far from sufficient smoothness because:
1) The variation $\Delta \mathbf{E}$ points to the region able to decrease the loss function concerning recommendation performance,
but lacks an explicit smoothness guarantee.
2) 
As numerous item embeddings (millions of nodes in practice) to be trained together for a recommender system,
the variations of two related nodes may be quite different due to the randomness from initialization and mini-batch sampling.

We are to design a special transformation $\psi(\cdot)$ to smooth the variation 
so that the evolution deduced from the following update formula is structure-aware,
\begin{align}
  \label{eq-SEvo}
  \emb_t \leftarrow \emb_{t-1} + \psi(\Delta \emb_{t-1}).
\end{align}
Recall that, in this paper, the similarity is confined to quantifiable values in the adjacency matrix $\mathbf{A}$, in which more related pairs are weighted higher.
Therefore, this transformation should encourage pairs of nodes connected with higher weights to evolve more similarly than those connected with lower weights.
This can be boiled down to structure-aware transformation as defined below.

\begin{definition}[Structure-aware transformation]
  The transformation $\psi(\cdot)$ is structure-aware if
  \begin{align}
    \mathcal{J}_{smoothness}(\psi(\Delta \emb)) \le 
    \mathcal{J}_{smoothness}(\Delta \emb).
  \end{align}
\end{definition}

On the other hand, the transformation must ensure convergence throughout the evoluton process, 
which means that the transformed variation should not differ too much from the original.
For the sake of theoretical analysis, 
the ability to maintain the update direction will be used to \textit{qualitatively} depict this desirable property below,
though a quantitative squared error will be employed later.

\begin{definition}[Direction-aware transformation]
  The transformation $\psi(\cdot)$ is direction-aware if
  \begin{align}
    \big\langle \psi(\Delta \emb), \Delta \emb \big\rangle > 0, \quad \forall \: \Delta \emb \not= \bm{0}.
  \end{align}
\end{definition}

These two criteria inherently conflict to some extent.
We resort to a hyperparameter $\beta \in [0, 1)$ to make a trade-off and the desired transformation is the corresponding minimum; 
that is, 
\begin{align}
  \label{eq-graph-signal-denoising}
  \psi^*(\Delta \emb; \beta) = \mathop{\text{argmin}} \limits_{\Delta} \: (1 - \beta) \: \|\Delta - \Delta \emb\|_F^2 + \beta \: \text{Tr}(\Delta^T \tilde{\mathbf{L}} \Delta).
\end{align}
A larger $\beta$ indicates a stronger smoothness constraint and $\psi^*(\Delta \emb)$ reduces to $\Delta \emb$ when $\beta \rightarrow 0$.
Geometrically, as shown in Figure~\ref{fig-overview}~(c), $\psi^*(\Delta \emb)$ can be interpreted as a projection of $\Delta \emb$ onto the region with proper smoothness.
Taking the gradient to zero could give a closed-form solution,
but it requires prohibitive arithmetic operations and memory overhead, 
which is particularly time-consuming in recommendation due to the large number of nodes.
Zhou et al. \cite{zhou2003learning} 
suggested a $L$-layer \textit{iterative} approximation to circumvent this problem (with $\hat{\psi}_0(\Delta \emb) = \Delta \emb$):
\begin{align*}
  \hat{\psi}_{iter}(\Delta \emb) := \hat{\psi}_L(\Delta \emb), \quad
  \hat{\psi}_l(\Delta \emb) = \beta \adj \hat{\psi}_{l-1}(\Delta \emb) + (1 - \beta) \Delta \emb.
\end{align*}
The resulting transformation is essentially a momentum update that aggregates higher-order information layer by layer.
Analogous message-passing mechanisms have been used in previous GNNs such as APPNP \cite{gasteiger2018predict} and C\&S \cite{huang2020combining}.
However, this commonly used approximate solution is incompatible with SEvo;
sometimes, variations after the transformation may be opposite to the original direction, resulting in a failure to converge.

\begin{theorem}
  \label{theorem-smh-cvg}
  The iterative approximation is direction-aware for all possible normalized adjacency matrices and $L \ge 0$,
  if and only if $\beta < 1/2$.
  In contrast, the Neumann series approximation
  $\hat{\psi}_{nsa}(\Delta \emb) = (1 - \beta) \sum_{l=0}^L \beta^l\adj^l \Delta \emb$
  is structure-aware and direction-aware for any $\beta \in [0, 1)$.
\end{theorem}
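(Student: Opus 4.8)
The plan is to push everything through the spectrum of $\adj$. Since $\adj$ is symmetric and the normalized Laplacian $\tilde{\mathbf{L}} = \mathbf{I} - \adj$ is positive semidefinite with eigenvalues in $[0,2]$, write $\adj = \sum_k \lambda_k \mathbf{u}_k \mathbf{u}_k^T$ for an orthonormal eigenbasis $\{\mathbf{u}_k\}$ with $\lambda_k \in [-1,1]$. Both transformations are matrix polynomials in $\adj$: unrolling the recursion gives $\hat{\psi}_{iter}(\Delta \emb) = p_L(\adj) \Delta \emb$ with $p_L(x) = \beta^L x^L + (1-\beta)\sum_{l=0}^{L-1} \beta^l x^l$, while by definition $\hat{\psi}_{nsa}(\Delta \emb) = q_L(\adj) \Delta \emb$ with $q_L(x) = (1-\beta)\sum_{l=0}^{L} \beta^l x^l$. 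Expanding $\Delta \emb = \sum_k \mathbf{u}_k \mathbf{c}_k^T$ in this basis, for any polynomial $f$ one gets $\langle f(\adj)\Delta \emb, \Delta \emb \rangle = \sum_k f(\lambda_k) \|\mathbf{c}_k\|_2^2$ and $\mathcal{J}_{smoothness}(f(\adj)\Delta \emb) = \sum_k (1-\lambda_k) f(\lambda_k)^2 \|\mathbf{c}_k\|_2^2$, to be compared with $\mathcal{J}_{smoothness}(\Delta \emb) = \sum_k (1-\lambda_k)\|\mathbf{c}_k\|_2^2$. Hence direction-awareness reduces to $f(\lambda) > 0$ on the spectrum, and (since $1 - \lambda_k \ge 0$) structure-awareness is implied by $f(\lambda)^2 \le 1$ on the spectrum. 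Because we quantify over all normalized adjacency matrices, the relevant range of $\lambda$ is the whole interval $[-1,1]$, and the endpoint $\lambda = -1$ is attained, e.g., by the single-edge two-node graph.

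For the ``only if'' direction of the first claim, take $L = 1$ and that single-edge graph, so $-1 \in \mathrm{spec}(\adj)$; choosing $\Delta \emb$ in the corresponding eigenspace gives $\langle \hat{\psi}_1(\Delta \emb), \Delta \emb \rangle = p_1(-1)\|\Delta \emb\|_F^2 = (1-2\beta)\|\Delta \emb\|_F^2$, which fails to be positive once $\beta \ge 1/2$. For the ``if'' direction, I would solve the scalar recursion $a_0 = 1$, $a_l = \beta \lambda a_{l-1} + (1-\beta)$ (whose value at step $L$ equals $p_L(\lambda)$), obtaining $a_L(\lambda) = \big[(1-\beta) + \beta(1-\lambda)(\beta\lambda)^L\big]/(1-\beta\lambda)$ with denominator $1 - \beta\lambda \in [1-\beta, 1+\beta] > 0$. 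When $\lambda \ge 0$ the bracket is at least $1 - \beta$; when $\lambda < 0$ and $L \ge 1$ the oscillating term has magnitude at most $\beta \cdot 2 \cdot \beta^L \le 2\beta^2$, so the bracket is at least $1 - \beta - 2\beta^2 = -(2\beta-1)(\beta+1)$, which is strictly positive exactly when $\beta < 1/2$. This yields the uniform bound $p_L(\lambda) \ge (1-\beta-2\beta^2)/(1+\beta) > 0$ for all $\lambda \in [-1,1]$ and all $L \ge 0$, hence direction-awareness for every $\adj$ and every $L$.

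For the Neumann series, direction-awareness is immediate: since $|\beta\lambda| \le \beta < 1$, summing the geometric series gives $q_L(\lambda) = (1-\beta)\,\frac{1 - (\beta\lambda)^{L+1}}{1-\beta\lambda} \ge (1-\beta)^2/(1+\beta) > 0$. For structure-awareness I would bound $|q_L(\lambda)| \le (1-\beta)\sum_{l=0}^{L}|\beta\lambda|^l \le (1-\beta)\sum_{l=0}^{L}\beta^l = 1 - \beta^{L+1} \le 1$ by the triangle inequality, so $(1-\lambda)q_L(\lambda)^2 \le (1-\lambda)$ termwise, and summing over eigencomponents gives $\mathcal{J}_{smoothness}(\hat{\psi}_{nsa}(\Delta \emb)) \le \mathcal{J}_{smoothness}(\Delta \emb)$.

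The step I expect to be the main obstacle is the ``if'' direction of the first claim: one must recognize that the bipartite endpoint $\lambda = -1$ is genuinely the worst case, write the recursion's closed form correctly, and extract the sharp threshold from the sign of $1 - \beta - 2\beta^2$ — and the $L = 1$ counterexample is precisely what certifies that this crude estimate cannot be sharpened to rescue $\beta \ge 1/2$. The spectral reduction and the geometric-series estimates for the Neumann part are routine by comparison.
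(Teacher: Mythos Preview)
Your proposal is correct. The spectral reduction, the counterexample (the two-node bipartite graph with $L=1$ giving $p_1(-1)=1-2\beta$), and the treatment of the Neumann series via $|q_L(\lambda)|\le 1-\beta^{L+1}\le 1$ and $q_L(\lambda)>0$ are essentially identical to what the paper does in Lemma~\ref{appendix-lemma-eigenvalues-P} and Propositions~\ref{proposition-neumman}--\ref{proposition-iterative}.

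The one genuine methodological difference is in the ``if'' direction for the iterative approximation. The paper writes $\mathbf{P}'=\mathbf{P}+\beta^L\adj^L$ (with $\mathbf{P}$ the $(L{-}1)$-layer Neumann matrix), then splits cases: $L\le 1$ handled directly; $L$ even uses that $\adj^L\succeq 0$; $L\ge 3$ odd uses the Lemma~\ref{appendix-lemma-eigenvalues-P} bound $\lambda_{\min}(\mathbf{P})\ge\frac{1-\beta}{1+\beta}(1-\beta^{L-1})$ together with $\lambda_{\min}(\adj^L)\ge -1$, arriving at the numerator $1-\beta-\beta^{L-1}-\beta^{L+1}\ge 1-\beta-\beta^2-\beta^4>0$ for $\beta<1/2$. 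You instead solve the affine scalar recursion in closed form, $a_L(\lambda)=\big[(1-\beta)+\beta(1-\lambda)(\beta\lambda)^L\big]/(1-\beta\lambda)$, and bound the bracket from below by $1-\beta-2\beta^2=-(2\beta-1)(\beta+1)$ uniformly in $L\ge 1$ and $\lambda\in[-1,0)$. Your route avoids the parity case split entirely and makes the sharp threshold $\beta=1/2$ appear transparently as the root of $1-\beta-2\beta^2$; the paper's route, on the other hand, reuses the Neumann eigenvalue estimate it has already proved and gives a slightly tighter numerical floor ($1-\beta-\beta^2-\beta^4$ versus your $1-\beta-2\beta^2$), though both suffice. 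Either argument is complete.
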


As suggested in Theorem \ref{theorem-smh-cvg}, 
a feasible compromise for $\hat{\psi}_{iter}$ is to restrict $\beta$ to $[0, 1/2)$, 
but this may cause a lack of smoothness.
The Neumann series approximation \cite{stewart1998matrix} appears to be a viable alternative as it qualitatively satisfies both desirable properties.
Nonetheless, this transformation can be further improved for a faster convergence rate based on the analysis presented next.

\subsection{Convergence Analysis for Further Modification}
\label{section-convergence}


In general, the recommender system has some additional parameters $\bm{\theta} \in \mathbb{R}^m$ except for embedding to be trained.
Therefore, we analyze the convergence rate of the following gradient descent strategy:
\begin{align*}
  \emb_{t+1} \leftarrow \emb_t - \eta \hat{\psi}_{nsa} \big(\nabla_{\emb} \mathcal{L}(\emb_t, \bm{\theta}_t)\big), \quad
  \bm{\theta}_{t+1} \leftarrow \bm{\theta}_t - \eta' \nabla_{\bm{\theta}} \mathcal{L}(\emb_t, \bm{\theta}_t),
\end{align*}
wherein SEvo is performed on the embedding and a normal gradient descent is applied to $\bm{\theta}$.
To make the analysis feasible, some mild assumptions on the loss function should be given:
$\mathcal{L}: \mathbb{R}^{n \times d} \times \mathbb{R}^{m} \rightarrow \mathbb{R}$ is a twice continuously differentiable function 
whose first derivative is Lipschitz continuous for some constant $C$.
Then, we obtain the following properties.

\begin{theorem}[Informal]
  \label{theorem-convergence}
  If $\eta = \eta' = 1 / C$, the convergence rate after $T$ updates is $\mathcal{O}(C / ((1 - \beta)^2 T))$.
  If we adopt a modified learning rate of $\eta = \frac{1}{(1 - \beta^{L+1})C}$ for embedding,
  the convergence rate could be improved to $\mathcal{O}( C / ((1 - \beta) T))$.
\end{theorem}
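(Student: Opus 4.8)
The plan hinges on the fact that the Neumann-series transformation is \emph{linear}: writing $\psd := (1-\beta)\sum_{l=0}^{L}\beta^l\adj^l$ we have $\hat{\psi}_{nsa}(\Delta\emb)=\psd\,\Delta\emb$, and since $\psd$ is a polynomial in the symmetric matrix $\adj$ it is symmetric and shares an eigenbasis with $\adj$; because the spectrum of $\adj$ lies in $[-1,1]$, every eigenvalue of $\psd$ equals $p(\lambda):=(1-\beta)\sum_{l=0}^{L}(\beta\lambda)^l=(1-\beta)\tfrac{1-(\beta\lambda)^{L+1}}{1-\beta\lambda}$ for some $\lambda\in[-1,1]$. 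First I would pin down this spectrum. A term-by-term comparison $p(\lambda)\le p(|\lambda|)\le p(1)$ (since $(\beta\lambda)^l\le(\beta|\lambda|)^l\le\beta^l$, the last step because $p$ is increasing on $[0,1]$) gives $\lambda_{\max}(\psd)\le p(1)=1-\beta^{L+1}$; and lower-bounding the numerator by $1-(\beta\lambda)^{L+1}\ge 1-\beta^{L+1}>0$ while upper-bounding the denominator by $1-\beta\lambda\le 1+\beta$ gives $\lambda_{\min}(\psd)\ge\tfrac{(1-\beta)(1-\beta^{L+1})}{1+\beta}$. In particular $\psd$ is positive definite, $\lambda_{\min}(\psd)\ge\tfrac{(1-\beta)^2}{2}$ (using $1-\beta^{L+1}\ge 1-\beta$), and $\lambda_{\min}(\psd)/\lambda_{\max}(\psd)\ge\tfrac{1-\beta}{1+\beta}$.

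Next I would run the standard descent-lemma argument for a $C$-smooth (possibly nonconvex) objective on the joint iterate $(\emb_t,\bm{\theta}_t)$, whose increment is $(-\eta\,\psd\G_t,\,-\eta'\nabla_{\bm{\theta}}\mathcal{L})$ with $\G_t:=\nabla_{\emb}\mathcal{L}(\emb_t,\bm{\theta}_t)$. Abbreviating $\mathcal{L}_t:=\mathcal{L}(\emb_t,\bm{\theta}_t)$ and using $\psd^2\preceq\lambda_{\max}(\psd)\,\psd$ (as quadratic forms) to absorb the remainder term $\tfrac{C\eta^2}{2}\|\psd\G_t\|_F^2$ into $\langle\G_t,\psd\G_t\rangle$, Lipschitz smoothness of $\nabla\mathcal{L}$ yields
\begin{align*}
  \mathcal{L}_{t+1}\ \le\ \mathcal{L}_t\ -\ \eta\Big(1-\tfrac{C\eta}{2}\lambda_{\max}(\psd)\Big)\langle\G_t,\psd\G_t\rangle\ -\ \eta'\Big(1-\tfrac{C\eta'}{2}\Big)\big\|\nabla_{\bm{\theta}}\mathcal{L}(\emb_t,\bm{\theta}_t)\big\|^2 .
\end{align*}
With $\eta'=1/C$ the $\bm{\theta}$-term collapses to $-\tfrac{1}{2C}\|\nabla_{\bm{\theta}}\mathcal{L}\|^2$, so it only remains to lower-bound the embedding term through $\langle\G_t,\psd\G_t\rangle\ge\lambda_{\min}(\psd)\|\G_t\|_F^2$ for the two choices of $\eta$.

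Finally I would specialize $\eta$ and telescope. For $\eta=1/C$: since $\lambda_{\max}(\psd)\le 1-\beta^{L+1}<1$ the prefactor $1-\tfrac{C\eta}{2}\lambda_{\max}(\psd)\ge\tfrac12$, so the embedding term is $\le-\tfrac{1}{2C}\lambda_{\min}(\psd)\|\G_t\|_F^2$ and $\lambda_{\min}(\psd)\ge\tfrac{(1-\beta)^2}{2}$ makes the one-step decrease at least $\Omega\big(\tfrac{(1-\beta)^2}{C}\big)\big(\|\G_t\|_F^2+\|\nabla_{\bm{\theta}}\mathcal{L}\|^2\big)$. For $\eta=\tfrac{1}{(1-\beta^{L+1})C}$ we still have $1-\tfrac{C\eta}{2}\lambda_{\max}(\psd)\ge\tfrac12$, but now the embedding term is $\le-\tfrac{\eta}{2}\lambda_{\min}(\psd)\|\G_t\|_F^2=-\tfrac{\lambda_{\min}(\psd)}{2(1-\beta^{L+1})C}\|\G_t\|_F^2\le-\tfrac{1-\beta}{2(1+\beta)C}\|\G_t\|_F^2$, improving the one-step decrease to $\Omega\big(\tfrac{1-\beta}{C}\big)\big(\|\G_t\|_F^2+\|\nabla_{\bm{\theta}}\mathcal{L}\|^2\big)$. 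Summing over $t=0,\dots,T-1$ and using that $\mathcal{L}$ is bounded below (an implicit standing assumption) turns each bound into $\min_{t<T}\big(\|\G_t\|_F^2+\|\nabla_{\bm{\theta}}\mathcal{L}(\emb_t,\bm{\theta}_t)\|^2\big)\le\mathcal{O}\big(\tfrac{C}{(1-\beta)^2 T}\big)$, resp. $\mathcal{O}\big(\tfrac{C}{(1-\beta) T}\big)$ --- the claimed rates, understood in the usual nonconvex sense of the smallest full-gradient norm along the trajectory. The hard part is the spectral analysis of $\psd$ in the first step: the alternating signs when $\lambda<0$ are exactly what make the lower bound on $\lambda_{\min}(\psd)$ (hence the precise powers of $1-\beta$ and the gap between the two rates) delicate, whereas the descent-lemma bookkeeping afterwards is routine.
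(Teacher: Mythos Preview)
Your proposal is correct and follows the same overall architecture as the paper: bound the spectrum of $\psd=(1-\beta)\sum_{l=0}^{L}\beta^l\adj^l$, plug into the descent lemma for a $C$-smooth objective, and telescope. The one noteworthy difference is how you obtain the lower eigenvalue bound. The paper pairs consecutive terms of $\sum_l(\beta\lambda)^l$ and invokes an auxiliary monotonicity fact to arrive at $\lambda_{\min}(\psd)\ge\tfrac{1-\beta}{1+\beta}(1-\beta^{L})$; you instead use the closed-form $(1-\beta)\tfrac{1-(\beta\lambda)^{L+1}}{1-\beta\lambda}$ and bound numerator and denominator separately, yielding the slightly sharper and more directly obtained $\lambda_{\min}(\psd)\ge\tfrac{(1-\beta)(1-\beta^{L+1})}{1+\beta}$. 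This is cleaner and, conveniently, the factor $1-\beta^{L+1}$ cancels exactly against the modified learning rate $\eta=\tfrac{1}{(1-\beta^{L+1})C}$, giving the $\tfrac{1-\beta}{1+\beta}$ ratio without the limit computation $\lim_{\beta\to1}\tfrac{1-\beta^{L+1}}{1-\beta^{L}}$ the paper performs at the end. On the descent-lemma side, the paper packages both blocks into a single preconditioner $\tilde{\psd}=\mathrm{diag}(\eta\psd,\eta'\mathbf{I})$ and works in the quadratic norm $\|\cdot\|_{\tilde{\psd}}$, whereas you keep the two blocks separate and use $\psd^2\preceq\lambda_{\max}(\psd)\psd$; these are equivalent bookkeeping choices.
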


\begin{remark}
Our main interest is to justify the designs of SEvo rather than to pursue a particular convergence rate,
so some mild assumptions suggested in \cite{nesterov1998introductory} are adopted here.
By introducing the steepest descent for quadratic norms \cite{boyd2004convex}, 
better convergence can be obtained with stronger assumptions.
\end{remark}

Two conclusions can be drawn from Theorem \ref{theorem-convergence}.
\textbf{1)} The theoretical upper bound becomes worse when $\beta \rightarrow 1$.
This makes sense since $\hat{\psi}_{nsa}(\nabla_{\emb} \mathcal{L})$ is getting smoother and further away from the original descent direction.
\textbf{2)} A modified learning rate for embedding can significantly improve the convergence rate.
This phenomenon can be understood readily if we notice the fact that
\begin{align*}
  \|\hat{\psi}_{nsa}(\Delta \emb)\|_F \le (1 - \beta^{L+1}) \|\Delta \emb\|_F.
\end{align*}
Thus, the modified learning rate is indeed to offset the \textit{scaling effect} induced by SEvo.
In view of this, we directly incorporate this factor into SEvo to avoid getting stuck in the learning rate search,
yielding the final desired transformation:
\begin{align}
  \hat{\psi}(\Delta \emb; \beta) = \frac{1 - \beta}{1 - \beta^{L+1}} \sum_{l=0}^L \beta^l \adj^l  \Delta \emb.
\end{align}
It can be shown that $\hat{\psi}$ is structure-aware and direction-aware, and converges to $\psi$ as $L$ increases.

\subsection{Integrating SEvo into Existing Optimizers}
\label{section-SEvo-AdamW}

SEvo can be seamlessly integrated into existing optimizers since the variation involved in Eq.~\eqref{eq-SEvo} can be extended beyond the (anti-)gradient.
For SGD with momentum, the variation becomes the first moment estimate,
and for Adam this is jointly determined by the first/second moment estimates.
AdamW is also widely adopted for training recommenders.
Unlike Adam whose moment estimate is a mixture of gradient and weight decay,
AdamW decouples the weight decay from the optimization step, which is preferable since it makes no sense to require the weight decay to be smooth as well.
However, in very rare cases, SEvo-enhanced AdamW fails to work very well.
We next try to ascertain the causes and then improve the robustness of SEvo-enhanced AdamW.

Denoted by $\mathbf{g} := \nabla_{\mathbf{e}} \mathcal{L}  \in \mathbb{R}^{d}$ the gradient for a node embedding $\mathbf{e}$ and $\mathbf{g}^2 := \mathbf{g} \odot \mathbf{g}$ the element-wise square,
AdamW estimates the first and second moments at step $t$ using the following formulas
  \begin{align*}
    \mom_t = \beta_1 \mom_{t-1} + (1 - \beta_1) \mathbf{g}_{t-1}, \quad
    \vom_t = \beta_2 \vom_{t-1} + (1 - \beta_2) \mathbf{g}_{t-1}^2,
  \end{align*}
where $\beta_1, \beta_2$ are two momentum factors.
Then the original AdamW updates embeddings by
\begin{align*}
  \mathbf{e}_t = \mathbf{e}_{t-1} - \eta \cdot \Delta \mathbf{e}_{t-1}, \quad \Delta \mathbf{e}_{t-1} := \hat{\mom}_t / \sqrt{\hat{\vom}_t}.
\end{align*}
Note that the bias-corrected estimates $\hat{\mom}_t =  \mom_t / (1 - \beta_1^t)$ and $\hat{\vom}_t =  \vom_t / (1 - \beta_2^t)$ 
are employed for numerical stability~\cite{kingma2014adam}.
In practice, only a fraction of nodes are sampled for training in a mini-batch,
so the remaining embeddings have zero gradients. 
In this case, the sparse gradient problem may introduce some unexpected `biases' as depicted below.

\begin{proposition}
  \label{proposition-adam-biased}
  If a node is no longer sampled in subsequent $p$ batches after step $t$, we have
  $\Delta \mathbf{e}_{t+p-1} = \kappa \cdot \frac{\beta_1^{p}}{\sqrt{\beta_2^p}} \Delta \mathbf{e}_{t - 1}$,
  and the coefficient of $\kappa$ is mainly determined by $t$.
\end{proposition}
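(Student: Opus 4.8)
The plan is to track how the first and second moment estimates evolve across the $p$ consecutive batches in which the node receives no gradient, and then to substitute the resulting expressions into the bias-corrected update rule.

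First, observe that if the node is not sampled in batches $t, t+1, \ldots, t+p-1$, then $\mathbf{g}_t = \mathbf{g}_{t+1} = \cdots = \mathbf{g}_{t+p-1} = \bm{0}$. Feeding these zeros into the recursions $\mom_s = \beta_1 \mom_{s-1} + (1-\beta_1)\mathbf{g}_{s-1}$ and $\vom_s = \beta_2 \vom_{s-1} + (1-\beta_2)\mathbf{g}_{s-1}^2$, each successive step merely rescales the running estimate, so a one-line induction gives $\mom_{t+p} = \beta_1^{p}\mom_t$ and $\vom_{t+p} = \beta_2^{p}\vom_t$.

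Next, I would write out the update at step $t+p-1$, namely $\Delta \mathbf{e}_{t+p-1} = \hat{\mom}_{t+p}/\sqrt{\hat{\vom}_{t+p}}$ with $\hat{\mom}_{t+p} = \mom_{t+p}/(1-\beta_1^{t+p})$ and $\hat{\vom}_{t+p} = \vom_{t+p}/(1-\beta_2^{t+p})$, and compare it with $\Delta \mathbf{e}_{t-1} = (\mom_t/(1-\beta_1^{t}))/\sqrt{\vom_t/(1-\beta_2^{t})}$. Substituting the collapsed moments, the $\mom_t$ and $\sqrt{\vom_t}$ factors cancel coordinate-wise, so the two updates differ by the single scalar $\beta_1^{p}/\sqrt{\beta_2^{p}}$ times
\[
  \kappa = \frac{(1-\beta_1^{t})\,\sqrt{1-\beta_2^{t+p}}}{(1-\beta_1^{t+p})\,\sqrt{1-\beta_2^{t}}},
\]
which is exactly the claimed identity $\Delta \mathbf{e}_{t+p-1} = \kappa \cdot \frac{\beta_1^{p}}{\sqrt{\beta_2^{p}}}\,\Delta \mathbf{e}_{t-1}$. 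Finally, since $0 < \beta_1, \beta_2 < 1$, the terms $\beta_1^{t+p}$ and $\beta_2^{t+p}$ are negligible next to $\beta_1^{t}, \beta_2^{t}$ once $t$ is not tiny, whence $\kappa \approx (1-\beta_1^{t})/\sqrt{1-\beta_2^{t}}$ and, more quantitatively, $|\kappa - 1|$ is exponentially small in $t$ with only a subdominant dependence on $p$ --- this is the sense in which the coefficient $\kappa$ is ``mainly determined by $t$''.

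Since every step is either the geometric-decay recursion or an algebraic rearrangement, there is no deep obstacle; the only thing that needs care is the bookkeeping of the bias-correction denominators --- keeping the exponents $t$ and $t+p$ apart and confirming that their combination is the benign factor $\kappa$ rather than being absorbed into the genuine shrinkage $\beta_1^{p}/\sqrt{\beta_2^{p}}$. I expect that minor bookkeeping to be the main (and very mild) subtlety.
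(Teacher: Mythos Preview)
Your proposal is correct and follows essentially the same approach as the paper's proof: both collapse the moment recursions to $\mom_{t+p}=\beta_1^{p}\mom_t$, $\vom_{t+p}=\beta_2^{p}\vom_t$, substitute into the bias-corrected update, and arrive at the identical expression $\kappa = \frac{(1-\beta_1^{t})\sqrt{1-\beta_2^{t+p}}}{(1-\beta_1^{t+p})\sqrt{1-\beta_2^{t}}}$. The paper justifies ``mainly determined by $t$'' simply via $\lim_{t\to\infty}\kappa(t,p)=1$, which your slightly more detailed argument subsumes.
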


Considering a common case $\beta_2 \approx \beta_1$, the right-hand side approaches $\mathcal{O}(\beta_1^{p/2})$.
The step size for inactive nodes then gets slower and slower during idle periods.
This seems reasonable as their moment estimates are becoming outdated; however,
this effect sometimes prevents the variation from being smoothed by SEvo.
We hypothesize that this is because SEvo itself tends to assign more energy to active nodes and less energy to inactive nodes.
So this auto-attenuation effect of the original AdamW is somewhat redundant.
Fortunately, there is a feasible modification to make SEvo-enhanced AdamW more robust:

\begin{theorem}
  \label{theorem-adamw}
Under the same assumptions as in Proposition~\ref{proposition-adam-biased},
$\Delta \mathbf{e}_{t+p-1} =  \Delta \mathbf{e}_{t - 1}$
if the moment estimates are updated in the following manner when $\mathbf{g}_t = \bm{0}$,
  \begin{align}
    \mom_t = \beta_1 \mom_{t-1} + (1 - \beta_1) \frac{1}{1 - \beta_1^{t-1}} \mom_{t-1}, \quad
    \vom_t = \beta_2 \vom_{t-1} + (1 - \beta_2) \frac{1}{1 - \beta_2^{t-1}} \vom_{t-1}.
  \end{align}
\end{theorem}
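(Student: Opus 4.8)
The plan is to show that the proposed modification leaves the \emph{bias-corrected} moment estimates unchanged during an idle period, and then invoke the identity $\Delta \mathbf{e}_{\tau-1} = \hat{\mom}_\tau / \sqrt{\hat{\vom}_\tau}$ to conclude $\Delta \mathbf{e}_{t+p-1} = \Delta \mathbf{e}_{t-1}$ (i.e.\ $\kappa = 1$ and the $\beta_1^p / \sqrt{\beta_2^p}$ attenuation of Proposition~\ref{proposition-adam-biased} disappears). So everything reduces to a statement about $\hat{\mom}$ and $\hat{\vom}$ along the window of zero gradients.

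First I would plug a vanishing gradient into the modified recursion and factor out $\mom_{\tau-1}$, giving $\mom_\tau = \bigl(\beta_1 + \tfrac{1-\beta_1}{1-\beta_1^{\tau-1}}\bigr)\mom_{\tau-1}$. The one genuine computation is the algebraic identity
\[
  \beta_1 + \frac{1-\beta_1}{1-\beta_1^{\tau-1}} = \frac{\beta_1\bigl(1-\beta_1^{\tau-1}\bigr) + \bigl(1-\beta_1\bigr)}{1-\beta_1^{\tau-1}} = \frac{1-\beta_1^{\tau}}{1-\beta_1^{\tau-1}},
\]
so that $\mom_\tau = \tfrac{1-\beta_1^{\tau}}{1-\beta_1^{\tau-1}}\,\mom_{\tau-1}$; dividing both sides by $1-\beta_1^\tau$ gives $\hat{\mom}_\tau = \hat{\mom}_{\tau-1}$. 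The identical argument with $\beta_2$ applied to $\vom$ yields $\hat{\vom}_\tau = \hat{\vom}_{\tau-1}$.

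Next I would iterate this over the $p$ idle steps — over every $\tau$ in the idle window for which the relevant gradient is zero, using the index conventions of Proposition~\ref{proposition-adam-biased} — to obtain, by telescoping, $\hat{\mom}_{t+p} = \hat{\mom}_{t}$ and $\hat{\vom}_{t+p} = \hat{\vom}_{t}$. Substituting into $\Delta \mathbf{e}_{t+p-1} = \hat{\mom}_{t+p}/\sqrt{\hat{\vom}_{t+p}}$ and $\Delta \mathbf{e}_{t-1} = \hat{\mom}_{t}/\sqrt{\hat{\vom}_{t}}$ gives the claim.

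The calculation is elementary; the only thing requiring attention is index bookkeeping — lining up the $p$ vanishing gradients of Proposition~\ref{proposition-adam-biased} with the recursions for $\mom$ and $\vom$, and checking that the bias-correction denominators telescope \emph{exactly}, not merely approximately as in the original AdamW. Put differently, the real content of the theorem is the \emph{choice} of the multiplicative factor $(1-\beta^{\tau})/(1-\beta^{\tau-1})$: it is precisely the factor that makes each raw update a fixed point of the bias-corrected iteration, so once it is written down the rest is the short verification above. I would close with the observation that, whereas the original AdamW shrinks $\Delta \mathbf{e}$ by $\mathcal{O}(\beta_1^{p/2})$ over an idle window, the modified rule preserves it exactly, so the redistribution of update energy between active and inactive nodes is then governed solely by SEvo.
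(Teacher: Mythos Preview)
Your proposal is correct and follows essentially the same route as the paper: both reduce the modified recursion during a zero-gradient step to $\mom_\tau = \tfrac{1-\beta_1^{\tau}}{1-\beta_1^{\tau-1}}\mom_{\tau-1}$, observe that this forces $\hat{\mom}_\tau = \hat{\mom}_{\tau-1}$ (and likewise for $\hat{\vom}$), and telescope over the $p$ idle steps to conclude $\Delta\mathbf{e}_{t+p-1}=\Delta\mathbf{e}_{t-1}$. The paper's appendix additionally verifies unbiasedness of the modified estimates under the stationarity assumption of Lemma~\ref{lemma-adam-unbiased}, but that is an extra claim beyond the statement you were asked to prove.
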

As can be seen, when sparse gradients are encountered, 
the approach in Theorem \ref{theorem-adamw} is actually to estimate the current gradient from previous moments.
The coefficients $1 / (1 - \beta_1^{t-1})$ and $1 / (1 - \beta_1^{t-1})$ are used here for unbiasedness (refer to Appendix \ref{section-proof-loce-enhanced-Adamw} for detailed discussion and proofs).
We summarize in Appendix~\ref{section-algorithms} the algorithms of AdamW, Adam, and SGD after the enhancement of SEvo,
with an empirical comparison presented in Section~\ref{section-ablation-study}.

The previous discussion lays the technical basis for injecting graph structural information, 
but the final recommendation performance is determined by how `accurate' the similarity estimation is.
Following other GNN-based sequence models \cite{wu2019session,xia2021self},
the number of consecutive occurrences across all sequences will be used as the pairwise similarity $w_{ij}$.
In other words, items $v_i$ and $v_j$ that appear consecutively more frequently are assumed more related.
Notably, we would like to emphasize that SEvo can readily inject other types of knowledge beyond interaction data.
We have made some preliminary efforts in Appendix~\ref{section-adj-similarity} and observed some promising results.

\section{Experiments}
\label{section-experiments}

In this section, we comprehensively verify the superiority of SEvo.
We focus on sequential recommendation for two reasons:
1) This is the most common scenario in practice;
2) Utilizing both sequential and structural information is beneficial yet challenging.
We showcase that SEvo is a promising way to achieve this goal.
It is worth noting that
although technically SEvo can be applied to general graph embedding learning~\cite{chami2020graphembedding},
we contend SEvo-AdamW is especially useful for mitigating the inconsistent embedding evolution caused by data sparsity, while effectively injecting structural information in conjunction with other types of information.

Due to space constraints, 
this section presents only the primary results concerning accuracy, efficiency, 
and some empirical evidence that supports the aforementioned claims.
We begin by introducing the datasets, evaluation metrics, and baselines, while the
implementation details are provided in Appendix \ref{section-implementation-details}.

\begin{wraptable}{r}{0.46\textwidth}
  \setlength{\tabcolsep}{3pt}
  \centering
  \caption{Dataset statistics}
  \label{table-dataset-statistics}
  \scalebox{0.68}{
\begin{tabular}{c|cccc}
  \toprule
Dataset     & \#Users & \#Items & \#Interactions & Avg. Len. \\
  \midrule
Beauty      & 22,363  & 12,101  & 198,502        & 8.9       \\
Toys        & 19,412  & 11,924  & 167,597        & 8.6       \\
Tools       & 16,638  & 10,217  & 134,476        & 8.1       \\
MovieLens-1M & 6,040   & 3,416   & 999,611        &165.5    \\
  \midrule
Electronics     & 728,489  & 159,729  & 6,737,580        & 9.24       \\
Clothing        & 1,219,337  & 376,378  & 11,282,445        & 9.25       \\
  \bottomrule
\end{tabular}
  }
\end{wraptable}

\textbf{Datasets.}
Six benchmark datasets are considered in this paper.
The first four datasets including Beauty, Toys, Tools, and MovieLens-1M are commonly employed in previous studies for empirical comparisons. 
Additionally, two larger-scale datasets, Clothing and Electronics, 
are used to assess SEvo's scalability in scenarios involving millions of nodes.
Following \cite{kang2018self,fan2022sequential},
we filter out users and items with less than 5 interactions,
and the validation set and test set are split in a \textit{leave-one-out} fashion, 
namely the last interaction for testing and the penultimate one for validation.
This splitting allows for fair comparisons, either for sequential recommendation or collaborative filtering.
The dataset statistics are presented in Table~\ref{table-dataset-statistics}.

\textbf{Evaluation metrics.}
For each user, the predicted scores over \textit{all items} will be sorted in descending order to generate top-N candidate lists.
We consider two widely-used evaluation metrics, HR@N (Hit Rate) and NDCG@N (Normalized Discounted Cumulative Gain).
The former measures the rate of successful hits among the top-N recommended candidates,  
while the latter takes into account the ranking positions and assigns higher scores in order of priority.

\begin{table*}
\setlength{\tabcolsep}{3pt}
\caption{
Overall performance comparison. 
The best baselines and ours are marked in \underline{underline} and \textbf{bold}, respectively.
Symbol $\blacktriangle$\% stands for the relative gap between them.
Paired t-test is performed over 5 independent runs for evaluating $p$-value ($\le 0.05$ indicates statistical significance). 
`Avg. Improv.' for each backbone depicts average relative improvements against the baseline.
}
\label{table-performance}
  \centering
\scalebox{0.58}{
\begin{tabular}{cl||cccc||c >{\columncolor{bgc}}c|c >{\columncolor{bgc}}c|c >{\columncolor{bgc}}c|c >{\columncolor{bgc}}c|c >{\columncolor{bgc}}c|cl}
  \toprule
\multicolumn{1}{l}{}       &                      & \multicolumn{4}{c||}{GNN-based}                                                             & \multicolumn{10}{c|}{MF or RNN/Transformer-based}                                                                                                                                & \multirow{2}{*}{$\blacktriangle$\%} & \multicolumn{1}{c}{\multirow{2}{*}{$p$-value}} \\
                           & \multicolumn{1}{c||}{} & LightGCN             & SR-GNN               & LESSR                & MAERec               & \multicolumn{2}{c}{MF-BPR}    & \multicolumn{2}{c}{GRU4Rec}   & \multicolumn{2}{c}{SASRec}    & \multicolumn{2}{c}{BERT4Rec}           & \multicolumn{2}{c|}{STOSA}              &                          & \multicolumn{1}{c}{}                         \\
  &  &  &  &  &  &  & \multicolumn{1}{c}{\graymark{+SEvo}} &  & \multicolumn{1}{c}{\graymark{+SEvo}} &  & \multicolumn{1}{c}{\graymark{+SEvo}} &  &  \multicolumn{1}{c}{\graymark{+SEvo}} &  & \multicolumn{1}{c|}{\graymark{+SEvo}} &  & \\
  \midrule
\multirow{5}{*}{\rotatebox{90}{Beauty}}    & HR@1                 & 0.0074               & 0.0059               & 0.0088               & 0.0113               & 0.0071               & 0.0076 & 0.0061               & 0.0094 & 0.0120               & 0.0154 & 0.0157               & 0.0172          & {\ul 0.0166}         & \textbf{0.0216} & 30.2\%                   & 6.90E-05                                     \\
                           & HR@5                 & 0.0289               & 0.0247               & 0.0322               & 0.0424               & 0.0272               & 0.0293 & 0.0233               & 0.0326 & 0.0404               & 0.0499 & {\ul 0.0479}         & 0.0522          & {\ul 0.0479}         & \textbf{0.0544} & 13.5\%                   & 7.69E-04                                     \\
                           & HR@10                & 0.0472               & 0.0406               & 0.0506               & 0.0662               & 0.0454               & 0.0480 & 0.0395               & 0.0524 & 0.0634               & 0.0759 & {\ul 0.0716}         & 0.0772          & 0.0680               & \textbf{0.0774} & 8.2\%                    & 2.66E-03                                     \\
                           & NDCG@5               & 0.0181               & 0.0152               & 0.0205               & 0.0269               & 0.0170               & 0.0184 & 0.0146               & 0.0209 & 0.0262               & 0.0328 & 0.0319               & 0.0350          & {\ul 0.0327}         & \textbf{0.0383} & 17.2\%                   & 9.70E-05                                     \\
                           & NDCG@10              & 0.0240               & 0.0203               & 0.0264               & 0.0346               & 0.0228               & 0.0244 & 0.0198               & 0.0273 & 0.0336               & 0.0411 & {\ul 0.0395}         & 0.0430          & 0.0391               & \textbf{0.0457} & 15.5\%                   & 8.02E-05                                     \\
  \midrule
\multirow{5}{*}{\rotatebox{90}{Toys}}      & HR@1                 & 0.0087               & 0.0100               & 0.0126               & 0.0171               & 0.0079               & 0.0099 & 0.0059               & 0.0080 & 0.0172               & 0.0192 & 0.0160               & 0.0175          & {\ul 0.0232}         & \textbf{0.0267} & 15.3\%                   & 1.46E-03                                     \\
                           & HR@5                 & 0.0279               & 0.0294               & 0.0352               & 0.0532               & 0.0267               & 0.0306 & 0.0209               & 0.0276 & 0.0506               & 0.0584 & 0.0430               & 0.0492          & {\ul 0.0571}         & \textbf{0.0625} & 9.6\%                    & 5.70E-03                                     \\
                           & HR@10                & 0.0456               & 0.0439               & 0.0513               & {\ul 0.0796}         & 0.0427               & 0.0477 & 0.0345               & 0.0446 & 0.0727               & 0.0844 & 0.0645               & 0.0723          & 0.0776               & \textbf{0.0872} & 9.5\%                    & 3.07E-04                                     \\
                           & NDCG@5               & 0.0183               & 0.0198               & 0.0240               & 0.0355               & 0.0174               & 0.0203 & 0.0134               & 0.0179 & 0.0342               & 0.0392 & 0.0297               & 0.0336          & {\ul 0.0406}         & \textbf{0.0453} & 11.6\%                   & 2.48E-03                                     \\
                           & NDCG@10              & 0.0240               & 0.0245               & 0.0292               & 0.0440               & 0.0225               & 0.0258 & 0.0177               & 0.0234 & 0.0413               & 0.0475 & 0.0366               & 0.0410          & {\ul 0.0472}         & \textbf{0.0532} & 12.7\%                   & 3.12E-05                                     \\
  \midrule
\multirow{5}{*}{\rotatebox{90}{Tools}}     & HR@1                 & 0.0067               & 0.0046               & 0.0045               & 0.0083               & 0.0058               & 0.0071 & 0.0053               & 0.0058 & {\ul 0.0099}         & 0.0108 & 0.0074               & 0.0087          & 0.0095               & \textbf{0.0133} & 33.8\%                   & 1.23E-02                                     \\
                           & HR@5                 & 0.0212               & 0.0162               & 0.0157               & 0.0271               & 0.0187               & 0.0225 & 0.0174               & 0.0208 & {\ul 0.0317}         & 0.0337 & 0.0244               & 0.0279          & 0.0276               & \textbf{0.0350} & 10.5\%                   & 8.94E-03                                     \\
                           & HR@10                & 0.0326               & 0.0260               & 0.0263               & 0.0423               & 0.0293               & 0.0348 & 0.0272               & 0.0336 & {\ul 0.0466}         & 0.0497 & 0.0405               & 0.0441          & 0.0417               & \textbf{0.0502} & 7.7\%                    & 8.25E-03                                     \\
                           & NDCG@5               & 0.0140               & 0.0103               & 0.0101               & 0.0177               & 0.0123               & 0.0147 & 0.0114               & 0.0133 & {\ul 0.0210}         & 0.0223 & 0.0159               & 0.0183          & 0.0186               & \textbf{0.0244} & 15.9\%                   & 5.57E-03                                     \\
                           & NDCG@10              & 0.0176               & 0.0135               & 0.0135               & 0.0226               & 0.0157               & 0.0187 & 0.0145               & 0.0174 & {\ul 0.0258}         & 0.0274 & 0.0211               & 0.0235          & 0.0231               & \textbf{0.0293} & 13.4\%                   & 3.79E-03                                     \\
  \midrule
\multirow{5}{*}{\rotatebox{90}{MovieLens}} & HR@1                 & 0.0124               & 0.0383               & 0.0513               & 0.0439               & 0.0117               & 0.0133 & 0.0487               & 0.0487 & 0.0490               & 0.0517 & {\ul 0.0681}         & \textbf{0.0733} & 0.0457               & 0.0510          & 7.6\%                    & \multicolumn{1}{c}{3.81E-02}                 \\
                           & HR@5                 & 0.0495               & 0.1297               & 0.1665               & 0.1563               & 0.0470               & 0.0509 & 0.1625               & 0.1663 & 0.1599               & 0.1670 & {\ul 0.2069}         & \textbf{0.2127} & 0.1409               & 0.1569          & 2.8\%                    & \multicolumn{1}{c}{4.17E-03}                 \\
                           & HR@10                & 0.0866               & 0.2009               & 0.2539               & 0.2462               & 0.0836               & 0.0876 & 0.2522               & 0.2568 & 0.2492               & 0.2567 & {\ul 0.3018}         & \textbf{0.3075} & 0.2185               & 0.2356          & 1.9\%                    & \multicolumn{1}{c}{9.32E-02}                 \\
                           & NDCG@5               & 0.0307               & 0.0842               & 0.1092               & 0.1003               & 0.0291               & 0.0319 & 0.1061               & 0.1075 & 0.1046               & 0.1096 & {\ul 0.1387}         & \textbf{0.1437} & 0.0932               & 0.1041          & 3.6\%                    & \multicolumn{1}{c}{4.19E-04}                 \\
                           & NDCG@10              & 0.0427               & 0.1071               & 0.1373               & 0.1292               & 0.0408               & 0.0436 & 0.1350               & 0.1366 & 0.1333               & 0.1385 & {\ul 0.1693}         & \textbf{0.1743} & 0.1181               & 0.1295          & 2.9\%                    & \multicolumn{1}{c}{1.30E-02}                 \\
  \midrule
\multicolumn{2}{c||}{\textbf{Avg. Improv.}}                  & \multicolumn{1}{l}{} & \multicolumn{1}{l}{} & \multicolumn{1}{l}{} & \multicolumn{1}{l||}{} & \multicolumn{1}{l}{} & \graymark{+13.1\%} & \multicolumn{1}{l}{} & \graymark{+23.4\%} & \multicolumn{1}{l}{} & \graymark{+12.3\%} & \multicolumn{1}{l}{} & \graymark{+9.6\%}           & \multicolumn{1}{l}{} & \graymark{+17.5\%}          & \multicolumn{1}{l}{}     &                                              \\
\multicolumn{2}{c||}{\textbf{Avg. Train. Time}}                  & 2,820s & 43,783s & 62,457s & 31,674 & 2,863s & \graymark{+173s} & 3,582s & \graymark{+124s} & 532s & \graymark{+37s} & 1,256s & \graymark{+288s} & 2,087s & \graymark{+127s} \\
\multicolumn{2}{c||}{\textbf{Avg. Inf. Time}}                  & 1.19s & 11.18s & 9.34s & 2.73s & 1.13s & \graymark{+0s} & 1.80s & \graymark{+0s} & 1.88s & \graymark{+0s} & 1.61s & \graymark{+0s} & 6.72 & \graymark{+0s} \\
  \bottomrule
\end{tabular}
}
\end{table*}

\textbf{Baselines.}
We select four GNN-based models 
(LightGCN~\cite{he2020lightgcn}, SR-GNN~\cite{wu2019session}, LESSR~\cite{chen2020handling}, and MAERec~\cite{ye2023graph}) 
as performance and efficiency benchmarks.
Since this study is not to develop a new model, 
four classic sequence models (GRU4Rec~\cite{hidasi2015session}, SASRec~\cite{kang2018self}, BERT4Rec~\cite{sun2019bert4rec}, and STOSA~\cite{fan2022sequential}) 
are utilized as backbones to validate the effectiveness of SEvo.
Besides, MF-BPR~\cite{BPR2009Rendle} is also considered here as a backbone without sequence modeling.
We carefully tune the hyperparameters according to their open-source code and experimental settings.

\subsection{Overall Comparison}
\label{section-comparison}

In this section, we are to verify the effectiveness of SEvo in boosting recommendation performance.
Table~\ref{table-performance} compares the overall performance and efficiency over four widely used datasets,
and Table~\ref{table-large-scale} further provides the results on two large-scale datasets with millions of nodes.

Firstly, GNN-based models seem to over-emphasize structural information but lack full exploitation of sequential information.
Their performance is only on par with GRU4Rec.
On the Tools dataset, SR-GNN and LESSR are even inferior to LightGCN, a collaborative filtering model with no access to sequential information. 
MAERec makes a slightly better attempt to combine the two by learning structural information through graph-based reconstruction tasks.
It employs a SASRec backbone for recommendation to allow for a better utilization of sequential information.
Despite the identical recommendation backbone, SASRec trained with SEvo-enhanced AdamW enjoys significantly better performance. 
Not to mention the consistent gains on other state-of-the-art methods such as BERT4Rec and STOSA,
from 2\% to 30\% according to the last two columns of Table~\ref{table-performance}.
Overall, the promising improvements from the SEvo enhancement 
suggest a different route to exploit graph structural information,
especially in conjunction with sequence modeling. 

\begin{wraptable}{r}{0.4\textwidth}
    \setlength{\tabcolsep}{3pt}
    \centering
    \caption{
      SEvo on large-scale datasets.
    }
    \label{table-large-scale}
    \scalebox{0.8}{
  \begin{tabular}{cl||c >{\columncolor{bgc}}c|c}
    \toprule
  \multicolumn{1}{l}{}         &            & \multicolumn{2}{c}{SASRec}                      &  \multirow{2}{*}{$\blacktriangle$\%}       \\
  \multicolumn{1}{l}{}         &            & \multicolumn{1}{l}{} & \multicolumn{1}{c}{\graymark{+SEvo}} & \\
    \midrule
  \multirow{5}{*}{\rotatebox{90}{Electronics}} & HR@1       & 0.0033               & \textbf{0.0063}                   & +92.5\%  \\
                              & HR@10      & 0.0208               & \textbf{0.0293}                   & +40.6\%  \\
                              & NDCG@10    & 0.0103               & \textbf{0.0159}                   & +53.9\%  \\
                              & Time/Epoch & 19.94s               & +2.22s                   &          \\
                              & Epochs     & 100                  & 150                      &          \\
    \midrule
  \multirow{5}{*}{\rotatebox{90}{Clothing}}    & HR@1       & 0.0071               & \textbf{0.0171}                   & +139.1\% \\
                              & HR@10      & 0.0360               & \textbf{0.0626}                   & +73.9\%  \\
                              & NDCG@10    & 0.0199               & \textbf{0.0377}                   & +89.1\%  \\
                              & Time/Epoch & 25.20s               & +8.27s                   &          \\
                              & Epochs     & 400                  & 300                      &          \\
    \bottomrule
  \end{tabular}
    }
\end{wraptable}

Secondly, either dynamic graph construction in SR-GNN and LESSR, or path sampling in MAERec, requires heavy computational overhead,
which directly causes the training failures on large-scale datasets like Electronics and Clothing.
Even worse, these high costs associated with SR-GNN and LESSR are inevitable during inference.
In contrast, SEvo does not alter the model inference logic at all, thereby maintaining consistent inference time.
The computational overhead required in training is also negligible compared to previous graph-enhanced models that employ GNNs as intermediate modules.
For example, SASRec with SEvo consumes only 10 minutes compared to the hours of training time required for MAERec.
When millions of nodes are encountered in Table~\ref{table-large-scale}, each epoch demands just a few more seconds.
SEvo is arguably superior to these cumbersome GNN modules in real-world applications.
Combining Table~\ref{table-performance} and Table~\ref{table-large-scale}, 
it can be inferred that the performance gain increases as the number of items increases.
This can be explained by the fact that the randomness of sampling leads to a much more inconsistent evolution when more and more nodes are encountered \cite{zhao2022revisiting}.
SEvo thus plays an increasingly important role as 
it is capable of imposing direct consistency constraints on embeddings.

Since SASRec is a pioneer in the field of sequential recommendation, 
it will serve as the default backbone for subsequent studies.

\subsection{Empirical Analysis}
\label{section-analysis}

\begin{figure*}[t]
	\centering

  \subfloat[Loss curves]{
    \includegraphics[width=0.41\textwidth]{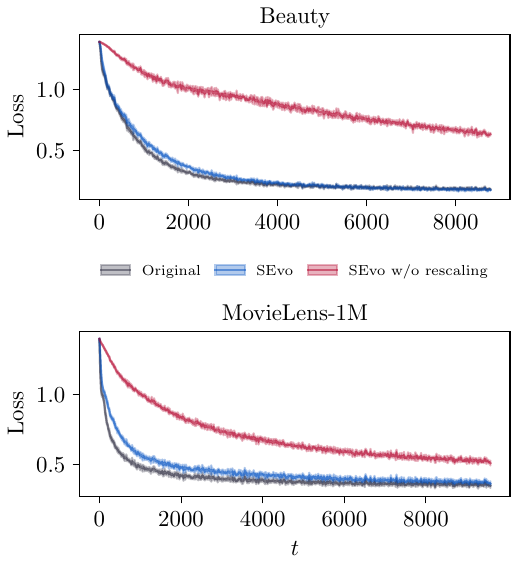}
    \label{fig-empirical-convergence}
  }
  \subfloat[Smoothness]{
    \includegraphics[width=0.56\textwidth]{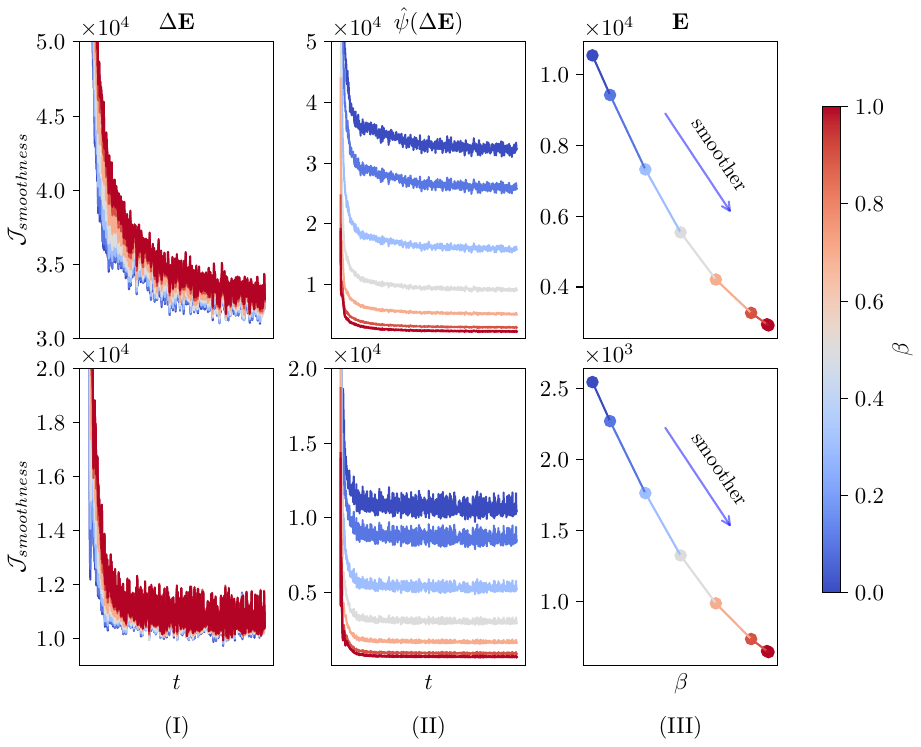}
    \label{fig-empirical-smoothness}
  }
	\caption{
      Empirical illustrations of convergence and smoothness.
      The top and bottom panels respectively depict the results for Beauty and MovieLens-1M.
      (a) SASRec enhanced by SEvo with or without rescaling.
      (b) Smoothness of (I) the original variation; (II) the smoothed variation;
      (III) the optimized embedding.
      A lower $\mathcal{J}_{smoothness}$ indicates stronger smoothness.
	}
	\label{fig-empirical}
\end{figure*}

\textbf{Convergence comparison.}
In Section \ref{section-convergence}, we theoretically verified the necessity of rescaling the original Neumann series approximation for faster convergence.
Figure \ref{fig-empirical-convergence} shows the loss curves of SASRec trained with AdamW under identical settings other than the form of SEvo.
Without rescaling, 
SASRec exhibits significantly slower convergence, consistent with the conclusion in Theorem~\ref{theorem-convergence}.
While the theoretical worst-case convergence rate of the corrected SEvo is only 1\% of the normal gradient descent when $\beta=0.99$, 
its practical performance is much better.
SASRec trained with SEvo-enhanced AdamW initially converges marginally slower and catches up in the final stage.

\textbf{Smoothness evaluation.}
Figure \ref{fig-empirical-smoothness} demonstrates the variation's smoothness throughout the evolution process 
and the eventual embedding differences from $\beta=0$ to $\beta=0.999$.
\textbf{(I)} $\rightarrow$ \textbf{(II):}
The original variations exhibit a similar degree of smoothness,
but after transformation, they are quite different-----smoother as $\beta$ increases.
\textbf{(II)} $\rightarrow$ \textbf{(III):}
Consequently, the embedding trained with a stronger smoothness constraint becomes smoother as well.
The structure-aware embedding evolution successfully makes related nodes closer in the latent space.
Although smoothness is not the sole quality measure of embedding,
combined with the analyses above, 
we can conclude that SEvo injects appropriate structural information under the default setting of $\beta=0.99$.

\begin{figure}[t]
  \centering
  \subfloat[Optimizers]{
    \includegraphics[height=0.23\textwidth]{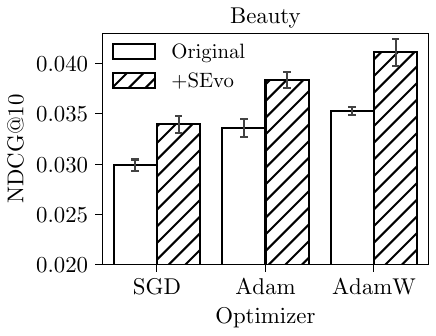}
    \label{fig-ablation-optimizers}
  }
  \subfloat[Neumann versus Iterative]{
    \includegraphics[height=0.23\textwidth]{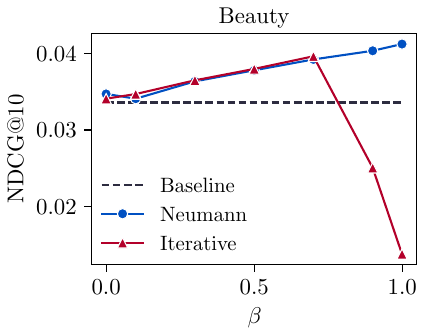}
    \label{fig-convergence-neu-mom}
  }
  \subfloat[Moment estimate correction]{
  \includegraphics[height=0.23\textwidth]{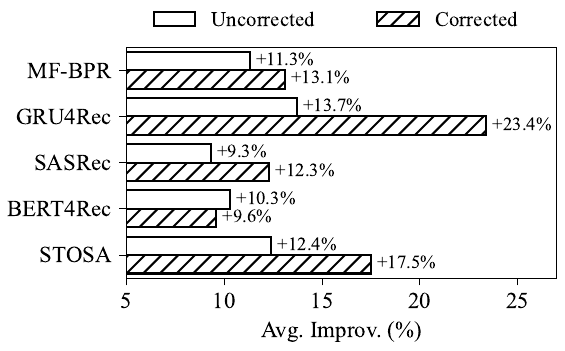}
	\label{fig-comparison-moment-correction}
  }
  \caption{
  SEvo ablation experiments.
  }
  \label{fig-ablation}
\end{figure}

\subsection{Ablation Study}
\label{section-ablation-study}

\textbf{SEvo for various optimizers.}
It is of interest to study whether SEvo can be extended to other commonly used optimizers such as SGD and Adam.
Figure \ref{fig-ablation-optimizers} compares NDCG@10 performance on Beauty and MovieLens-1M.
For a fair comparison, the hyperparameters are tuned independently.
It is evident that the performance of SGD, Adam, and AdamW improves significantly after integrating SEvo, 
with AdamW achieving the best as it does not need to smooth the weight decay.

\textbf{Neumann series approximation versus iterative approximation.}
Theorem \ref{theorem-smh-cvg} suggests that 
the Neumann series approximation is preferable to the commonly used iterative approximation 
because the latter is not always direction-aware and thus a conservative hyperparameter of $\beta$ is needed to ensure convergence.
This conclusion can also be drawn from Figure~\ref{fig-convergence-neu-mom}.
When only a little smoothness is required, their performance is comparable as both approximations differ only at the last term.
The iterative approximation however fails to ensure convergence once $\beta > 0.7$ on the Beauty dataset,
potentially resulting in a lack of smoothness.

\textbf{Moment estimate correction for AdamW.}
We compare SEvo-enhanced AdamW with or without moment estimate correction in Figure \ref{fig-comparison-moment-correction}, 
in which average relative improvements against the baseline are presented for each recommender.
Overall, the two variants of SEvo-enhanced AdamW perform comparably, significantly surpassing the baseline.
However, in some cases (\eg, GRU4Rec and STOSA), 
the moment estimate correction as suggested in Theorem \ref{theorem-adamw} is particularly useful to improve performance.
Recall that BERT4Rec is trained using the output softmax from a separate fully-connected layer that is fully updated at each step.
This may explain why the correction has little effect on BERTRec.
In conclusion, the results underscore the importance of the proposed modification in alleviating bias in moment estimates.

\subsection{Applications of SEvo Beyond Interaction Data}

We further explore the potential of applying SEvo to other types of prior knowledge.
On the one hand, the category smoothness constraint can also be fulfilled through SEvo (see Appendix~\ref{section-node-categories}),
leading to progressively stronger clustering effects as $\beta$ increases.
This provides compelling visual evidence of why SEvo is inherently structure-aware.
On the other hand, 
SEvo is arguably an efficient tool for transferring embedding knowledge (see Appendix~\ref{section-sevo-kd}).
Notice that the learning of other modules cannot be guided in the same way,
so SEvo alone is still inferior to state-of-the-art knowledge distillation methods~\cite{KD:HTD:Kang:2021,KD:DKD:Zhao:2022}.
Fortunately, SEvo and other methods can work together to further boost the recommendation performance.

\section{Related Work}

\textbf{Recommender systems} are developed to enable users to quickly and accurately find relevant items in diverse applications,
such as e-commerce \cite{zhou2018deep}, online news \cite{gong2022positive} and social media \cite{chen2022gdsrec}.
Typically, the entities involved are embedded into a latent space \cite{chen2019behavior,el2022twhin,zhang2023personalized},
and then decision models are built on top of the embedding for tasks like collaborative filtering \cite{he2020lightgcn} and context/knowledge-aware recommendation \cite{wang2019kgat,tian2023eulernet}.
Sequential recommendation \cite{shani2005mdp,li2020time} focuses on capturing users' dynamic interests from their historical interaction sequences.
Early approaches adapted recurrent neural networks (RNNs) \cite{hidasi2015session} and convolutional filters \cite{tang2018personalized} for sequence modeling.
Recently, Transformer \cite{vaswani2017attention,devlin2018bert} has become a popular architecture for sequence modeling due to its parallel efficiency and superior performance.
SASRec~\cite{kang2018self} and BERT4Rec \cite{sun2019bert4rec} use unidirectional and bidirectional self-attention, respectively.
Fan et al. \cite{fan2022sequential} proposed a novel stochastic self-attention (STOSA) to model the uncertainty of sequential behaviors.

\textbf{Graph neural networks} \cite{bruna2013spectral,convolutional2016Defferrard} are a type of neural network designed to operate on graph-structured data,
in concert with a weighted adjacency matrix to characterize the pairwise relations between nodes.
GNN equipped with this adjacency matrix can be used for message passing between nodes.
The most relevant work is the optimization framework proposed in \cite{zhou2003learning} for solving semi-supervised learning problems via a smoothness constraint.
This graph regularization approach has recently inspired a series of work \cite{chen2014signal,ma2021unified,zhu2021interpreting}.
As opposed to applying it to smooth node representations \cite{gasteiger2018predict} or labels \cite{huang2020combining},
it is employed here primarily to balance smoothness and convergence on the variation.

\textbf{Structural information} in recommendation is typically learned through GNN as well, with specific modifications made to cope with like data sparsity \cite{Pinsage2018Ying,UltraGCN2021Mao}.
LightGCN \cite{he2020lightgcn} is a pioneering collaborative filtering work on modeling user-item relations, 
which removes nonlinearities for easier training.
To further utilize sequential information, previous efforts focus on equipping sequence models with complex GNN modules,
but this inevitably increases the computational cost of training and inference, making it unappealing for practical recommendation.
For example, SR-GNN \cite{wu2019session} and LESSR~\cite{chen2020handling} need dynamically construct adjacency matrices for each batch of sequences.
Differently,
MAERec \cite{ye2023graph} proposes an adaptive data augmentation to boost a novel graph masked autoencoder,
which learns to sample less noisy paths from semantic similarity graph for subsequent reconstruction tasks.
The resulting strong self-supervision signals help the model capture more useful information.

\section{Conclusion and Future Work}

In this work, we have proposed a novel update mechanism for injecting graph structural information into embedding.
Theoretical analyses of the convergence properties motivate some necessary modifications to the proposed method.
SEvo can be seamlessly integrated into existing optimizers.
For AdamW, we recorrect the moment estimates to make it more robust.
Besides, an interesting direction for future work is extending SEvo to multiplex heterogeneous graphs \cite{cen2019representation}, 
as real-world entities often participate in various relation networks.
Furthermore, we believe that SEvo holds potential for application to dynamic graph structures and incremental updates~\cite{you2022roland}.
Two challenges may be encountered in practice: the computational overhead associated with the ongoing adjacency matrix normalization, 
and how to adaptively weaken the outdated historical information.

\bibliographystyle{plain}
\bibliography{refs}

\newpage

\appendix

\tableofcontents


\section{Proofs}
\label{section-proofs}

\subsection{Proof of Theorem~\ref{theorem-smh-cvg}}

In this part, we are to prove the structure-aware/direction-aware properties of the $L$-layer iterative approximation \cite{zhou2003learning}:
\begin{align}
  \begin{array}{ll}
  \hat{\psi}_{iter}(\Delta \emb) 
  := \hat{\psi}_{L}(\Delta \emb)
  = \underbrace{\Big\{(1 - \beta)\sum_{l=0}^{L-1} \beta^l \adj^l + \beta^L \adj^L \Big \}}_{=: \mathbf{P}'} \Delta \emb,
  \end{array}
\end{align}
and the $L$-layer Neumann series approximation \cite{stewart1998matrix}:
\begin{align}
    \hat{\psi}_{nsa}(\Delta \emb) = \underbrace{(1 - \beta) \sum_{l=0}^L \beta^l\adj^l}_{=: \mathbf{P}} \Delta \emb.
\end{align}


\begin{fact}
  \label{fact-geometric-series}
  If $L$ is odd, the geometric series $S(x)=\sum_{k=0}^L x^l$ is monotonically increasing when $x \le 0$.
\end{fact}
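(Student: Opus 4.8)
The plan is to reduce the monotonicity claim to a single sign computation via the closed form of the geometric sum. First I would observe that on the half-line $x \le 0$ we stay bounded away from the pole at $x = 1$, so the identity $S(x) = \frac{1 - x^{L+1}}{1 - x}$ is valid throughout the region of interest, where here $S(x) = 1 + x + \cdots + x^L$. Differentiating this quotient by the quotient rule and clearing the strictly positive denominator $(1-x)^2 > 0$, I find that the sign of $S'(x)$ is governed entirely by the numerator
\[
N(x) := 1 - (L+1)\,x^L + L\,x^{L+1}.
\]
It therefore suffices to show $N(x) > 0$ for every $x \le 0$.

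The key step is a term-by-term sign analysis that exploits the parity of $L$. Because $L$ is odd, $x^L \le 0$ whenever $x \le 0$, so the middle term satisfies $-(L+1)\,x^L \ge 0$; simultaneously $L+1$ is even, so $x^{L+1} \ge 0$ and hence $L\,x^{L+1} \ge 0$. Adding the constant term then yields $N(x) \ge 1 > 0$ on the entire half-line. Since $(1-x)^2 > 0$ there as well, this gives $S'(x) > 0$ for all $x \le 0$, which is precisely the asserted (strict) monotonic increase.

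I do not anticipate a genuine obstacle here; the one point requiring care is the bookkeeping of signs, namely insisting that the oddness of $L$ forces $x^L$ and $x^{L+1}$ into opposite sign regimes so that \emph{both} correction terms in $N$ are nonnegative rather than partially cancelling. A calculus-free alternative would pair the $L+1$ terms of $S$ as $(1+x) + x^2(1+x) + \cdots$ to factor $S(x) = (1+x)\sum_{k=0}^{(L-1)/2} x^{2k}$, but this exposes the sign of $S$ itself rather than of its derivative, so it does not directly deliver monotonicity and I would favour the numerator argument. It is worth recording that the hypothesis ``$L$ odd'' is essential: for even $L$ the top term of $N$ flips sign (e.g. $S(x) = 1 + x + x^2$ has $S'(-1) = -1 < 0$), and the conclusion fails for sufficiently negative $x$.
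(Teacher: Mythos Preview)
Your proof is correct and follows essentially the same route as the paper: both write $S(x)=\frac{1-x^{L+1}}{1-x}$, differentiate to obtain the numerator $Lx^{L+1}-(L+1)x^L+1$, and conclude positivity on $x\le 0$ from parity. If anything, you spell out the term-by-term sign analysis more explicitly than the paper, which simply asserts the positivity.
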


\begin{proof}

It is easy to show that
\begin{align*}
  S(x) = \frac{1 - x^{L+1}}{1 - x},
\end{align*}
and the derivative w.r.t $x \not = 1$ is
\begin{align*}
  S'(x) = \frac{Lx^{L+1} - (L + 1)x^L + 1}{(1 - x)^2}.
\end{align*}
If $L$ is odd, $S'(x)$ is positive when $x \le 0$ and in this case $S(x)$ is monotonically increasing.

\end{proof}

\begin{lemma}
  \label{appendix-lemma-eigenvalues-P}

  Given a normalized adjacency matrix $\adj \in \mathbb{R}^{n \times n}$,
  let the symmetric matrix deduced from the Neumann series approximation be
  \begin{align}
    \psd =  (1 - \beta) \sum_{l=0}^L \beta^l \adj^l.
  \end{align}
  Denoted by $\lambda_{\min}(\psd), \lambda_{\max}(\psd)$ the smallest and largest eigenvalues of $\psd$, respectively, 
  then we have $\forall \beta \in [0, 1)$
  \begin{align}
    \label{appendix-eq-eigenvalues}
    \frac{1 - \beta}{1 + \beta} (1 - \beta^L) \le \lambda_{\min}(\psd) \le \lambda_{\max}(\psd) = 1 - \beta^{L + 1}.
  \end{align}
\end{lemma}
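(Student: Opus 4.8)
## Proof Proposal for Lemma \ref{appendix-lemma-eigenvalues-P}

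The plan is to reduce everything to a one-variable analysis via the spectral mapping theorem. Since $\psd = p(\adj)$ with $p(x):=(1-\beta)\sum_{l=0}^L \beta^l x^l$ and $\adj$ is symmetric, $\psd$ is symmetric and $\sigma(\psd) = \{p(\mu): \mu\in\sigma(\adj)\}$. I would invoke two standard facts about a normalised adjacency matrix with nonnegative weights and positive degrees: $\sigma(\adj)\subseteq[-1,1]$ (because $\tilde{\mathbf{L}}=\mathbf{I}-\adj$ and $\mathbf{I}+\adj$ are both positive semidefinite, by the quadratic-form identity in Eq.~\eqref{eq-graph-regularization-term} and its analogue with a plus sign), and $1\in\sigma(\adj)$ with eigenvector $\mathbf{D}^{1/2}\mathbf{1}$, since $\adj\,\mathbf{D}^{1/2}\mathbf{1} = \mathbf{D}^{-1/2}\mathbf{A}\mathbf{1} = \mathbf{D}^{1/2}\mathbf{1}$. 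It then suffices to control $p$ on $[-1,1]$ and to evaluate $p(1)$.

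For the largest eigenvalue I would argue termwise: for $\mu\in[-1,1]$ and $l\ge 0$ one has $\beta^l\mu^l\le\beta^l$, so $p(\mu)\le(1-\beta)\sum_{l=0}^L\beta^l = 1-\beta^{L+1}$, with equality at $\mu=1$; hence $\lambda_{\max}(\psd)=1-\beta^{L+1}$. For the smallest eigenvalue I would pass to the closed form $p(x)=(1-\beta)\,\frac{1-(\beta x)^{L+1}}{1-\beta x}$ (valid since $\beta x\le\beta<1$). On $[-1,1]$ the denominator lies in $(0,1+\beta]$ and the numerator satisfies $1-(\beta x)^{L+1}\ge 1-|\beta x|^{L+1}\ge 1-\beta^{L+1}>0$; replacing the positive numerator by its lower bound and the positive denominator by its upper bound gives
\[
  p(x)\;\ge\;\frac{(1-\beta)(1-\beta^{L+1})}{1+\beta}\;\ge\;\frac{(1-\beta)(1-\beta^{L})}{1+\beta},
\]
the last inequality using $\beta^{L+1}\le\beta^L$. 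Minimising over $\sigma(\adj)$ then yields $\lambda_{\min}(\psd)\ge\frac{1-\beta}{1+\beta}(1-\beta^L)$, and combining with $\lambda_{\min}(\psd)\le\lambda_{\max}(\psd)$ proves \eqref{appendix-eq-eigenvalues}.

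The main thing to be careful about is the rational-function manipulation in the lower bound: one must check that numerator and denominator are simultaneously positive — which $\beta\in[0,1)$ and $x\ge-1$ guarantee — before substituting their extreme values, and one should accept that the resulting bound is deliberately loose. The exact minimum of $p$ over $[-1,1]$ is attained at an interior point for even $L$ and has a less tidy form, but the weaker estimate $\frac{1-\beta}{1+\beta}(1-\beta^L)$ drops out immediately and is all that the convergence arguments downstream require; in particular Fact~\ref{fact-geometric-series} is not needed here.
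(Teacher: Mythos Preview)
Your argument is correct, and for the lower bound it follows a genuinely different route from the paper. Both proofs begin with the spectral mapping $\sigma(\psd)=\{p(\mu):\mu\in\sigma(\adj)\}$ and handle $\lambda_{\max}$ identically. For $\lambda_{\min}$, however, the paper works with the polynomial form of $p$: it observes that when $L$ is even the top-degree term is nonnegative and may be dropped, reducing to an odd-degree partial sum; it then invokes Fact~\ref{fact-geometric-series} (monotonicity of the odd-length geometric series on $(-\infty,0]$) to locate the minimum at $\lambda=-1$, and finally sums the alternating series explicitly. Your approach instead passes to the closed rational form $p(x)=(1-\beta)\,\dfrac{1-(\beta x)^{L+1}}{1-\beta x}$ and bounds numerator and denominator separately, which is shorter, avoids the parity case split, makes Fact~\ref{fact-geometric-series} unnecessary, and even produces the slightly sharper intermediate estimate $\lambda_{\min}(\psd)\ge\dfrac{(1-\beta)(1-\beta^{L+1})}{1+\beta}$ before relaxing to $(1-\beta^L)$. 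The paper's route, by contrast, actually pins down the minimiser (at $\lambda=-1$ after truncation) rather than just bounding $p$ from below, which is conceptually a bit more informative even if not needed downstream.
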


\begin{proof}

It is easy to shown that the eigenvalues of $\psd$ are in the form of
\begin{align}
  \label{eq-eigenvalues-P}
  \tilde{\lambda}_i = (1 - \beta)\sum_{l=0}^L \beta^l \lambda_i^l, \: i=1,2,\ldots, n,
\end{align}
where $\lambda_1 \le \lambda_2 \cdots  \le \lambda_n$ denote the eigenvalues of $\adj$.
Recall that these eigenvalues all fall into $[-1, 1]$  and $\lambda_n = 1$ can be achieved exactly \cite{spielman2007spectral}.
Hence,
\begin{align}
  \tilde{\lambda}_n =  (1 - \beta)\sum_{l=0}^L \beta^l = 1 - \beta^{L+1}
\end{align}
is the largest eigenvalue of $\psd$.

In addition, notice that the last term $ (1 - \beta)\beta^L \lambda^L$ is non-negative when $L$ is even.
Then, we can get a lower bound no matter $L$ is odd or even:
\begin{align}
\tilde{\lambda}
&=  (1 - \beta)\sum_{l=0}^L \beta^l \lambda^l
\ge (1 - \beta) \underbrace{\sum_{l=0}^{2 \lfloor (L - 1)/2 \rfloor + 1} \beta^l \lambda^l}_{=: S(\lambda)}.
\end{align}

The minimum of $S(\lambda)$ must be achieved in $[-1, 0]$ because $S(-\lambda) \le S(\lambda)$ for any $\lambda > 0$.
In fact, in view of Fact \ref{fact-geometric-series}, we know that $S(\lambda) \ge S(-1)$.
Hence, we have
\begin{align}
\tilde{\lambda}
&\ge (1 - \beta) S(-1) = (1 - \beta) \sum_{l=0}^{2 \lfloor (L - 1)/2 \rfloor + 1} \beta^l (-1)^l \notag \\
&= (1 - \beta) \sum_{l=0}^{\lfloor (L-1) / 2 \rfloor} \Big(\beta^{2l} - \beta^{2l+1}\Big) \notag \\
&= (1 - \beta)^2 \sum_{l=0}^{\lfloor (L-1) / 2 \rfloor} \beta^{2l} 
= (1 - \beta)^2 \frac{(1 - \beta^{2 \lfloor (L-1) / 2 \rfloor + 2})}{1 - \beta^2} \notag \\
&= \frac{1 - \beta}{1 + \beta} (1 - \beta^{2 \lfloor (L-1) / 2 \rfloor + 2})
\ge \frac{1 - \beta}{1 + \beta} (1 - \beta^{L}).
\end{align}
The last inequality holds because $2\lfloor (L-1) /2 \rfloor + 2 \ge L$.
Therefore, the smallest eigenvalue of $\psd$ must be greater than
\begin{align*}
  \frac{1 - \beta}{1 + \beta} (1 - \beta^{L}).
\end{align*}

\end{proof}

\begin{proposition}
  \label{proposition-neumman}
  The Neumann series approximation is structure-aware and direction-aware for any $\beta \in [0, 1)$.
\end{proposition}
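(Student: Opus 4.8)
The plan is to reduce both properties to one-variable inequalities for the polynomial $p(x) := (1-\beta)\sum_{l=0}^{L}\beta^l x^l$ on $[-1,1]$, and then read off the conclusions from Lemma~\ref{appendix-lemma-eigenvalues-P}. The key structural fact is that $\psd = p(\adj)$ is a polynomial in $\adj$, and so is $\tilde{\mathbf{L}} = \mathbf{I} - \adj$; hence $\adj$, $\psd$, and $\tilde{\mathbf{L}}$ admit a common orthonormal eigenbasis. Writing $\adj = \mathbf{U}\bm{\Lambda}\mathbf{U}^{T}$ with $\bm{\Lambda} = \mathrm{diag}(\lambda_1,\dots,\lambda_n)$, $\lambda_i\in[-1,1]$, and setting $\mathbf{Y} := \mathbf{U}^{T}\Delta\emb$ with rows $\mathbf{y}_i^{T}$, a short trace computation (using orthogonality of $\mathbf{U}$ and unitary invariance of the Frobenius norm) gives $\langle \hat{\psi}_{nsa}(\Delta\emb),\Delta\emb\rangle = \sum_i p(\lambda_i)\|\mathbf{y}_i\|_2^2$, $\mathcal{J}_{smoothness}(\hat{\psi}_{nsa}(\Delta\emb)) = \sum_i p(\lambda_i)^2(1-\lambda_i)\|\mathbf{y}_i\|_2^2$, and $\mathcal{J}_{smoothness}(\Delta\emb) = \sum_i (1-\lambda_i)\|\mathbf{y}_i\|_2^2$. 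Both assertions then become termwise comparisons of these sums.

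For direction-awareness I would use the lower bound from Lemma~\ref{appendix-lemma-eigenvalues-P}: for $\beta\in[0,1)$ it gives $p(\lambda_i) \ge \lambda_{\min}(\psd) > 0$ (when $L=0$ this is immediate since $\psd=(1-\beta)\mathbf{I}$; when $L\ge1$ the bound $\tfrac{1-\beta}{1+\beta}(1-\beta^{L})$ is strictly positive). Hence $\langle \hat{\psi}_{nsa}(\Delta\emb),\Delta\emb\rangle \ge \lambda_{\min}(\psd)\,\|\Delta\emb\|_F^2 > 0$ whenever $\Delta\emb \ne \bm{0}$. For structure-awareness I need $p(\lambda)^2\le1$ wherever the factor $1-\lambda_i$ is positive, i.e. for $\lambda\in[-1,1)$; the term corresponding to $\lambda_i=1$ is annihilated by $\tilde{\mathbf{L}}$ and needs no control. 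Since the coefficients $(1-\beta)\beta^l$ of $p$ are nonnegative with $p(1)=1-\beta^{L+1}$, one has $|p(\lambda)|\le p(|\lambda|)\le p(1)=1-\beta^{L+1}<1$ for all $\lambda\in[-1,1]$ — exactly the upper-bound half of Lemma~\ref{appendix-lemma-eigenvalues-P}. Therefore $p(\lambda_i)^2(1-\lambda_i)\le 1-\lambda_i$ for every $i$, and summing against $\|\mathbf{y}_i\|_2^2\ge0$ yields $\mathcal{J}_{smoothness}(\hat{\psi}_{nsa}(\Delta\emb)) \le \mathcal{J}_{smoothness}(\Delta\emb)$.

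The technical content here is modest: once the simultaneous-diagonalization reduction is in place and Lemma~\ref{appendix-lemma-eigenvalues-P} is available, everything is essentially packaged. The one place I would be careful is the structure-aware direction — a naive argument might try to bound the operator norm of $\psd$ by $1$ and assert $p(\lambda)^2\le1$ everywhere, but the cleaner route (and the one that avoids any fuss at $\lambda=1$, where $p(1)=1-\beta^{L+1}$ is strictly below $1$ anyway) is the termwise comparison above, which only ever needs $1-\lambda_i\ge0$ together with $0<p(\lambda_i)\le 1-\beta^{L+1}$.
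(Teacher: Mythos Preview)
Your proposal is correct and follows essentially the same approach as the paper: both arguments reduce to simultaneous diagonalization of $\adj$, $\psd$, and $\tilde{\mathbf{L}}$, then invoke Lemma~\ref{appendix-lemma-eigenvalues-P} to get $\lambda_{\min}(\psd)>0$ for direction-awareness and $\lambda_{\max}(\psd)\le 1$ (equivalently $|p(\lambda_i)|\le 1$) for structure-awareness. Your write-up is more explicit about the eigenbasis reduction and the termwise comparison, whereas the paper compresses the structure-aware step into the single line $\langle \x, \psd^T \tilde{\mathbf{L}} \psd \x \rangle \le \langle \x, \tilde{\mathbf{L}} \x \rangle$ citing $\lambda_{\max}(\psd)\le 1$, but the substance is the same.
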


\begin{proof}
In view of Lemma \ref{appendix-lemma-eigenvalues-P}, we know 
\begin{align*}
  \lambda_{\min}(\psd) \ge \frac{1 - \beta}{1 + \beta} (1 - \beta^L) >0, \quad \forall \beta \in [0, 1),
\end{align*}
so $\psd$ is positive definite and thus $\hat{\psi}_{nsa}(\cdot)$ is direction-aware.
Also, notice that $\psd$ has the same eigenvectors as $\adj$, and so does $\tilde{\mathbf{L}}$.
Hence, it is also structure-aware:
\begin{align*}
  \langle \hat{\psi}_{nsa}(\x), \tilde{\mathbf{L}} \hat{\psi}_{nsa}(\x) \rangle
  &=\langle \hat{\psi}_{nsa}(\x), \tilde{\mathbf{L}} \hat{\psi}_{nsa}(\x) \rangle
  =\langle \psd \x, \tilde{\mathbf{L}}  \psd \x \rangle \\
  &=\langle \x, \psd^T \tilde{\mathbf{L}} \psd \x \rangle 
  \le \langle \x, \tilde{\mathbf{L}} \x \rangle.
\end{align*}
The last inequality follows from the fact $\lambda_{\max}(\psd) \le 1$.

\end{proof}


\begin{lemma}
  \label{appendix-lemma-eigenvalue-momentum}
  Given a normalized adjacency matrix $\adj \in \mathbb{R}^{n \times n}$,
  let the symmetric matrix deduced from the iterative approximation be
  \begin{align}
    \label{eq-iterative-P}
    \psd' =  (1 - \beta)\sum_{l=0}^{L-1} \beta^l \adj^l + \beta^{L} \adj^L.
  \end{align}
  We have $\lambda_{\min} (\psd') > 0, \: \forall \beta < 1/2$.
\end{lemma}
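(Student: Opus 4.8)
The goal is to show $\lambda_{\min}(\psd') > 0$ for all $\beta < 1/2$, where $\psd' = (1-\beta)\sum_{l=0}^{L-1}\beta^l \adj^l + \beta^L \adj^L$. Since $\psd'$ is a polynomial in the symmetric matrix $\adj$, it shares the eigenvectors of $\adj$, and its eigenvalues are $g(\lambda_i)$ where $\lambda_i \in [-1,1]$ are the eigenvalues of $\adj$ and $g(x) := (1-\beta)\sum_{l=0}^{L-1}\beta^l x^l + \beta^L x^L$. So the whole problem reduces to showing $g(x) > 0$ for every $x \in [-1,1]$ whenever $\beta < 1/2$. The worst case is clearly the region $x \le 0$, since for $x \in [0,1]$ every term is non-negative and $g(0) = 1-\beta > 0$; so I would focus on $x \in [-1,0]$.

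\textbf{Key steps.} First I would rewrite $g(x)$ in closed form using the geometric sum: $(1-\beta)\sum_{l=0}^{L-1}\beta^l x^l = (1-\beta)\frac{1-\beta^L x^L}{1-\beta x}$, giving $g(x) = \frac{(1-\beta)(1-\beta^L x^L) + \beta^L x^L(1-\beta x)}{1-\beta x} = \frac{1 - \beta - \beta^{L+1}x^{L+1} + \beta^L x^L - \beta^{L+1} x^{L+1} \cdot(\text{regroup})}{1-\beta x}$ — more cleanly, $g(x)(1-\beta x) = (1-\beta) + \beta^L x^L(1 - \beta x) - (1-\beta)\beta^L x^L = (1-\beta)(1 - \beta^L x^L) + \beta^L x^L - \beta^{L+1}x^{L+1}$. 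I would simplify to $g(x)(1-\beta x) = 1 - \beta + \beta^{L+1}x^L - \beta^{L+1}x^{L+1}= 1 - \beta + \beta^{L+1}x^L(1-x)$. On $x \in [-1,0]$ the denominator $1-\beta x \ge 1 > 0$, so it suffices that the numerator $N(x) := 1 - \beta + \beta^{L+1}x^L(1-x)$ is positive. The only troublesome sign is when $x^L(1-x) < 0$, i.e.\ when $x < 0$ and $L$ is odd (if $L$ is even, $x^L \ge 0$ and $1 - x \ge 1$, so $N(x) \ge 1-\beta > 0$ immediately). For $L$ odd and $x \in [-1,0)$, I bound $|x^L(1-x)| = |x|^L(1-x) \le 1 \cdot 2 = 2$ using $|x| \le 1$ and $1-x \le 2$, so $N(x) \ge 1 - \beta - 2\beta^{L+1}$. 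Since $\beta < 1/2$ gives $\beta^{L+1} \le \beta^2 < \beta/2 \cdot 1$... this crude bound needs $1 - \beta - 2\beta^{L+1} > 0$; with $\beta < 1/2$ and $L \ge 1$ we get $2\beta^{L+1} \le 2\beta^2 < \beta \le 1-\beta$ exactly when $\beta \le 1/2$, so indeed $N(x) > 0$. (One should double-check the boundary $L = 1$ and the edge $x = -1$ directly, which is quick.)

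\textbf{Main obstacle.} The slightly delicate part is the case split on the parity of $L$ and pinning down exactly why $\beta < 1/2$ is the sharp threshold rather than something weaker — the crude estimate $|x^L(1-x)| \le 2$ must be tight enough (it is, essentially attained near $x = -1$ for $L$ odd), and one wants the inequality $2\beta^{L+1} < 1-\beta$ to hold for all $L \ge 1$, which reduces to the $L = 1$ case $2\beta^2 < 1 - \beta$, i.e.\ $2\beta^2 + \beta - 1 < 0$, i.e.\ $(2\beta - 1)(\beta + 1) < 0$, i.e.\ $\beta < 1/2$ — so the threshold falls out cleanly. I would also invoke the already-established fact (used in Lemma~\ref{appendix-lemma-eigenvalues-P}) that $\lambda_n(\adj) = 1$ is attained, but note it is not needed here since we only need a lower bound on $g$ over the full interval $[-1,1]$, not its exact minimum. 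Finally I would remark that this matches Theorem~\ref{theorem-smh-cvg}: for $\beta \ge 1/2$ one can exhibit a normalized $\adj$ with an eigenvalue near $-1$ making $g(\lambda) \le 0$, so the bound is sharp, though the full "only if" direction is handled elsewhere.
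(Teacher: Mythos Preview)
Your proof is correct but takes a different route from the paper's. The paper decomposes $\psd' = \psd + \beta^L \adj^L$ with $\psd = (1-\beta)\sum_{l=0}^{L-1}\beta^l\adj^l$, invokes the already-established Lemma~\ref{appendix-lemma-eigenvalues-P} to lower-bound $\lambda_{\min}(\psd)$, and then treats the cases $L\le 1$, $L$ even, and $L\ge 3$ odd separately, finishing the last case with the somewhat ad~hoc numerical check $\beta+\beta^2+\beta^4 < 13/16$ for $\beta<1/2$. You instead work directly with the scalar polynomial $g(x)$ evaluated at the eigenvalues of $\adj$, use the geometric-series identity to obtain the closed form $g(x)(1-\beta x)=1-\beta+\beta^{L+1}x^L(1-x)$, and reduce everything to the single inequality $1-\beta-2\beta^{L+1}>0$, which for $L\ge 1$ follows from the clean factorization $1-\beta-2\beta^2=(1-2\beta)(1+\beta)>0$. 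Your argument is self-contained (no appeal to Lemma~\ref{appendix-lemma-eigenvalues-P}), handles all $L\ge 1$ uniformly after the parity split, and makes the sharpness of the threshold $\beta<1/2$ transparent; the paper's version is more modular but with rougher arithmetic and more case analysis. One small cleanup: your intermediate algebra display is garbled (the ``regroup'' line), though the final identity you land on is correct --- in a write-up just state and verify $g(x)(1-\beta x)=1-\beta+\beta^{L+1}x^L(1-x)$ directly.
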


\begin{proof}

This conclusion is trivial for the case of $L \le 1$. 
Let us assume that $L \ge 2$.
Firstly, rewrite Eq. \eqref{eq-iterative-P} as
\begin{align}
  \psd' = \psd +  \beta^L \adj^L,
\end{align}
where $\psd :=  (1 - \beta)\sum_{l=0}^{L-1} \beta^l \adj^l$.
$\psd$ is positive definite in view of Lemma \ref{appendix-lemma-eigenvalues-P} 
and $\beta^L \adj^L$ is positive semidefinite when $L$ is even.
Therefore, only the case of $L \ge 3$ needs to be proved.
For a vector $\x$, we have
\begin{align}
  \x^T \psd' \x 
  &= \x^T \psd \x + \beta^L \x^T \adj^L \x \notag
  \ge \lambda_{\min}(\psd) \|\x\|_2^2 + \beta^L \lambda_{\min}(\adj)^L \|\x\|_2^2  \\
  &\ge \Big (\frac{1 - \beta}{1 + \beta}(1 - \beta^{L - 1}) + \beta^L \lambda_{\min}(\adj)^L \Big) \|\x\|_2^2  \\
  &\ge \Big (\frac{1 - \beta}{1 + \beta}(1 - \beta^{L - 1}) - \beta^L\Big) \|\x\|_2^2 \notag
  =\frac{1 - \beta - \beta^{L-1} - \beta^{L+1}}{1 + \beta} \|\x\|_2^2 \notag \\
  &\ge \frac{1 - \beta - \beta^{2} - \beta^{4}}{1 + \beta} \|\x\|_2^2.
\end{align}
The first two inequalities follow from Lemma \ref{appendix-lemma-eigenvalues-P}.
The last inequality holds by noting the fact that, for $L \ge 3$ and $\beta < 1/2$,
\begin{align*}
  \beta + \beta^{L-1} + \beta^{L+1} \le \beta + \beta^2 + \beta^4 < 13/16.
\end{align*}
Therefore,
\begin{align}
  \lambda_{\min}(\psd') = \min_{\x} \frac{\x^T \psd' \x}{\|\x\|^2} \ge \frac{1 - \beta - \beta^2 - \beta^4}{1 + \beta} > 0.
\end{align}

\end{proof}

\begin{proposition}
  \label{proposition-iterative}
  The iterative approximation is direction-aware for all possible normalized adjacency matrices and $L \ge 0$,
  if and only if $\beta < 1/2$.
\end{proposition}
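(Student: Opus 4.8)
The plan is to turn the claim into a spectral statement about the symmetric matrix $\psd'$ and then read sufficiency off Lemma~\ref{appendix-lemma-eigenvalue-momentum} while settling necessity with one tiny explicit graph. First I would record the reduction: since $\adj$ is symmetric, so is $\psd' = (1-\beta)\sum_{l=0}^{L-1}\beta^l\adj^l + \beta^L\adj^L$, hence $\hat{\psi}_{iter}$ is direction-aware — that is, $\langle \psd'\x, \x\rangle > 0$ for every $\x \neq \bm{0}$ — precisely when $\psd'$ is positive definite, i.e. $\lambda_{\min}(\psd') > 0$. Thus the proposition is equivalent to: $\lambda_{\min}(\psd') > 0$ for all normalized adjacency matrices $\adj$ and all $L \ge 0$ if and only if $\beta < 1/2$.

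For the sufficiency half ($\beta < 1/2$ forces direction-awareness for every $\adj$ and every $L \ge 0$) I would simply invoke Lemma~\ref{appendix-lemma-eigenvalue-momentum}, which already gives the uniform bound $\lambda_{\min}(\psd') \ge \tfrac{1 - \beta - \beta^2 - \beta^4}{1 + \beta} > 0$ for all $\adj$ once $\beta < 1/2$ and $L \ge 2$; the leftover cases are immediate, since $\psd' = \mathbf{I}$ when $L = 0$ and $\psd' = (1-\beta)\mathbf{I} + \beta\adj$ has least eigenvalue at least $(1-\beta) - \beta = 1 - 2\beta > 0$ when $L = 1$.

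For necessity I would argue by contraposition: assuming $\beta \ge 1/2$, produce one pair $(\adj, L)$ that is not direction-aware. The eigenvalues of $\psd'$ are $(1-\beta)\sum_{l=0}^{L-1}\beta^l\lambda^l + \beta^L\lambda^L$ as $\lambda$ ranges over the spectrum $[-1,1]$ of $\adj$, so the move is to force $\lambda = -1$, which is attained by any bipartite graph; the cheapest witness is the two-node single-edge graph, whose normalized adjacency matrix is $\adj = \bigl(\begin{smallmatrix} 0 & 1 \\ 1 & 0 \end{smallmatrix}\bigr)$, with $\x = (1,-1)^\top$ an eigenvector for eigenvalue $-1$. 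Taking $L = 1$ gives $\psd'\x = \big((1-\beta) - \beta\big)\x = (1-2\beta)\x$, so $\langle \psd'\x, \x\rangle = (1-2\beta)\|\x\|_2^2 \le 0$, i.e. $\hat{\psi}_{iter}$ fails to be direction-aware; hence direction-awareness for all $\adj$ and all $L$ forces $\beta < 1/2$, closing the equivalence.

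The real content is the opening reduction together with two observations — that the extremal admissible $\adj$ is bipartite with spectral value $-1$, and that $L = 1$ already serves as a witness. I would expect the temptation (and potential trap) to be hunting for a counterexample that works uniformly in $L$: the value of the $\lambda = -1$ eigenvalue, which equals $\tfrac{1 - \beta - 2\beta^{L+1}}{1+\beta}$ for odd $L$, can remain positive for some $\beta > 1/2$ once $L \ge 3$, so the choice $L = 1$ is essential rather than incidental. All the quantitative heavy lifting — the uniform positivity bound over every admissible $\adj$ — is already supplied by Lemma~\ref{appendix-lemma-eigenvalue-momentum}, so nothing further needs to be ground out.
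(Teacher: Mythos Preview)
Your proposal is correct and follows essentially the same route as the paper: invoke Lemma~\ref{appendix-lemma-eigenvalue-momentum} for sufficiency, and for necessity exhibit the two-node bipartite graph $\adj = \bigl(\begin{smallmatrix}0&1\\1&0\end{smallmatrix}\bigr)$ with $L=1$, whose eigenvalue $1-2\beta$ is non-positive once $\beta \ge 1/2$. One small quibble: the explicit bound $\tfrac{1-\beta-\beta^2-\beta^4}{1+\beta}$ you quote from the lemma's proof is derived there only for $L\ge 3$ (the case $L=2$ is handled by a separate positive-semidefiniteness observation), so it is cleaner to cite the lemma's conclusion $\lambda_{\min}(\psd')>0$ directly rather than that particular inequality --- but this does not affect your argument.
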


\begin{proof}

If $\beta < 1/2$, we have $\lambda_{\min}(\psd') > 0$ according to Lemma \ref{appendix-lemma-eigenvalue-momentum}, and thus $\hat{\psi}$ is direction-aware.
Conversely, if $\beta \ge 1/2$, we can construct an adjacency matrix $\adj$ such that $\lambda_{\min}(\psd') \le 0$ for some $L$.
Let us assume that $L=1$ and
\begin{align}
  \label{appendix-eq-construction}
  \adj :=
  \left [
  \begin{array}{cc}
    0 & 1 \\
    1 & 0
  \end{array}
  \right ].
\end{align}
In this case, we have
\begin{align}
  \psd' = (1 - \beta) \mathbf{I} + \beta \adj =
  \left [
  \begin{array}{cc}
    1 - \beta & \beta \\
    \beta & 1 - \beta
  \end{array}
  \right ],
\end{align}
whose eigenvalues are $1$ and $1 - 2\beta$.
The latter is non-positive for any $\beta \ge 1/2$.
\end{proof}

\begin{remark}
  The construction of $\adj$ in Eq. \eqref{appendix-eq-construction} is not unique. In fact, any bipartite graph can be used as a counterexample.
\end{remark}


\begin{corollary}[The proof of Theorem \ref{theorem-smh-cvg}]
  The iterative approximation is direction-aware for all possible normalized adjacency matrices and $L \ge 0$,
  if and only if $\beta < 1/2$.
  In contrast,
  the Neumann series approximation
  \begin{align}
    \hat{\psi}_{nsa}(\Delta \emb) = (1 - \beta) \sum_{l=0}^L \beta^l\adj^l \Delta \emb,
  \end{align}
  is structure-aware and direction-aware for any $\beta \in [0, 1)$.
\end{corollary}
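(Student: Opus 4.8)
The plan is to reduce both halves of the statement to spectral inequalities for the two symmetric polynomial matrices $\psd = (1-\beta)\sum_{l=0}^{L}\beta^l\adj^l$ (Neumann) and $\psd' = (1-\beta)\sum_{l=0}^{L-1}\beta^l\adj^l + \beta^L\adj^L$ (iterative). Since $\adj$ is symmetric with eigenvalues $\lambda_1 \le \cdots \le \lambda_n$ in $[-1,1]$ and $\lambda_n = 1$, both matrices are simultaneously diagonalizable with $\adj$, hence with $\tilde{\mathbf{L}} = \mathbf{I} - \adj$. In that common eigenbasis, direction-awareness of $\x \mapsto \M\x$ is exactly positive definiteness of $\M$, and structure-awareness is exactly $\M^T\tilde{\mathbf{L}}\M \preceq \tilde{\mathbf{L}}$, which holds as soon as $0 \preceq \M$ and $\lambda_{\max}(\M) \le 1$. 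So the whole proof comes down to locating $\lambda_{\min}$ and $\lambda_{\max}$ of $\psd$ and $\psd'$.

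For the Neumann matrix I would first note its eigenvalues are $\tilde\lambda_i = (1-\beta)\sum_{l=0}^{L}\beta^l\lambda_i^l$; since the coefficients $(1-\beta)\beta^l$ are non-negative and $|\lambda_i^l|\le 1$, the largest is attained at $\lambda_i = 1$ and equals $1 - \beta^{L+1} \in (0,1)$, which settles structure-awareness once $\psd$ is shown to be PSD. For that I would study the scalar polynomial $g(\lambda) = (1-\beta)\sum_{l=0}^{L}\beta^l\lambda^l$ on $[-1,1]$: its minimizer lies in $[-1,0]$ (replacing $\lambda$ by $-\lambda$ cannot increase the leading odd-length partial sum termwise, and the possible tail term is non-negative when $L$ is even), and on $[-1,0]$ an odd-length geometric series is monotone by Fact~\ref{fact-geometric-series}, so the minimum is $g(-1)$; telescoping the alternating sum gives $g(-1) \ge \frac{1-\beta}{1+\beta}(1-\beta^{L}) > 0$. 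Hence $\psd$ is positive definite for every $\beta \in [0,1)$, so $\hat\psi_{nsa}$ is both direction-aware and structure-aware.

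For the iterative matrix I would write $\psd' = \psd_{L-1} + \beta^L\adj^L$ with $\psd_{L-1} = (1-\beta)\sum_{l=0}^{L-1}\beta^l\adj^l$. When $L = 0$, $\psd' = \mathbf{I}$; when $L$ is even, $\beta^L\adj^L \succeq 0$ and $\psd_{L-1}$ is PD by the previous paragraph, so positive definiteness is free. The only delicate regime is odd $L \ge 1$, where Weyl's inequality plus the Neumann bound give $\lambda_{\min}(\psd') \ge \frac{1-\beta}{1+\beta}(1-\beta^{L-1}) - \beta^L$, which is strictly positive precisely when $\beta + \beta^{L-1} + \beta^{L+1} < 1$; this holds for every odd $L \ge 1$ once $\beta < 1/2$, the tightest case being $L = 1$, where $\psd' = (1-\beta)\mathbf{I} + \beta\adj$ has $\lambda_{\min} = 1 - 2\beta$. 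For the converse, a single counterexample suffices: the bipartite two-node graph $\adj = \left(\begin{smallmatrix}0 & 1\\ 1 & 0\end{smallmatrix}\right)$ with $L = 1$ yields $\psd' = (1-\beta)\mathbf{I} + \beta\adj$ whose eigenvalue $1 - 2\beta$ is non-positive for $\beta \ge 1/2$, killing direction-awareness on the corresponding eigenvector. Assembling the two parts gives the corollary.

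The main obstacle is the eigenvalue bookkeeping behind the lower bounds: extracting an $L$-uniform strictly positive lower bound on $\lambda_{\min}(\psd)$, and then transferring it to $\lambda_{\min}(\psd')$, forces one to split on the parity of $L$ and to pin down the minimizer of the scalar polynomial $g$ on $[-1,0]$ via monotonicity of the odd geometric series. Once that scalar estimate is in hand, the matrix-level arguments (positive definiteness, the $\M^T\tilde{\mathbf{L}}\M \preceq \tilde{\mathbf{L}}$ step) are routine, and the converse needs no intricate construction since only one bipartite instance must fail.
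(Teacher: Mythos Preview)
Your approach is essentially the paper's: reduce both claims to spectral bounds on the polynomial matrices $\psd$ and $\psd'$, establish $\lambda_{\min}(\psd)\ge\frac{1-\beta}{1+\beta}(1-\beta^L)$ and $\lambda_{\max}(\psd)=1-\beta^{L+1}$ via the scalar polynomial on $[-1,1]$ and Fact~\ref{fact-geometric-series}, then handle $\psd'$ by splitting on the parity of $L$ and using the same bipartite two-node counterexample for the converse.

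One genuine slip in your iterative half: the Weyl estimate $\lambda_{\min}(\psd')\ge\frac{1-\beta}{1+\beta}(1-\beta^{L-1})-\beta^L$ is \emph{vacuous} at $L=1$, since it reduces to $-\beta$, and the corresponding condition $\beta+\beta^{L-1}+\beta^{L+1}<1$ becomes $1+\beta+\beta^2<1$, which never holds. So your claim that ``this holds for every odd $L\ge1$ once $\beta<1/2$'' is false as written; the Weyl route covers only odd $L\ge3$ (where, as in the paper, $\beta+\beta^{L-1}+\beta^{L+1}\le\beta+\beta^2+\beta^4<13/16$ for $\beta<1/2$). You do supply the correct direct computation $\lambda_{\min}(\psd')=1-2\beta$ for $L=1$, so the ingredients are all there---just make the case split explicit: $L=0$ trivial, $L=1$ by direct calculation, $L$ even by $\beta^L\adj^L\succeq0$, odd $L\ge3$ by the Weyl bound. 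That is exactly how the paper's Lemma~\ref{appendix-lemma-eigenvalue-momentum} is organized.
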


\begin{proof}
  This is a corollary of Proposition \ref{proposition-neumman} and Proposition \ref{proposition-iterative}.
\end{proof}

\begin{proposition}
The rescaled Neumann series approximation $\hat{\psi}$ is structure-aware and direction-aware, and converges to the optimal solution as $L$ increases.
\end{proposition}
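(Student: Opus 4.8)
The plan is to reduce every assertion to the spectral behaviour of the symmetric matrix $\hat{\psd} := \frac{1-\beta}{1-\beta^{L+1}}\sum_{l=0}^{L}\beta^l\adj^l = \frac{1}{1-\beta^{L+1}}\psd$, since $\hat{\psi}(\Delta\emb)=\hat{\psd}\,\Delta\emb$ acts column-wise. By Lemma~\ref{appendix-lemma-eigenvalues-P} the matrix $\psd$ shares its eigenvectors with $\adj$ (hence with $\tilde{\mathbf{L}}=\mathbf{I}-\adj$) and has eigenvalues in $\big[\tfrac{1-\beta}{1+\beta}(1-\beta^{L}),\,1-\beta^{L+1}\big]$. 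Dividing by $1-\beta^{L+1}$, the eigenvalues $\hat{\lambda}_i$ of $\hat{\psd}$ therefore satisfy $0<\tfrac{(1-\beta)(1-\beta^{L})}{(1+\beta)(1-\beta^{L+1})}\le\hat{\lambda}_i\le 1$ for every $\beta\in[0,1)$. This single fact drives the proof.

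Direction-awareness is then immediate: $\big\langle\hat{\psi}(\Delta\emb),\Delta\emb\big\rangle=\big\langle\Delta\emb,\hat{\psd}\,\Delta\emb\big\rangle>0$ for all $\Delta\emb\neq\bm{0}$ because $\hat{\psd}$ is positive definite ($\hat{\lambda}_{\min}>0$). For structure-awareness I would reuse the eigen-decomposition argument from the proof of Proposition~\ref{proposition-neumman}: expand a column $\x=\sum_i c_i\mathbf{u}_i$ in the common orthonormal eigenbasis with $\adj\mathbf{u}_i=\lambda_i\mathbf{u}_i$, so that
\begin{align*}
\langle\hat{\psd}\x,\tilde{\mathbf{L}}\hat{\psd}\x\rangle=\sum_i c_i^2\,\hat{\lambda}_i^2\,(1-\lambda_i)\le\sum_i c_i^2\,(1-\lambda_i)=\langle\x,\tilde{\mathbf{L}}\x\rangle,
\end{align*}
where the inequality uses $\hat{\lambda}_i^2\le\hat{\lambda}_i\le 1$ together with $1-\lambda_i\ge 0$ (the eigenvalues of $\adj$ lie in $[-1,1]$). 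Summing over columns gives $\mathcal{J}_{smoothness}(\hat{\psi}(\Delta\emb))\le\mathcal{J}_{smoothness}(\Delta\emb)$.

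For convergence to the minimizer of Eq.~\eqref{eq-graph-signal-denoising}, I would first identify it in closed form: setting the gradient of the objective to zero gives $\big((1-\beta)\mathbf{I}+\beta\tilde{\mathbf{L}}\big)\Delta=(1-\beta)\Delta\emb$, i.e. $(\mathbf{I}-\beta\adj)\Delta=(1-\beta)\Delta\emb$, and since $\|\beta\adj\|\le\beta<1$ the matrix $\mathbf{I}-\beta\adj$ is invertible with Neumann expansion $(\mathbf{I}-\beta\adj)^{-1}=\sum_{l\ge 0}\beta^l\adj^l$; hence $\psi^*(\Delta\emb)=(1-\beta)(\mathbf{I}-\beta\adj)^{-1}\Delta\emb$. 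Comparing $\hat{\psd}$ with $(1-\beta)(\mathbf{I}-\beta\adj)^{-1}$ eigenvalue-by-eigenvalue (they are both functions of $\adj$, so diagonal in the same basis), the eigenvalue associated with $\lambda_i$ differs by $\frac{1-\beta}{1-\beta\lambda_i}\cdot\frac{\beta^{L+1}-(\beta\lambda_i)^{L+1}}{1-\beta^{L+1}}$, whose absolute value is at most $\frac{2\beta^{L+1}}{1-\beta^{L+1}}$ since $\frac{1-\beta}{|1-\beta\lambda_i|}\le 1$ and $|\beta^{L+1}-(\beta\lambda_i)^{L+1}|\le 2\beta^{L+1}$ for $\lambda_i\in[-1,1]$. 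Hence the spectral-norm gap between $\hat{\psd}$ and $(1-\beta)(\mathbf{I}-\beta\adj)^{-1}$ is $\le\frac{2\beta^{L+1}}{1-\beta^{L+1}}\to 0$ as $L\to\infty$, so $\hat{\psi}(\Delta\emb)\to\psi^*(\Delta\emb)$ for every fixed $\Delta\emb$ (the same Lemma~\ref{appendix-lemma-eigenvalues-P} also gives back $\|\hat{\psi}_{nsa}(\Delta\emb)\|_F\le(1-\beta^{L+1})\|\Delta\emb\|_F$ used in Section~\ref{section-convergence}).

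Most of this is a direct re-run of Lemma~\ref{appendix-lemma-eigenvalues-P} and of the argument already given for $\hat{\psi}_{nsa}$; the only slightly delicate point is the convergence estimate, where the comparison must hold uniformly over the entire spectrum $[-1,1]$ of $\adj$ — in particular one must note that $1-\beta\lambda_i$ stays bounded away from $0$ (it is $\ge 1-\beta$) and that the sign oscillation of $(\beta\lambda_i)^{L+1}$ near $\lambda_i=-1$ does not spoil the $O(\beta^{L+1})$ bound. No new ideas beyond those already in this section are needed.
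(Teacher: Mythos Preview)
Your proposal is correct and follows essentially the same route as the paper: show that $\hat{\psd}=\frac{1}{1-\beta^{L+1}}\psd$ is positive definite with largest eigenvalue $\le 1$ (giving direction- and structure-awareness exactly as in Proposition~\ref{proposition-neumman}), and then establish convergence to $\psi^*$. The only minor difference is in the convergence argument: the paper simply factors $\hat{\psi}=\frac{1}{1-\beta^{L+1}}\hat{\psi}_{nsa}$ and passes to the limit in each factor, whereas you compute the eigenvalue gap explicitly and obtain the quantitative bound $\|\hat{\psd}-(1-\beta)(\mathbf{I}-\beta\adj)^{-1}\|_2\le\frac{2\beta^{L+1}}{1-\beta^{L+1}}$. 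Your version is slightly more informative (it gives a rate) but not conceptually different.
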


\begin{proof}
The convergence to the optimal solution is obvious by noting that
\begin{align}
  \begin{array}{ll}
  \lim_{L \rightarrow +\infty} \hat{\psi}(\Delta \emb) 
  &=\lim_{L \rightarrow +\infty} \frac{1}{1 - \beta^{L+1}} \cdot \lim_{L \rightarrow +\infty} \hat{\psi}_{nsa}(\Delta \emb) \\
  &= 1 \cdot \psi^*(\Delta \emb) = \psi^*(\Delta \emb).
  \end{array}
\end{align}
Similar to Proposition \ref{proposition-neumman}, it can be proved that $\frac{1}{1 - \beta^{L+1}} \mathbf{P}$ is positive definite with a largest eigenvalue $\le 1$.
Therefore, the rescaled transformation is also structure-aware and direction-aware.

\end{proof}

\subsection{Proof of Theorem~\ref{theorem-convergence}}

Before delving into the proof of the convergence, we would like to claim that the lemmas below are well known and can be found in most textbooks \cite{nesterov1998introductory,boyd2004convex} on convex optimization.
For the sake of completeness, we provide here these proofs.
Hereinafter, we use $\|\mathbf{X}\|_2$ to denote the spectral norm which returns the largest singular value of the matrix $\mathbf{X}$.

\begin{lemma}
  \label{lemma-upper-bound}
  For a twice continuously differentiable function $f: \mathbb{R}^n \rightarrow \mathbb{R}$ with $\|\nabla^2 f(\x)\|_2 \le C, \: \forall \x \in \mathbf{dom} (f)$, we have
  \begin{align}
    f(\y) \le f(\x) + \langle \nabla f(\x), \y - \x \rangle  + \frac{C}{2} \| \y - \x \|_2^2.
  \end{align}
\end{lemma}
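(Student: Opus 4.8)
The plan is to reduce the claim to a one-dimensional statement along the line segment joining $\x$ and $\y$, and then apply Taylor's theorem, controlling the second-order remainder by the uniform Hessian bound. First I would introduce the auxiliary function $g(t) := f(\x + t(\y - \x))$ for $t \in [0,1]$. Since $f$ is twice continuously differentiable on $\mathbf{dom}(f)$ (which for the losses of interest is all of $\mathbb{R}^n$, so the segment certainly lies in the domain), $g$ is $C^2$, with $g'(t) = \langle \nabla f(\x + t(\y-\x)), \y-\x \rangle$ and $g''(t) = (\y-\x)^T \nabla^2 f(\x + t(\y-\x)) (\y-\x)$.

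Next I would invoke Taylor's theorem with integral remainder for $g$ about $t=0$:
\begin{align*}
  f(\y) - f(\x) - \langle \nabla f(\x), \y - \x\rangle = g(1) - g(0) - g'(0) = \int_0^1 (1 - t)\, g''(t)\, dt.
\end{align*}
The key step is then to bound the integrand: for the symmetric matrix $\nabla^2 f(\x + t(\y-\x))$ we have
\begin{align*}
  (\y-\x)^T \nabla^2 f(\x+t(\y-\x))(\y-\x) \le \big\| \nabla^2 f(\x+t(\y-\x)) \big\|_2 \, \|\y-\x\|_2^2 \le C \|\y-\x\|_2^2,
\end{align*}
using that the spectral norm of a symmetric matrix dominates its Rayleigh quotient, together with the hypothesis $\|\nabla^2 f(\cdot)\|_2 \le C$. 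Substituting this bound into the integral and using $\int_0^1 (1-t)\, dt = \tfrac12$ yields the claimed quadratic upper bound. (Alternatively, one could use the Lagrange form of Taylor's remainder, $f(\y) = f(\x) + \langle \nabla f(\x), \y-\x\rangle + \tfrac12 (\y-\x)^T \nabla^2 f(\xi) (\y-\x)$ for some point $\xi$ on the segment, and bound the quadratic term in exactly the same way.)

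There is no substantial obstacle here; this is the classical descent lemma and the argument is entirely routine. The only two points deserving a line of care are the containment of the segment $\{\x + t(\y-\x) : t\in[0,1]\}$ in $\mathbf{dom}(f)$, so that $g$ is well-defined and $C^2$, and the elementary inequality $\mathbf{v}^T \mathbf{M} \mathbf{v} \le \|\mathbf{M}\|_2 \|\mathbf{v}\|_2^2$ for symmetric $\mathbf{M}$; with these in hand the remainder is a short computation.
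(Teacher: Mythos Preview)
Your proposal is correct and matches the paper's approach: the paper applies Taylor's theorem in the Lagrange (mean-value) form, obtaining a point $\mathbf{z} = \tau\x + (1-\tau)\y$ on the segment and bounding $(\y-\x)^T\nabla^2 f(\mathbf{z})(\y-\x) \le C\|\y-\x\|_2^2$ exactly as you describe in your parenthetical alternative. Your primary write-up via the integral remainder is an equally standard variant of the same argument, so there is no substantive difference.
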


\begin{proof}

For a Taylor expansion of $f(\x)$, there exists a $\mathbf{z} = \tau \x + (1 - \tau) \y$ for some $\tau \in [0, 1]$ such that
\begin{align*}
  f(\mathbf{y}) 
  &= f(\x) + \langle \nabla f(\x), (\y - \x) \rangle + \frac{(\y - \x)^T \nabla^2 f(\mathbf{z}) (\y - \x)}{2} \\
  &\le f(\x) + \langle \nabla f(\x)^T (\y - \x) \rangle + \frac{C}{2} \|\y - \x\|_2^2.
\end{align*}

\end{proof}

\begin{lemma}
  \label{lemma-norm-bound}
  For a positive definite matrix $\psd$,
  let $\|\x\|_{\psd} := (\x^T \psd \x)^{1/2}$ be the quadratic norm induced from $\psd$.
  If the eigenvalues of $\psd$ fall into $[a, b]$, then
  \begin{align}
    \|\psd \x \|_2^2 \le b \|\x\|_{\psd}^2, \quad a \|\x\|_2^2 \le \|\x\|_{\psd}^2.
  \end{align}
\end{lemma}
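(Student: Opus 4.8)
\textbf{Proof plan for Lemma~\ref{lemma-norm-bound}.}
The plan is to work entirely in the eigenbasis of $\psd$. Since $\psd$ is symmetric positive definite, I would invoke the spectral theorem to write $\psd = \mathbf{Q} \mathbf{\Lambda} \mathbf{Q}^T$ with $\mathbf{Q}$ orthogonal and $\mathbf{\Lambda} = \mathrm{diag}(\lambda_1, \ldots, \lambda_n)$, where each $\lambda_i \in [a, b]$ by hypothesis. Setting $\y = \mathbf{Q}^T \x$, orthogonality gives $\|\x\|_2^2 = \|\y\|_2^2 = \sum_i y_i^2$, while $\|\x\|_{\psd}^2 = \x^T \psd \x = \y^T \mathbf{\Lambda} \y = \sum_i \lambda_i y_i^2$ and $\|\psd \x\|_2^2 = \x^T \psd^2 \x = \y^T \mathbf{\Lambda}^2 \y = \sum_i \lambda_i^2 y_i^2$. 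Both claimed inequalities then reduce to scalar comparisons term by term.

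For the first inequality, each summand satisfies $\lambda_i^2 y_i^2 = \lambda_i \cdot (\lambda_i y_i^2) \le b \cdot (\lambda_i y_i^2)$ because $\lambda_i \le b$ and the factor $\lambda_i y_i^2$ is nonnegative (here positive definiteness ensures $\lambda_i > 0$, so no sign issue arises). Summing over $i$ yields $\|\psd \x\|_2^2 = \sum_i \lambda_i^2 y_i^2 \le b \sum_i \lambda_i y_i^2 = b \|\x\|_{\psd}^2$. For the second inequality, $\lambda_i \ge a$ gives $\lambda_i y_i^2 \ge a y_i^2$ for each $i$, and summing gives $\|\x\|_{\psd}^2 = \sum_i \lambda_i y_i^2 \ge a \sum_i y_i^2 = a \|\x\|_2^2$. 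This completes the argument.

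There is essentially no serious obstacle here; the only thing to be careful about is the direction of the inequalities and the nonnegativity of the weights $y_i^2$, which is what lets the per-coordinate bounds survive summation. The lemma is a standard fact about quadratic norms induced by positive definite matrices, and the diagonalization reduces it to monotonicity of $t \mapsto t$ and $t \mapsto t^2$ on $[a,b]$. One could also phrase the first bound via $\|\psd \x\|_2^2 = \|\psd^{1/2} (\psd^{1/2} \x)\|_2^2 \le \|\psd^{1/2}\|_2^2 \, \|\psd^{1/2}\x\|_2^2 = b \|\x\|_{\psd}^2$, which avoids writing out coordinates, but the eigenbasis computation is the most transparent and I would present that.
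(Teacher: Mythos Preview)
Your proof is correct. The paper takes a slightly different packaging: it works with the matrix square root and operator-norm submultiplicativity, writing $\|\psd \x\|_2 = \|\psd^{1/2}\psd^{1/2}\x\|_2 \le \|\psd^{1/2}\|_2\,\|\x\|_{\psd} \le \sqrt{b}\,\|\x\|_{\psd}$ for the first bound and $\|\x\|_2 = \|\psd^{-1/2}\psd^{1/2}\x\|_2 \le \|\psd^{-1/2}\|_2\,\|\x\|_{\psd} \le \frac{1}{\sqrt{a}}\|\x\|_{\psd}$ for the second. You instead diagonalize and reduce everything to coordinate-wise scalar inequalities. Both routes rest on the same spectral information; yours is more explicit and self-contained, while the paper's is terser and leans on the standard facts $\|\psd^{1/2}\|_2 = \sqrt{b}$ and $\|\psd^{-1/2}\|_2 = 1/\sqrt{a}$. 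You even sketch the paper's argument for the first inequality as an alternative, so there is no substantive divergence.
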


\begin{proof}

Firstly,
\begin{align*}
  \|\psd \x\|_2 = \|\psd^{1/2} \psd^{1/2} \x\|_2 \le \|\psd^{1/2}\|_2 \|\x\|_{\psd} \le \sqrt{b} \|\x\|_{\psd}.
\end{align*}
Secondly,
\begin{align*}
  \|\x\|_2 = \|\psd^{-1/2} \psd^{1/2} \x\|_2  \le \|\psd^{-1/2} \|_2 \|\x \|_{\psd} \le \frac{1}{\sqrt{a}} \|\x\|_{\psd}.
\end{align*}

\end{proof}

\begin{theorem}
  \label{appendix-theorem-convergence}
  Let $f: \mathbb{R}^{n} \times \mathbb{R}^{m} \rightarrow \mathbb{R}$ be a twice continuously differentiable function bounded below,
  and its Hessian matrix satisfies 
  \begin{align}
  \|\nabla^2 f(\x, \y)\|_2 \le C, \: \forall (\x,\y) \in \mathbf{dom}(f)
  \end{align}
  for some constant $C$.
  The following gradient descent scheme is used to train $\x, \y$:
  \begin{align}
    \label{eq-theorem-update-formula}
    \x_{t+1} \leftarrow \x_t - \eta \psd \nabla_{\x} f(\x_t, \y_t), \quad
    \y_{t+1} \leftarrow \y_t - \eta' \nabla_{\y} f(\x_t, \y_t),
  \end{align}
  where $\mathbf{P}$ is deduced from the $L$-layer Neumann series approximation.
  Then, after $T$ updates, we have, 
  \begin{align}
    \min_{t \le T} \|\nabla f(\x, \y)\|_2^2 \le \frac{2C}{\gamma (T + 1)} \epsilon,
  \end{align}
  where $\epsilon = f(\x, \y) - f(\x^*, \y^*)$ and
  \begin{align*}
    \gamma = 
    \left \{
      \begin{array}{ll}
        \frac{1 - \beta}{1 + \beta} (1  - \beta^L) (1 + \beta^{L+1}),  & \text{ if } \eta = \eta' = \frac{1}{C} \\
        \frac{1 - \beta}{1 + \beta} \frac{1  - \beta^L}{1 - \beta^{L+1}},  & \text{otherwise}
      \end{array}
    \right ..
  \end{align*}

\end{theorem}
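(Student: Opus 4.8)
\textbf{Proof proposal for Theorem~\ref{appendix-theorem-convergence}.}

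The plan is to run the standard descent-lemma argument, but carefully tracking how the preconditioner $\psd$ interacts with the gradient step on the $\x$-block. First I would apply Lemma~\ref{lemma-upper-bound} to the joint function $f$ at the point $(\x_{t+1}, \y_{t+1})$ obtained from the update~\eqref{eq-theorem-update-formula}, getting
\begin{align*}
  f(\x_{t+1}, \y_{t+1}) \le f(\x_t, \y_t) - \eta \langle \nabla_{\x} f, \psd \nabla_{\x} f \rangle - \eta' \|\nabla_{\y} f\|_2^2 + \frac{C}{2}\big( \eta^2 \|\psd \nabla_{\x} f\|_2^2 + \eta'^2 \|\nabla_{\y} f\|_2^2 \big),
\end{align*}
where all gradients are evaluated at $(\x_t, \y_t)$. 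The cross term is controlled because $\psd$ is positive definite, so $\langle \nabla_{\x} f, \psd \nabla_{\x} f\rangle = \|\nabla_{\x} f\|_{\psd}^2$, and by Lemma~\ref{lemma-norm-bound} with $[a,b]$ the eigenvalue bracket of $\psd$ from Lemma~\ref{appendix-lemma-eigenvalues-P} (so $a = \frac{1-\beta}{1+\beta}(1-\beta^L)$ and $b = 1 - \beta^{L+1}$), we have $\|\psd \nabla_{\x} f\|_2^2 \le b \|\nabla_{\x} f\|_{\psd}^2$.

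Next I would substitute these bounds and choose the step sizes to make the per-step decrease a clean multiple of $\|\nabla f\|_2^2 = \|\nabla_{\x} f\|_2^2 + \|\nabla_{\y} f\|_2^2$. In the case $\eta = \eta' = 1/C$, the $\x$-contribution becomes $(\eta - \frac{C}{2}\eta^2 b)\|\nabla_{\x} f\|_{\psd}^2 = \frac{1}{C}(1 - \frac{b}{2})\|\nabla_{\x} f\|_{\psd}^2 \ge \frac{a}{C}(1-\frac{b}{2})\|\nabla_{\x} f\|_2^2$, using $\|\nabla_{\x} f\|_{\psd}^2 \ge a\|\nabla_{\x} f\|_2^2$; and the $\y$-contribution is $\frac{1}{2C}\|\nabla_{\y} f\|_2^2$. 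Taking $\gamma$ to be the smaller of the two coefficients $a(1-\frac{b}{2})$ and $\frac12$ — and one checks $a(1-b/2) = \frac{1-\beta}{1+\beta}(1-\beta^L)\cdot\frac{1+\beta^{L+1}}{2} \le \frac12$ — gives the stated $\gamma$. In the modified-step case $\eta = \frac{1}{bC} = \frac{1}{(1-\beta^{L+1})C}$, the $\x$-coefficient improves to $(\eta - \frac{C}{2}\eta^2 b)a = \frac{a}{2bC} = \frac{1}{2C}\cdot\frac{1-\beta}{1+\beta}\cdot\frac{1-\beta^L}{1-\beta^{L+1}}$, which is the second branch of $\gamma$ (again bounded by $\frac{1}{2C}$, so it dominates over the $\y$-term). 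Either way we obtain $f(\x_{t+1},\y_{t+1}) \le f(\x_t,\y_t) - \frac{\gamma}{C}\|\nabla f(\x_t,\y_t)\|_2^2$.

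Finally I would telescope this inequality from $t = 0$ to $T$: summing and using that $f$ is bounded below by $f(\x^*,\y^*)$ yields $\frac{\gamma}{C}\sum_{t=0}^{T} \|\nabla f(\x_t,\y_t)\|_2^2 \le f(\x_0,\y_0) - f(\x^*,\y^*) = \epsilon$, and bounding the sum below by $(T+1)\min_{t\le T}\|\nabla f\|_2^2$ gives $\min_{t\le T}\|\nabla f\|_2^2 \le \frac{C\epsilon}{\gamma(T+1)}$. The factor of $2$ in the statement comes from whichever of the two coefficient comparisons is done with room to spare. The main obstacle I anticipate is purely bookkeeping: getting the algebraic simplification $\frac{a}{C}(1-\frac{b}{2}) = \frac{1-\beta}{1+\beta}(1-\beta^L)(1+\beta^{L+1})\cdot\frac{1}{2C}$ right and verifying the inequalities that let one branch of $\gamma$ absorb the $\y$-block term, rather than any conceptual difficulty — the structure is the textbook nonconvex gradient-descent rate, just with a preconditioned block whose eigenvalue window is supplied by Lemma~\ref{appendix-lemma-eigenvalues-P}.
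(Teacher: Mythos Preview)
The proposal is correct and follows the same descent-lemma-plus-telescoping strategy as the paper; the only cosmetic difference is that the paper packages both updates into a single block-diagonal preconditioner $\tilde\psd = \mathrm{diag}(\eta\psd,\,\eta'\mathbf{I})$ and works with its extremal eigenvalues, whereas you keep the $\x$- and $\y$-blocks separate and take the smaller per-step coefficient at the end. Your factor-of-$2$ bookkeeping is slightly tangled --- the quantity $a(1-b/2)$ you compute is $\gamma/2$, not $\gamma$, so the per-step decrease is $\tfrac{\gamma}{2C}\|\nabla f\|_2^2$ and the telescoping then directly yields the $2C$ numerator --- but you already flag this and the argument is sound.
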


\begin{proof}

The update formula \eqref{eq-theorem-update-formula} can be unified into
\begin{align*}
  \z_{t+1}  := \left [
  \begin{array}{c}
    \x_{t+1} \\
    \y_{t+1}
  \end{array}
  \right ]
  =
  \left [
  \begin{array}{c}
    \x_t \\
    \y_t
  \end{array}
  \right ]
  -
  \underbrace{
  \left [
  \begin{array}{cc}
    \eta \psd & 0 \\
    0 & \eta' \mathbf{I}
  \end{array}
  \right ]
  }_{=: \tilde{\psd}}
  \nabla f(\x_t, \y_t).
\end{align*}
It is easy to show that
\begin{align*}
  \begin{array}{ll}
  &a: = \min(\frac{1 - \beta}{1 + \beta}(1 - \beta^L) \eta, \eta') \le \lambda_{\min}(\tilde{\psd})  \\
  &\le \lambda_{\max}(\tilde{\psd}) = \max((1 - \beta^{L+1})\eta, \eta') =: b.
  \end{array}
\end{align*}
Specifically,
\begin{align*}
  a = 
  \left \{
    \begin{array}{ll}
      \frac{1 - \beta}{1 + \beta} (1  - \beta^L) \frac{1}{C},  & \text{ if } \eta = \eta' = \frac{1}{C} \\
      \frac{1 - \beta}{1 + \beta} \frac{1  - \beta^L}{1 - \beta^{L+1}} \frac{1}{C},  & \text{ otherwise} 
    \end{array}
  \right ., \\
  b = 
  \left \{
    \begin{array}{ll}
      (1 - \beta^{L+1})\frac{1}{C},  & \text{ if } \eta = \eta' = \frac{1}{C} \\
      \frac{1}{C},  & \text{ otherwise} 
    \end{array}
  \right ..
\end{align*}

In view of Lemma \ref{lemma-upper-bound} and Lemma \ref{lemma-norm-bound}, we have
\begin{align}
  f(\z_{t+1})
  &\le f(\z_t) - \langle \nabla f(\z_t), \tilde{\psd} \nabla f(\z_t) \rangle + \frac{C}{2} \|\tilde{\psd} \nabla f(\z_t)\|_2^2 \notag \\
  &= f(\z_t) - \|\nabla f(\z_t)\|_{\tilde{\psd}}^2 + \frac{C}{2} \|\tilde{\psd} \nabla f(\z_t)\|_2^2 \notag \\
  &\le f(\z_t) - \|\nabla f(\z_t)\|_{\tilde{\psd}}^2 + \frac{bC}{2} \|\nabla f(\z_t)\|_{\tilde{\psd}}^2 \\
  &= f(\z_t) - (1 - \frac{bC}{2}) \|\nabla f(\z_t)\|_{\tilde{\psd}}^2 \notag \\
  &\le f(\z_t) - a(1 - \frac{bC}{2}) \|\nabla f(\z_t)\|_{2}^2.
\end{align}
Denoted by 
\begin{align}
  \begin{array}{ll}
  \gamma &:= aC (2 - bC)  \\
  &= 
  \left \{
    \begin{array}{ll}
      \frac{1 - \beta}{1 + \beta} (1  - \beta^L) (1 + \beta^{L+1}),  & \text{ if } \eta = \eta' = \frac{1}{C} \\
      \frac{1 - \beta}{1 + \beta} \frac{1  - \beta^L}{1 - \beta^{L+1}},  & \text{ otherwise} 
    \end{array}
  \right .,
  \end{array}
\end{align}
we have
\begin{align*}
  &\min_{0\le t \le T} \|\nabla f(\x_t, \y_t)\|_2^2
  \le \frac{1}{T + 1} \sum_{t=0}^T \|\nabla f(\x_t, \y_t)\|_2^2 \\
  \le& \frac{1}{T + 1} \sum_{t=0}^T \frac{2C}{\gamma} (f(\x_{t}, \y_{t}) - f(\x_{t+1}, \y_{t+1})) \\
  =& \frac{2C}{\gamma (T + 1)}(f(\x_{0}, \y_{0}) - f(\x_{T+1}, \y_{T+1})) \\
  \le& \frac{2C}{\gamma (T + 1)} \epsilon.
\end{align*}

\end{proof}

\begin{theorem}[The proof of Theorem \ref{theorem-convergence}]
  If $\eta = \eta' = 1 / C$, after $T$ updates, we have
  \begin{align}
    \min_{t \le T} \|\nabla \mathcal{L}(\emb_t, \bm{\theta}_t)\|_2^2 = \mathcal{O}\big(C / ((1 - \beta)^2 T) \big).
  \end{align}
  If we adopt a modified learning rate for embedding:
  \begin{align}
    \eta = \frac{1}{(1 - \beta^{L+1})C},
  \end{align}
  the convergence rate could be improved to $\mathcal{O}\big( C / ((1 - \beta) T)\big)$.
\end{theorem}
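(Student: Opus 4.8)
The plan is to obtain this statement as a direct specialization of Theorem~\ref{appendix-theorem-convergence}, whose bound $\min_{t\le T}\|\nabla f(\x,\y)\|_2^2\le \frac{2C}{\gamma(T+1)}\epsilon$ already contains all the analytic content; what remains is to check that the hypotheses transfer and to lower-bound $\gamma$ uniformly over $\beta\in[0,1)$ in each of the two learning-rate regimes. First I would set $f=\mathcal{L}$ and identify the matrix $\psd=(1-\beta)\sum_{l=0}^{L}\beta^l\adj^l$ with the linear operator applied by $\hat\psi_{nsa}$. The standing assumption that $\mathcal{L}$ is twice continuously differentiable with $C$-Lipschitz gradient is precisely $\|\nabla^2\mathcal{L}\|_2\le C$, and a recommendation loss such as BPR is bounded below, so the hypotheses of Theorem~\ref{appendix-theorem-convergence} are met. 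One bookkeeping point: there $\x$ lives in a vector space whereas here $\emb\in\mathbb{R}^{n\times d}$; after vectorization the update $\emb\mapsto\emb-\eta\psd\nabla_\emb\mathcal{L}$ becomes multiplication by $\psd\otimes\mathbf{I}_d$, which has the same spectrum as $\psd$ (each eigenvalue repeated $d$ times), so Lemma~\ref{appendix-lemma-eigenvalues-P} and hence the whole argument apply verbatim. Throughout, $L\ge 1$ is treated as a fixed constant and may be absorbed into the $\mathcal{O}(\cdot)$.

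For $\eta=\eta'=1/C$, Theorem~\ref{appendix-theorem-convergence} gives $\gamma=\frac{1-\beta}{1+\beta}(1-\beta^L)(1+\beta^{L+1})$. Since $\beta\in[0,1)$ and $L\ge 1$ we have $\frac{1}{1+\beta}\ge\frac12$, $1+\beta^{L+1}\ge 1$, and $1-\beta^L\ge 1-\beta$ (because $\beta^L\le\beta$), hence $\gamma\ge\frac12(1-\beta)^2$. Substituting into the theorem yields $\min_{t\le T}\|\nabla\mathcal{L}(\emb_t,\bm{\theta}_t)\|_2^2\le \frac{4C\epsilon}{(1-\beta)^2(T+1)}=\mathcal{O}\!\big(C/((1-\beta)^2T)\big)$.

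For the modified step size $\eta=\frac{1}{(1-\beta^{L+1})C}$ with $\eta'=1/C$, the ``otherwise'' branch gives $\gamma=\frac{1-\beta}{1+\beta}\cdot\frac{1-\beta^L}{1-\beta^{L+1}}$. Writing $\frac{1-\beta^L}{1-\beta^{L+1}}=\frac{\sum_{l=0}^{L-1}\beta^l}{\beta^L+\sum_{l=0}^{L-1}\beta^l}$ and using $\sum_{l=0}^{L-1}\beta^l\ge 1$ together with $\beta^L\le 1$, this ratio is $\ge\frac{1}{1+\beta^L}\ge\frac12$; combined with $\frac{1}{1+\beta}\ge\frac12$ we get $\gamma\ge\frac{1-\beta}{4}$, so $\min_{t\le T}\|\nabla\mathcal{L}(\emb_t,\bm{\theta}_t)\|_2^2\le \frac{8C\epsilon}{(1-\beta)(T+1)}=\mathcal{O}\!\big(C/((1-\beta)T)\big)$, which is the claimed improvement by one factor of $(1-\beta)$.

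There is essentially no deep obstacle here: the substantive estimate is Theorem~\ref{appendix-theorem-convergence}, which is already in hand. The only things to be careful about are (i) that the lower bounds on $\gamma$ hold for \emph{every} $\beta\in[0,1)$ rather than merely asymptotically as $\beta\to 1$, which is why I use the crude but uniform inequalities above instead of the sharper $1-\beta^L\sim L(1-\beta)$, and (ii) the matrix-to-vector reduction via the Kronecker product noted in the first paragraph. A closing remark could observe that the gap between the two rates is genuine and not an artifact of loose bounds: as $\beta\to 1$ the first $\gamma$ is $\Theta\!\big(L(1-\beta)^2\big)$ while the second is $\Theta(1-\beta)$, so rescaling the learning rate exactly offsets the scaling effect $\|\hat\psi_{nsa}(\Delta\emb)\|_F\le(1-\beta^{L+1})\|\Delta\emb\|_F$ discussed in Section~\ref{section-convergence}.
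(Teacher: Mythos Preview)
Your proposal is correct and follows the same overall route as the paper: both invoke Theorem~\ref{appendix-theorem-convergence} and then lower-bound $\gamma$ to extract the stated rates. The only substantive difference is in how $\gamma$ is bounded: the paper computes the limits $\lim_{\beta\to 1}\frac{1/\gamma}{1/(1-\beta)^2}=\frac{1}{L}$ and $\lim_{\beta\to 1}\frac{1/\gamma}{1/(1-\beta)}=\frac{2(L+1)}{L}$ and reads off the $\mathcal{O}(\cdot)$ order from these, whereas you derive explicit uniform inequalities $\gamma\ge\frac12(1-\beta)^2$ and $\gamma\ge\frac{1-\beta}{4}$ valid for every $\beta\in[0,1)$. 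Your approach is arguably cleaner for an $\mathcal{O}(\cdot)$ claim (the limit argument, strictly speaking, gives only the asymptotic as $\beta\to 1$ and implicitly relies on continuity to upgrade to a uniform bound), and you additionally spell out the matrix-to-vector reduction via $\psd\otimes\mathbf{I}_d$, which the paper leaves implicit.
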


\begin{proof}

This is true for $L=0$, since in this case the update mechanism becomes a normal gradient descent regardless of $\eta=1/C$ or $\eta = \frac{1}{(1 - \beta) C}$.
Let us prove a general case for $L \ge 1$ next.

According to Theorem \ref{appendix-theorem-convergence}, we have
  \begin{align*}
    \min_{t \le T} \|\nabla \mathcal{L}(\emb_t, \bm{\theta}_t)\|_2^2 \le \frac{2C}{\gamma (T + 1)} \epsilon,
  \end{align*}
  where $\epsilon = \mathcal{L}(\emb, \bm{\theta}) - \mathcal{L}(\emb^*, \bm{\theta}^*)$ and
  \begin{align*}
    \gamma = 
    \left \{
      \begin{array}{ll}
        \frac{1 - \beta}{1 + \beta} (1  - \beta^L) (1 + \beta^{L+1}),  & \text{ if } \eta = \eta' = \frac{1}{C} \\
        \frac{1 - \beta}{1 + \beta} \frac{1  - \beta^L}{1 - \beta^{L+1}},  & \text{otherwise}
      \end{array}
    \right ..
  \end{align*}

Notice that, for $\eta = \eta' = 1 / C$,
\begin{align}
  \begin{array}{ll}
  \lim_{\beta \rightarrow 1} \frac{1 / \gamma}{1 / (1 - \beta)^2} 
  &= \lim_{\beta \rightarrow 1} \frac{\frac{1}{\frac{1 - \beta}{1 + \beta} (1 - \beta^L) (1 + \beta^{L+1})}}{\frac{1}{(1 - \beta)^2}} \\
  &= \lim_{\beta \rightarrow 1} \frac{1 - \beta}{1 - \beta^L} = \frac{1}{L},
  \end{array}
\end{align}
and for $\eta = \frac{1}{(1 - \beta^{L+1})C}, \eta' = \frac{1}{C}$,
\begin{align}
  \begin{array}{ll}
 \lim_{\beta \rightarrow 1} \frac{1 / \gamma}{1 / (1 - \beta)} 
 &= \lim_{\beta \rightarrow 1} \frac{\frac{1}{\frac{1 - \beta}{1 + \beta} \frac{(1 - \beta^L)}{(1 - \beta^{L+1})}}}{\frac{1}{1 - \beta}}\\
 &= 2 \cdot \lim_{\beta \rightarrow 1} \frac{1 - \beta^{L+1}}{1 - \beta^L} =  \frac{2(L + 1)}{L}.
  \end{array}
\end{align}

Therefore,

\begin{align*}
  \frac{1}{\gamma} = 
  \left \{
    \begin{array}{ll}
      \mathcal{O}(1 / (1 - \beta)^2),  & \text{ if } \eta = \eta' = \frac{1}{C} \\
      \mathcal{O}(1 / (1 - \beta)),  & \text{ if } \eta = \frac{1}{(1 - \beta^{K+1})C}, \eta' = \frac{1}{C}
    \end{array}
  \right ..
\end{align*}
The remainder of the proof is straightforward.

\end{proof}

\subsection{Proofs of Proposition~\ref{proposition-adam-biased} and Theorem~\ref{theorem-adamw}}
\label{section-proof-loce-enhanced-Adamw}

Adam(W) \cite{kingma2014adam} uses the bias-corrected moment estimates for updating because they are unbiased when the actual moments are stationary throughout the training.
Below, Lemma~\ref{lemma-adam-unbiased} formally elaborates on this, and Theorem~\ref{theorem-complete-adamw} extends Theorem~\ref{theorem-adamw} with a proof of unbiasedness.

\begin{lemma}[\cite{kingma2014adam}]
  \label{lemma-adam-unbiased}
  Denoted by $\hat{\mom}_t =  \mom_t / (1 - \beta_1^t)$ and $\hat{\vom}_t =  \vom_t / (1 - \beta_2^t)$
  the bias-corrected estimates, if the first and second moments are stationary, \ie,
  \begin{align*}
    \mathbb{E}[\mathbf{g}_t] = \mathbb{E}[\mathbf{g}], \quad
    \mathbb{E}[\mathbf{g}_t^2] = \mathbb{E}[\mathbf{g}^2], \quad \forall t = 1,2,\ldots,
  \end{align*}
  then these bias-corrected estimates are unbiased:
  \begin{align*}
    \mathbb{E}[\hat{\mom}_t] = \mathbb{E}[\mathbf{g}], \quad
    \mathbb{E}[\hat{\vom}_t] = \mathbb{E}[\mathbf{g}^2], \quad \forall t=1,2,\ldots.
  \end{align*}
\end{lemma}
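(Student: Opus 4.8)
The plan is to unroll the linear recursions defining the moment estimates, take expectations term by term, and recognize the resulting coefficient as a finite geometric series that coincides exactly with the bias-correction denominator. The whole argument is essentially the original Adam bias-correction computation, specialized to the notation used here.

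First I would fix the standard zero initialization $\mom_0 = \vom_0 = \bm{0}$ and expand the recursion $\mom_t = \beta_1 \mom_{t-1} + (1-\beta_1)\mathbf{g}_{t-1}$ into closed form by a one-line induction on $t$:
\begin{align*}
  \mom_t = (1 - \beta_1) \sum_{k=1}^{t} \beta_1^{\,t-k}\, \mathbf{g}_{k-1}.
\end{align*}
The analogous identity holds for $\vom_t$ with $\beta_1$ replaced by $\beta_2$ and $\mathbf{g}_{k-1}$ by $\mathbf{g}_{k-1}^2$. Then I would take expectations and invoke stationarity: since $\mathbb{E}[\mathbf{g}_{k-1}] = \mathbb{E}[\mathbf{g}]$ for every $k$, linearity of expectation gives
\begin{align*}
  \mathbb{E}[\mom_t] = (1 - \beta_1)\Big(\sum_{k=1}^{t}\beta_1^{\,t-k}\Big)\,\mathbb{E}[\mathbf{g}] = (1 - \beta_1)\,\frac{1 - \beta_1^{t}}{1 - \beta_1}\,\mathbb{E}[\mathbf{g}] = (1 - \beta_1^{t})\,\mathbb{E}[\mathbf{g}].
\end{align*}
Dividing by $1 - \beta_1^{t}$ yields $\mathbb{E}[\hat{\mom}_t] = \mathbb{E}[\mathbf{g}]$; applying the identical chain of equalities to $\vom_t$ gives $\mathbb{E}[\hat{\vom}_t] = \mathbb{E}[\mathbf{g}^2]$, completing the claim.

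There is no real obstacle here — the computation is a routine geometric-series manipulation. The only points that warrant a moment of care are bookkeeping the index shift in the recursion as written (the update uses $\mathbf{g}_{t-1}$ rather than $\mathbf{g}_t$), which merely relabels the summation and leaves the coefficient sum $\sum_{k=1}^{t}\beta_1^{t-k} = (1-\beta_1^t)/(1-\beta_1)$ unchanged, and making the zero-initialization explicit, since otherwise a transient term $\beta_1^{t}\mom_0$ would survive and the estimate would only be asymptotically unbiased. This lemma is then exactly the tool needed to justify the correction factors $1/(1-\beta_1^{t-1})$ and $1/(1-\beta_2^{t-1})$ appearing in Theorem~\ref{theorem-adamw}.
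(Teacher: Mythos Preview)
Your proof is correct and is precisely the standard bias-correction argument from the original Adam paper. The present paper does not actually supply its own proof of this lemma; it merely cites \cite{kingma2014adam} and states the result, so your unrolling of the recursion, application of stationarity, and geometric-series summation is exactly what the cited reference does and what the paper implicitly relies on.
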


\begin{proposition}
  If a node is no longer sampled in subsequent $p$ batches after step $t$, we have
  \begin{align}
    \Delta \mathbf{e}_{t+p-1} = \kappa \cdot \frac{\beta_1^{p}}{\sqrt{\beta_2^p}} \Delta \mathbf{e}_{t - 1},
  \end{align}
  where the coefficient of $\kappa$ is mainly determined by $t$.
\end{proposition}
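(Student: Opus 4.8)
The plan is to exploit the fact that, during an idle period, both AdamW moment recursions degenerate to plain geometric decay, and then to form the componentwise ratio of the two update vectors. First I would pin down the indexing: by the stated convention $\Delta \mathbf{e}_{s-1} = \hat{\mom}_s / \sqrt{\hat{\vom}_s}$ with $\hat{\mom}_s = \mom_s/(1-\beta_1^s)$ and $\hat{\vom}_s = \vom_s/(1-\beta_2^s)$, so that $\Delta \mathbf{e}_{t-1}$ is assembled from $\mom_t,\vom_t$ while $\Delta \mathbf{e}_{t+p-1}$ is assembled from $\mom_{t+p},\vom_{t+p}$. The hypothesis that the node is not sampled in the $p$ batches after step $t$ translates to $\mathbf{g}_t = \cdots = \mathbf{g}_{t+p-1} = \bm{0}$, so for each such step the updates $\mom_{s+1} = \beta_1\mom_s + (1-\beta_1)\mathbf{g}_s$ and $\vom_{s+1} = \beta_2\vom_s + (1-\beta_2)\mathbf{g}_s^2$ reduce to $\mom_{s+1} = \beta_1\mom_s$ and $\vom_{s+1} = \beta_2\vom_s$.

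Next I would run a short induction on the number of idle steps to get $\mom_{t+p} = \beta_1^{p}\mom_t$ and $\vom_{t+p} = \beta_2^{p}\vom_t$, substitute these into the bias-corrected update, and divide componentwise by $\Delta \mathbf{e}_{t-1}$. The $\mom_t$ and $\sqrt{\vom_t}$ factors cancel (legitimately, since the division is entrywise and $\vom_t$ has strictly positive entries once a nonzero gradient has been seen), and what survives from the two pairs of bias-correction denominators is precisely $\Delta \mathbf{e}_{t+p-1} = \kappa\cdot\frac{\beta_1^{p}}{\sqrt{\beta_2^{p}}}\,\Delta \mathbf{e}_{t-1}$ with $\kappa = \frac{1-\beta_1^{t}}{1-\beta_1^{t+p}}\sqrt{\frac{1-\beta_2^{t+p}}{1-\beta_2^{t}}}$, which is the claimed identity.

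It then remains to make sense of the informal phrase ``the coefficient $\kappa$ is mainly determined by $t$''. From the closed form, every exponent that appears on $\beta_1,\beta_2\in(0,1)$ is at least $t$, so each of $\beta_1^{t},\beta_1^{t+p},\beta_2^{t},\beta_2^{t+p}$ tends to $0$ as $t\to\infty$ regardless of $p$; consequently $\kappa\to 1$, and more quantitatively $|\kappa-1| = \mathcal{O}(\beta_1^{t}+\beta_2^{t})$ with a constant independent of $p$, so the deviation of $\kappa$ from $1$ is governed by $t$ alone.

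I do not anticipate a genuine obstacle here: the statement is essentially a bookkeeping computation. The only points that need care are keeping the off-by-one between $\Delta \mathbf{e}_{s-1}$ and $(\mom_s,\vom_s)$ straight and giving the ``mainly determined by $t$'' claim the precise meaning above. This identity is also what motivates Theorem~\ref{theorem-adamw}: replacing the zero-gradient recursion forces the overall multiplicative factor between $\Delta \mathbf{e}_{t+p-1}$ and $\Delta \mathbf{e}_{t-1}$ to be exactly $1$.
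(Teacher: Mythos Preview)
Your proposal is correct and follows essentially the same approach as the paper: reduce the moment recursions to geometric decay during the idle period, substitute into the bias-corrected update, and identify $\kappa = \frac{(1-\beta_1^t)\sqrt{1-\beta_2^{t+p}}}{(1-\beta_1^{t+p})\sqrt{1-\beta_2^t}}$. Your quantitative bound $|\kappa-1|=\mathcal{O}(\beta_1^t+\beta_2^t)$ is slightly more than the paper states (it only records $\lim_{t\to\infty}\kappa=1$), but the argument is the same.
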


\begin{proof}

In this case, the iterative formula becomes
\begin{align*}
  \mom_{t+j} = \beta_1^j \mom_t + \bm{0}, \quad
  \vom_{t+j} = \beta_2^j \vom_t + \bm{0}, \quad
  \forall j = 0,1,\ldots, p.
\end{align*}
Therefore,
\begin{align*}
  \Delta \mathbf{e}_{t + p - 1} 
  &= \frac{\hat{\mom}_{t+p}}{\sqrt{\hat{\vom}_{t+p}}}
  =\frac{\sqrt{1 - \beta_2^{t + p}}}{1 - \beta_1^{t+p}} \frac{\mom_{t+p}}{\sqrt{\vom_{t+p}}} \\
  &=\frac{\sqrt{1 - \beta_2^{t + p}}}{1 - \beta_1^{t+p}} \frac{\beta_1^p}{\sqrt{\beta_2^p}} \frac{\mom_{t}}{\sqrt{\vom_{t}}}
  =\frac{\beta_1^p \sqrt{1 - \beta_2^{t + p}}}{\sqrt{\beta_2^p}(1 - \beta_1^{t+p})} \frac{\mom_{t}}{\sqrt{\vom_{t}}} \\
  &=\frac{\beta_1^{p}(1 - \beta_1^t)\sqrt{1 - \beta_2^{t + p}}}{\sqrt{\beta_2^p}(1 - \beta_1^{t + p}) \sqrt{1 - \beta_2^t}} \frac{\hat{\mom}_t}{\sqrt{\hat{\vom}_t}} \\
  &=\frac{\beta_1^{p}}{\sqrt{\beta_2^p}} \cdot \underbrace{\frac{(1 - \beta_1^t)\sqrt{1 - \beta_2^{t + p}}}{(1 - \beta_1^{t + p}) \sqrt{1 - \beta_2^t}}}_{=: \kappa} \Delta \mathbf{e}_{t-1}.
\end{align*}
It is easy to show that
\begin{align}
\lim_{t \rightarrow +\infty} \kappa(t, p) =  1, \quad \forall \beta_1, \beta_2 \in [0, 1).
\end{align}

\end{proof}

\begin{theorem}[The proof of Theorem \ref{theorem-adamw}]
  \label{theorem-complete-adamw}
Under the same assumptions as in Lemma \ref{lemma-adam-unbiased} and Proposition~\ref{proposition-adam-biased},
the bias-corrected estimates are unbiased and $\Delta \mathbf{e}_{t+p-1} =  \Delta \mathbf{e}_{t - 1}$
if the estimates are updated in the following manner when $\mathbf{g}_t = \bm{0}$,
  \begin{align}
    \mom_t = \beta_1 \mom_{t-1} + (1 - \beta_1) \frac{1}{1 - \beta_1^{t-1}} \mom_{t-1}, \quad
    \vom_t = \beta_2 \vom_{t-1} + (1 - \beta_2) \frac{1}{1 - \beta_2^{t-1}} \vom_{t-1}.
  \end{align}
\end{theorem}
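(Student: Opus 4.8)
The plan is to prove the two assertions separately: first the algebraic identity $\Delta \mathbf{e}_{t+p-1} = \Delta \mathbf{e}_{t-1}$, and then the unbiasedness of $\hat{\mom}_t,\hat{\vom}_t$ under the corrected update rule.

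For the identity I would follow the template of the proof of Proposition~\ref{proposition-adam-biased}, but running the corrected recursion through the idle window. When the node receives no gradient at steps $t,\dots,t+p-1$, the new rule reads $\mom_{t+j} = \bigl(\beta_1 + (1-\beta_1)/(1-\beta_1^{t+j-1})\bigr)\mom_{t+j-1}$, and a one-line simplification of the bracket collapses it to the clean factor $(1-\beta_1^{t+j})/(1-\beta_1^{t+j-1})$. Multiplying these factors for $j=1,\dots,p$ telescopes to $\mom_{t+p} = \frac{1-\beta_1^{t+p}}{1-\beta_1^{t}}\,\mom_t$, hence the bias-corrected estimate satisfies $\hat{\mom}_{t+p} = \mom_{t+p}/(1-\beta_1^{t+p}) = \mom_t/(1-\beta_1^{t}) = \hat{\mom}_t$; the identical computation with $\beta_2$ gives $\hat{\vom}_{t+p}=\hat{\vom}_t$. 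Since $\Delta\mathbf{e}_{t+p-1} = \hat{\mom}_{t+p}/\sqrt{\hat{\vom}_{t+p}}$ and $\Delta\mathbf{e}_{t-1} = \hat{\mom}_{t}/\sqrt{\hat{\vom}_{t}}$, the claim follows. Conceptually, the factor $1/(1-\beta_1^{t-1})$ is engineered precisely to cancel the auto-attenuation $\kappa\,\beta_1^p/\sqrt{\beta_2^p}$ isolated in Proposition~\ref{proposition-adam-biased}.

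For the unbiasedness I would reduce to Lemma~\ref{lemma-adam-unbiased} via an induction on $t$ establishing $\mathbb{E}[\mom_t] = (1-\beta_1^{t})\,\mathbb{E}[\mathbf{g}]$ (and analogously for $\vom_t$), whether or not step $t$ is idle. On an active step this is exactly the argument behind Lemma~\ref{lemma-adam-unbiased}, using the stationarity hypothesis $\mathbb{E}[\mathbf{g}_{t-1}]=\mathbb{E}[\mathbf{g}]$. On an idle step, the surrogate gradient is $\frac{1}{1-\beta_1^{t-1}}\mom_{t-1} = \hat{\mom}_{t-1}$, whose expectation is $\mathbb{E}[\mathbf{g}]$ by the inductive hypothesis; substituting it into $\mom_t = \beta_1\mom_{t-1} + (1-\beta_1)\hat{\mom}_{t-1}$ and taking expectations gives $\mathbb{E}[\mom_t] = \bigl(\beta_1(1-\beta_1^{t-1}) + (1-\beta_1)\bigr)\mathbb{E}[\mathbf{g}] = (1-\beta_1^{t})\,\mathbb{E}[\mathbf{g}]$, and dividing by $1-\beta_1^{t}$ yields $\mathbb{E}[\hat{\mom}_t]=\mathbb{E}[\mathbf{g}]$.

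I do not expect a genuine obstacle here — the whole argument is a telescoping product plus a one-step induction. The only points requiring care are bookkeeping the step indices consistently with the convention $\mom_t = \beta_1\mom_{t-1} + (1-\beta_1)\mathbf{g}_{t-1}$ used in Proposition~\ref{proposition-adam-biased}, and the base case: since the correction invokes $1/(1-\beta_1^{t-1})$ it is defined only for $t\ge 2$, so the first update is left as in vanilla AdamW and the induction is started from there.
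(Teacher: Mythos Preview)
Your proposal is correct and mirrors the paper's own proof almost exactly: the paper also reduces the idle-window recursion to the single factor $(1-\beta_1^{t+j})/(1-\beta_1^{t+j-1})$, telescopes to obtain $\hat{\mom}_{t+p}=\hat{\mom}_t$ and $\hat{\vom}_{t+p}=\hat{\vom}_t$, and establishes unbiasedness by the same one-step induction on $\mathbb{E}[\mom_t]=(1-\beta_1^{t})\mathbb{E}[\mathbf{g}]$ with the two cases $\mathbf{g}_t\neq\bm{0}$ and $\mathbf{g}_t=\bm{0}$. The only cosmetic difference is that the paper treats unbiasedness before the identity, and your remarks on the index convention and the $t\ge 2$ base case are more careful than what the paper spells out.
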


\begin{proof}

We first show the unbiasedness of $\mom_t$ and the proof for $\vom_t$ is completely analogous.
It remains only to show that
\begin{align*}
  \mathbb{E}[\mom_t] = (1 - \beta_1^t) \mathbb{E}[\mathbf{g}].
\end{align*}
This is trivial for $t=0$.
Assuming that this also holds in the case of $t-1$, it can be proved by induction.

If $\mathbf{g}_t \not=\bm{0}$,
\begin{align*}
  \mathbb{E}[\mom_t] 
  &=\beta_1 \mathbb{E}[\mom_{t-1}] + (1 - \beta_1) \mathbb{E}[\mathbf{g}_t] \\
  &=\beta_1 (1 - \beta_1^{t-1})\mathbb{E}[\mathbf{g}] + (1 - \beta_1) \mathbb{E}[\mathbf{g}] \\
  &=(1 - \beta_1^{t})\mathbb{E}[\mathbf{g}].
\end{align*}

If $\mathbf{g}_t =\bm{0}$,
\begin{align*}
  \mathbb{E}[\mom_t] 
  &=\beta_1 \mathbb{E}[\mom_{t-1}] + (1 - \beta_1) \frac{1}{1 - \beta_1^{t-1}} \mathbb{E}[\mom_{t-1}] \\
  &=\frac{1 - \beta_1^t}{1 - \beta_1^{t-1}} \mathbb{E}[\mom_{t-1}]
  =(1 - \beta_1^t) \mathbb{E}[\mathbf{g}].
\end{align*}

If the node is no longer sampled in subsequent $p$ batches after step $t$, we have
\begin{align*}
  \mom_{t + j} 
  &=\beta_1 \mom_{t + j - 1} + (1 - \beta_1) \frac{1}{1 - \beta_1^{t + j - 1}} \mom_{t + j - 1} \\
  &=\frac{1 - \beta_1^{t + j}}{1 - \beta_1^{t + j - 1}} \mom_{t + j - 1}
  =\frac{1 - \beta_1^{t + j}}{1 - \beta_1^{t + j - 1}} \frac{1 - \beta_1^{t + j - 1}}{1 - \beta_1^{t + j - 2}} \mom_{t + j - 2} \\
  &= \cdots = \frac{1 - \beta_1^{t + j}}{1 - \beta_1^{t}} \mom_{t}, \quad \forall j=1,2,\ldots, p.
\end{align*}
Analogously, we have,
\begin{align*}
  \vom_{t + j} = \frac{1 - \beta_2^{t + j}}{1 - \beta_2^{t}} \vom_{t}, \quad \forall j=1,2,\ldots, p.
\end{align*}
Therefore, the conclusion can be deduced from
\begin{align*}
  \hat{\mom}_{t + p} = \frac{1}{1 - \beta_1^{t + j}} \mom_{t+p} = \frac{1}{1 - \beta_1^{t}} \mom_{t} = \hat{\mom}_t, \\
  \hat{\vom}_{t + p} = \frac{1}{1 - \beta_2^{t + j}} \vom_{t+p} = \frac{1}{1 - \beta_2^{t}} \vom_{t} = \hat{\vom}_t. \\
\end{align*}

\end{proof}

\subsection{Connection between LightGCN and SEvo-enhanced MF-BPR}

For a $L$-layer LightGCN, it can be formulated as follows
\begin{equation*}
\mathbf{F} = \psi (\mathbf{E}) := \sum_{l=0}^L \alpha_l \mathbf{\tilde{A}}^l \mathbf{E},
\end{equation*}
where $\alpha_l, l=0, \ldots, L$ represent the layer weights.
According to the linear nature of the gradient operator, it can be obtained that
\begin{equation*}
\nabla_{\mathbf{E}} \mathcal{L} = \psi (\nabla_{\mathbf{F}} \mathcal{L}).
\end{equation*}
Hence, denoted by $\zeta(\cdot)$ the gradient processing procedure of an optimizer, 
we can establish that LightGCN is identical to the following system:

\begin{align*}
\mathbf{F}(t) 
&= \psi( \mathbf{E}(t) ) \\ 
&= \psi( \mathbf{E}(t-1) - \eta \Delta \mathbf{E}(t-1) ) \\
&= \psi( \mathbf{E}(t-1) ) - \eta \psi( \Delta \mathbf{E}(t-1) ) \\
&= \mathbf{F}(t-1) - \eta \psi \circ \zeta \circ \psi ( \nabla_{\mathbf{F}} \mathcal{L} ).    
\end{align*}

When $\zeta(\cdot)$ is an identity mapping (i.e., standard gradient descent), LightGCN is equivalent to MF-BPR with SEvo being applied twice at each update.
However, when $\zeta(\cdot)$ is not an identity mapping (e.g., an optimizer with momentum or weight decay is integrated), they cannot be unified into a single system.
Compared to explicit GNNs, 
SEvo is easy-to-use and has minimal impact on the forward pass, making it more suitable for assisting recommenders in simultaneously utilizing multiple types of information.
These connections in part justify why SEvo can inject structural information directly.


\section{Detailed Settings}

\subsection{Algorithms}
\label{section-algorithms}

We present the algorithms of SEvo-enhanced AdamW, Adam, and SGD
in Algorithm \ref{alg-AdamW}, \ref{alg-Adam} and Algorithm \ref{alg-SGD}, respectively.

\begin{algorithm}[h]
  \caption{
  SEvo-enhanced AdamW.
  Differences from the original AdamW are colored in \textcolor{tsnecolor2}{blue}.
  The matrix operation below are element-wise.
  }
  \label{alg-AdamW}
  \KwIn{
    embedding matrix $\mathbf{E}$,
    learning rate $\eta$,
    momentum factors $\beta_1, \beta_2, \beta \in [0, 1)$,
    weight decay $\lambda$.
  }
  \ForEach{step $t$}{
    $\mathbf{G}_t \leftarrow \nabla_{\emb} \mathcal{L}$ \tcp*[r]{Get gradients w.r.t $\emb$}
    Update first/second moment estimates for each node $i$:
    \begin{align*}
      \M_t[i] \leftarrow
      \left \{
      \begin{array}{ll}
      \beta_1 \M_{t-1}[i] + (1 - \beta_1) \mathbf{G}_t[i] & \text{if } \mathbf{G}_t[i] \not= \bm{0} \\
      \textcolor{tsnecolor2}{\beta_1 \M_{t-1}[i] +  \frac{ (1 - \beta_1)}{1 - \beta_1^{t-1}} \M_{t-1}[i]} & \text{otherwise}
      \end{array}
      \right ., & \quad \quad\\
      \V_t[i] \leftarrow
      \left \{
      \begin{array}{ll}
      \beta_2 \V_{t-1}[i] + (1 - \beta_2) \mathbf{G}_t^2[i] & \text{if } \mathbf{G}_t[i] \not= \bm{0} \\
      \textcolor{tsnecolor2}{\beta_2 \V_{t-1}[i] + \frac{(1 - \beta_2)}{1 - \beta_2^{t-1}} \V_{t-1}[i]} & \text{otherwise}
      \end{array}
      \right .; & \quad \quad
    \end{align*}

    Compute bias-corrected first/second moment estimates:
    \begin{align*}
      \hat{\M}_t \leftarrow \M_t / (1 - \beta_1^t),
      \quad \hat{\V}_t \leftarrow \V_t / (1 - \beta_2^t);
    \end{align*}

    Update via SEvo:
    \begin{align*}
    \emb_t \leftarrow \emb_{t-1} - \eta \: \textcolor{tsnecolor2}{\hat{\psi}\bigg(\hat{\M}_t / \sqrt{\hat{\V}_t + \epsilon}; \beta \bigg)} - \eta \lambda \emb_{t-1}.
    \end{align*}
  }

  \KwOut{optimized embeddings $\emb$.}
\end{algorithm}

\begin{algorithm}[h]
  \caption{
  Adam enhanced by SEvo.
  Differences from the original Adam are colored in \textcolor{tsnecolor2}{blue}.
  The matrix operation below are element-wise.
  }
  \label{alg-Adam}
  \KwIn{
    embedding matrix $\emb$,
    learning rate $\eta$,
    momentum factors $\beta_1, \beta_2, \beta \in [0, 1)$,
    weight decay $\lambda$.
  }

  \ForEach{step $t$}{

    $\mathbf{G}_t \leftarrow \nabla_{\emb} \mathcal{L} + \lambda \emb_{t-1}$ \tcp*[r]{Get gradients}

    Update first/second moment estimates:
      \begin{align*}
      \M_t \leftarrow \beta_1 \M_{t-1} + (1 - \beta_1) \mathbf{G}_t, \\
      \quad
      \V_t \leftarrow \beta_1 \V_{t-1} + (1 - \beta_2) \mathbf{G}_t^2;
      \end{align*}

    Compute bias-corrected first/second moment estimates:
    \begin{align*}
      \hat{\M}_t \leftarrow \M_t / (1 - \beta_1^t), \\
      \quad \hat{\V}_t \leftarrow \V_t / (1 - \beta_2^t);
    \end{align*}

    Update via SEvo:
    \begin{align*}
    \emb_t \leftarrow \emb_{t-1} - \eta \: \textcolor{tsnecolor2}{\hat{\psi}\bigg(\hat{\M}_t / \sqrt{\hat{\V}_t + \epsilon}; \beta \bigg)}.
    \end{align*}
  }

	\KwOut{optimized embeddings $\emb$.}
	\BlankLine

\end{algorithm}

\begin{algorithm}[h]
  \caption{
  SGD with momentum enhanced by SEvo.
  Differences from the original SGD are colored in \textcolor{tsnecolor2}{blue}.
  The matrix operation below are element-wise.
  }
  \label{alg-SGD}
  \KwIn{
    embedding matrix $\emb$,
    learning rate $\eta$,
    momentum factors $\mu, \beta \in [0, 1)$,
    weight decay $\lambda$.
  }

  \ForEach{step $t$}{

    $\mathbf{G}_t \leftarrow \nabla_{\emb} \mathcal{L} + \lambda \emb_{t-1}$ \tcp*[r]{Get gradients}
    $\M_t \leftarrow \mu \M_{t-1} + \mathbf{G}_t$ \tcp*[r]{Moment update}

    Update via SEvo:
    \begin{align*}
    \emb_t \leftarrow \emb_{t-1} - \eta \: \textcolor{tsnecolor2}{\hat{\psi}\bigg(\M_t; \beta \bigg)}.
    \end{align*}
  }

	\KwOut{optimized embeddings $\emb$.} 
	\BlankLine

\end{algorithm}


\subsection{Datasets}
\label{section-datasets}

In this study, we perform experiments on six public datasets. 
Specifically, the Beauty, Toys, and Tools datasets are extracted from Amazon reviews published in 2014\footnote{
\url{https://cseweb.ucsd.edu/~jmcauley/datasets/amazon/links.html}
}, 
while Electronics and Clothing are sourced from Amazon reviews published in 2018\footnote{
\url{https://cseweb.ucsd.edu/~jmcauley/datasets/amazon_v2}
}.
Additionally, the MovieLens-1M dataset is made available by GroupLens\footnote{
\url{https://grouplens.org/datasets/movielens/1m}
}.

\subsection{Baselines}
\label{section-baselines}

Four GNN-based baselines for performance and efficiency benchmarks:
\begin{itemize}[leftmargin=*]
\item \textbf{LightGCN} \cite{he2020lightgcn} 
is a pioneering collaborative filtering model that simplifies graph convolutional networks (GCNs) by removing nonlinearities for easier training. 
It uses only graph structural information and has no access to sequential information.

\item
\textbf{SR-GNN} \cite{wu2019session} and \textbf{LESSR} \cite{chen2020handling} are two baselines dynamically constructing session graph. 
The former employs a gated graph neural network to obtain the final node vectors,
while the latter utilizes edge-order preserving multigraph and a shortcut graph to address the lossy session encoding and
ineffective long-range dependency capturing problems, respectively.

\item
\textbf{MAERec} \cite{ye2023graph}
learns to sample less noisy paths from a semantic similarity graph for subsequent reconstruction tasks.
However, we found that the official implementation treats the recommendation loss and reconstruction loss equally, leading to poor performance here.
Therefore, an additional weight is attached to the reconstruction loss and a grid search is performed in the range of [0, 1].
Almost all hyperparameters are tuned for competitive performance.
\end{itemize}

Four sequence backbones to validate the effectiveness of SEvo:
\begin{itemize}[leftmargin=*]
\item
\textbf{GRU4Rec} \cite{hidasi2015session} applies RNN \cite{cho2014properties} to recommendation with specific modifications made to cope with data sparsity.
In addition to the learning rate in \{1e-4, 5e-4, 1e-3, 5e-3\} and weight decay in [0, 0.1], we also tune the dropout rate for node features in the range of [0, 0.7].

\item
\textbf{SASRec} \cite{kang2018self} and \textbf{BERT4Rec} \cite{sun2019bert4rec} are two pioneering works on sequential recommendation equipped with unidirectional and bidirectional self-attention, respectively.
For BERT4Rec which employs a separate fully-connected layer for scoring,
the weight matrix therein will also be smoothed by SEvo.
In addition to some basic hyperparameters, the mask ratio is also researched for BERT4Rec.

\item
\textbf{STOSA} \cite{fan2022sequential} is one of the state-of-the-art models. 
It aims to capture the uncertainty of sequential behaviors by modeling each item as a Gaussian distribution.
The hyperparameters involved are tuned similarly to SASRec.
\end{itemize}

Four knowledge distillation methods used in Appendix~\ref{section-sevo-kd}:
\begin{itemize}[leftmargin=*]
  \item
  \textbf{KD} \cite{KD:KD:Hinton:2015} and \textbf{DKD} \cite{KD:DKD:Zhao:2022} 
  are two logits-based approaches to transfer knowledge.
  DKD decomposes the classical KD loss into target class knowledge distillation loss and non-target class knowledge distillation loss.
  \item
  \textbf{RKD} \cite{KD:RKD:Park:2019} and \textbf{HTD} \cite{KD:HTD:Kang:2021}
  are two ranking-based approaches.
  The former focuses the distillation of relational knowledge through distance-wise and angle-wise alignments, 
  while the latter emphasizes the distillation of hierarchical topology 
  by dividing nodes into multiple groups and requiring intra-group and inter-group alignments.
\end{itemize}

\subsection{Implementation Details}
\label{section-implementation-details}

Since the purpose is to study the effectiveness of SEvo, 
only hyperparameters concerning optimization are retuned, including learning rate ([1e-4, 5e-3]), weight decay ([0, 0.1]) and dropout rate ([0, 0.7]).
Other hyperparameters in terms of the architecture are consistent with the corresponding baseline.
For a fair comparison, the number of layers $L$ is fixed to 3 as in other GNN-based recommenders.
As for the hyperparameter in terms of the degree of smoothness, 
we found $\beta=0.99$ performs quite well in practice.
The loss functions follow the suggestions in the respective papers, given that SEvo can be applied to any of them.

\section{Applications of SEvo beyond Interaction Data}
\label{section-adj-similarity}

Here, we preliminarily explore the exploitation of more types of knowledge besides consecutive occurrences.
We first investigate some elementary factors for interaction data,
and then introduce the applications of SEvo to node categories and knowledge distillation.

\subsection{Pairwise Similarity Estimation Factors}

\begin{figure*}[h]
	\centering
  \includegraphics[width=0.97\textwidth]{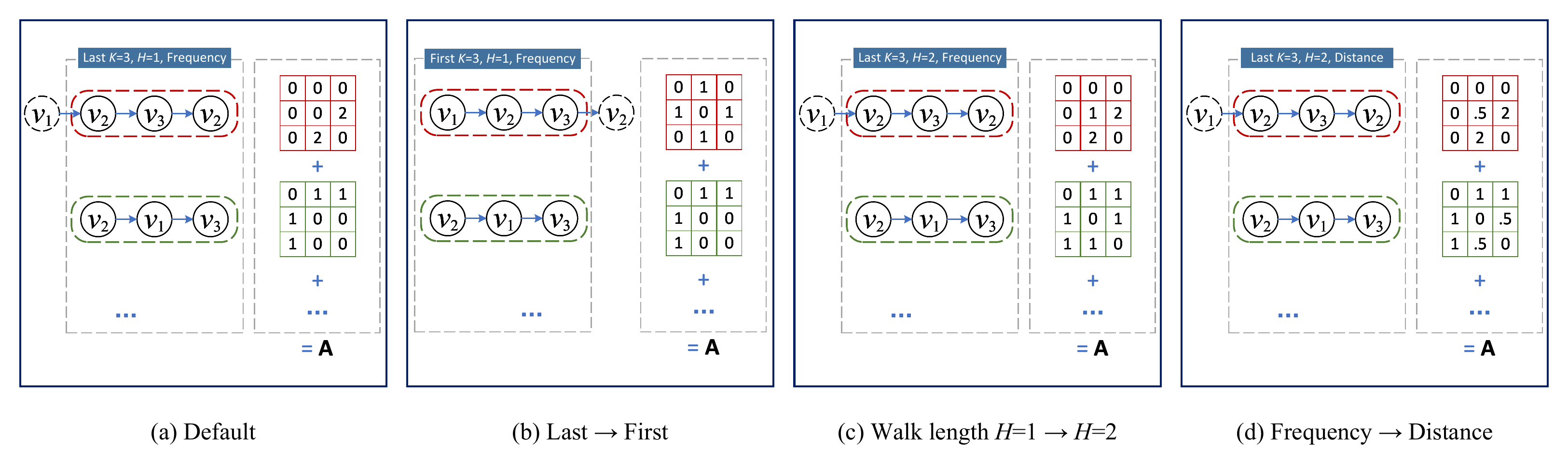}
	\caption{
  Illustrations of different pairwise similarity estimation methods based on interaction data.
  (a) The default is to adopt the co-occurrence frequency within the last $K$ items.
  (b) Using only the first $K$ items.
  (c) Allowing a maximum walk length of $H$ beyond 1.
  (d) Frequency-based similarity versus distance-based similarity.
	}
	\label{fig-adj-Illustration}
\end{figure*}

\begin{algorithm}[h]
  \caption{
    Python-style algorithm for similarity estimation based on interaction data.
  }
  \label{alg-adj}
  \begin{lstlisting}
  for seq in seqs:
      if first: 
        seq = seq[:K] # First K items
      else: 
        seq = seq[-K:] # Last K items
      for i in range(len(seq) - 1):
          # Maximum walk length H
          for h, j in enumerate( 
              range(i + 1, min(i + H + 1, len(seq))), 
              start=1
          ):
              if frequency: # Frequency
                A[seq[i], seq[j]] += 1 
                A[seq[j], seq[i]] += 1
              else: # Distance
                A[seq[i], seq[j]] += 1 / h
                A[seq[j], seq[i]] += 1 / h
  \end{lstlisting}
\end{algorithm}

\begin{figure}
	\centering

  \subfloat[Sequence length]{
    \includegraphics[width=0.47\textwidth]{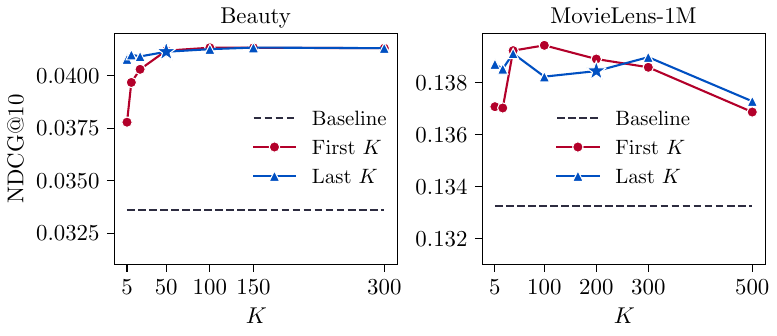}
    \label{fig-adj-slice}
  }
  \subfloat[Frequency-based versus Distance-based]{
    \includegraphics[width=0.47\textwidth]{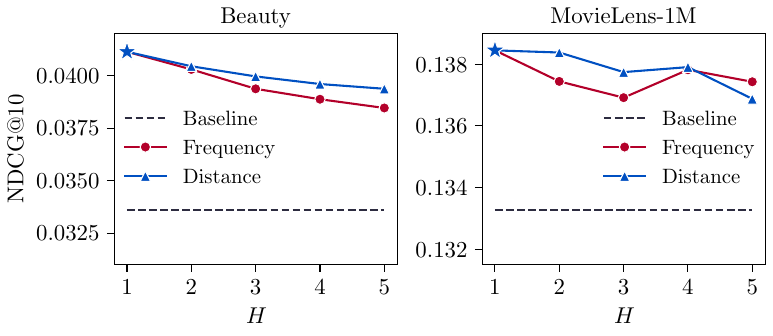}
    \label{fig-adj-intervals}
  }
	\caption{
    Comparison of similarity estimation across four potential factors.
     `\textcolor{tsnecolor2}{$\star$}' indicates the default way applied to SEvo-enhanced sequence models in Section \ref{section-comparison}.
      (a) Using only the \textcolor{tsnecolor3}{first}/\textcolor{tsnecolor2}{last} $K$ items for pairwise similarity estimation.
      (b) \textcolor{tsnecolor3}{Frequency-} and \textcolor{tsnecolor2}{distance-}based similarity with a maximum walk length of $H$.
	}
	\label{fig-adj}
\end{figure}

Recent GNN-based sequence models \cite{wu2019session,xia2021self}, 
as well as the SEvo-enhanced models reported in Section \ref{section-experiments},
estimate the pairwise similarity $w_{ij}$ between items $v_i$ and $v_j$ based on their co-occurrence frequency in sequences.
In other words, items that appear consecutively more frequently are assumed more related.
Yet there are some factors that deserve a closer look:
\textbf{(1)} The maximum sequence length $K$ for construction to investigate the number of interactions required for accurate estimation;
\textbf{(2)} Using only the first $K$ versus last $K$ interactions in each sequence to compare the utility of early and recent preferences;
\textbf{(3)} Allowing related items to be connected by a walk of length $\le H$ rather than strict consecutive occurrences;
\textbf{(4)} Frequency-based similarity versus distance-based similarity.
The former weights all related pairs equally, while the latter weights inversely to their walk length.
For example, given a sequence $v_2 \rightarrow v_1 \rightarrow v_3$ with a maximum walk length of $H=2$,
the frequency-based similarity of $(v_2,v_3)$ gives 1, while the distance-based similarity is $1/2$ (as the walk length from $v_2$ to $v_3$ is $2$).

Figure~\ref{fig-adj-Illustration} illustrates these four variants and 
Algorithm \ref{alg-adj} details a step-by-step process.
We further compare these four potential factors in Figure \ref{fig-adj}:
\begin{itemize}[leftmargin=*]
  \item
  Figure \ref{fig-adj-slice} shows the effect of confining the maximum sequence length to the \textcolor{tsnecolor3}{first}/\textcolor{tsnecolor2}{last} $K$ items, 
  so only the \textcolor{tsnecolor3}{early}/\textcolor{tsnecolor2}{recent} preferences will be considered.
  In constrast to early interactions, recent ones imply more precise pairwise relations for future prediction, even for small $K$.
  With the increase of the maximum sequence length, the recommendation performance on Beauty improves steadily, but not the case for MovieLens-1M.
  This suggests that shopping relations may be more consistent than movie preferences.
  \item
  Figure \ref{fig-adj-intervals} explores the relations beyond strict consecutive occurrences; 
  that is, two items are considered related once they co-occur within a path of length $\le H$.
  For the shopping and movie datasets, estimating similarity beyond co-occurrence frequency appears less reliable overall. 
  We also compare frequency-based similarity with distance-based similarity that decreases weights for more distant pairs.
  It is clear that the distance-based approach performs more stably as the maximum walk length $H$ increases.
\end{itemize}


\begin{figure*}[t]
	\centering
  \includegraphics[width=0.85\textwidth]{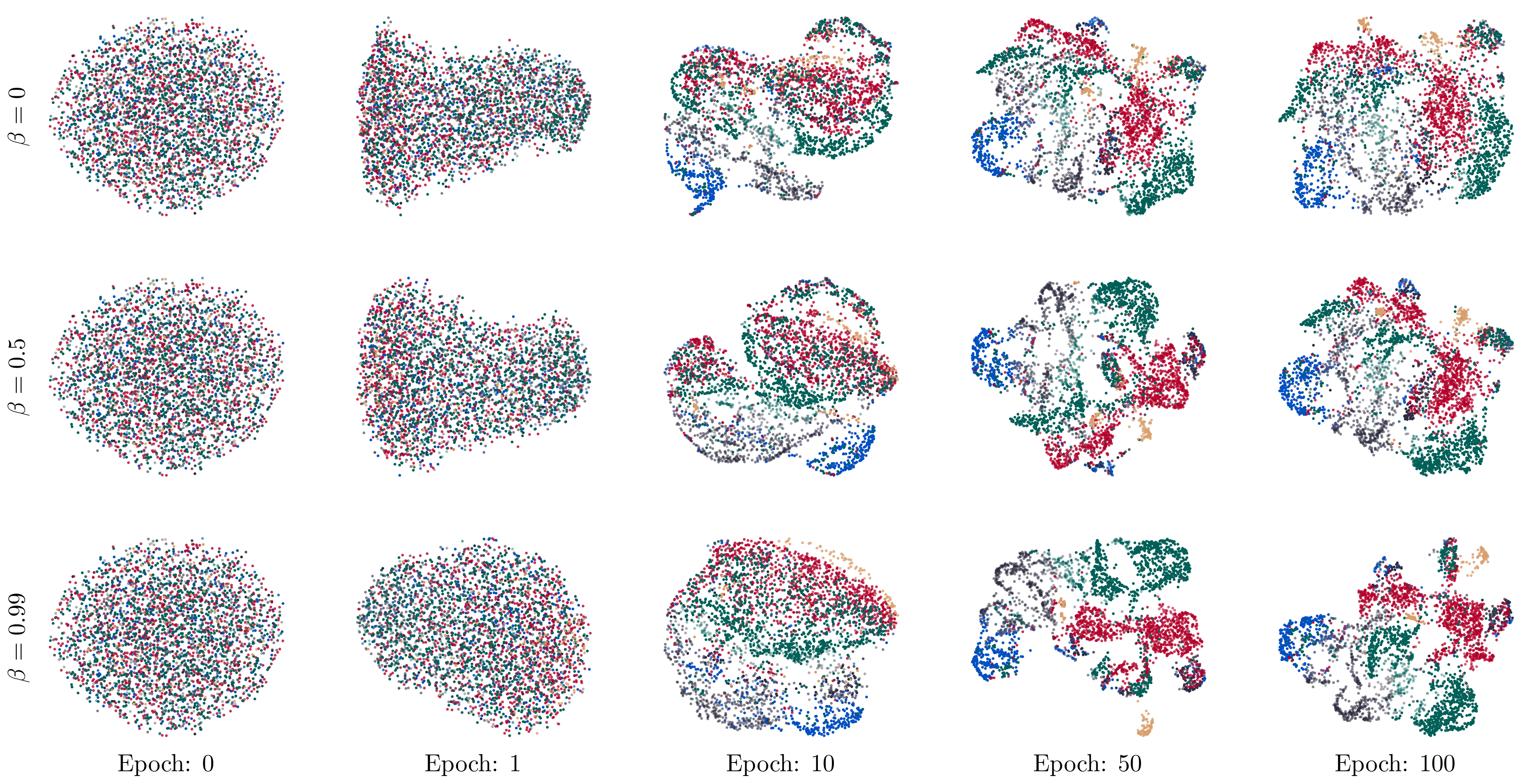}
	\caption{
    UMAP \cite{mcinnes2018umap} visualization of movies based on their embeddings.
    For ease of differentiation, we group the 18 genres into 6 categories and colored them individually:
    \textcolor{tsnecolor1}{Thriller/Crime/Action/Adventure};
    \textcolor{tsnecolor2}{Horror/Mystery/Film-Noir};
    \textcolor{tsnecolor3}{War/Drama/Romance};
    \textcolor{tsnecolor4}{Comedy/Musical/Children's/Animation};
    \textcolor{tsnecolor5}{Fantasy/Sci-Fi};
    \textcolor{tsnecolor6}{Western/Documentary}.
	}
	\label{fig-ml-visualization}
\end{figure*}

\subsection{SEvo for Intra-class Representation Proximity}
\label{section-node-categories}

\begin{table}[h]
  \caption{
    Pairwise similarity estimation based on interaction data versus node categories (movie genres).
  }
  \label{table-similarity-prior}
  \centering
  \scalebox{0.99}{
\begin{tabular}{l|r|ccc}
  \toprule
 & & \multicolumn{3}{c}{MovieLens-1M}                                               \\
 & $\beta$    & HR@1 & HR@10 & NDCG@10 \\
  \midrule
 Baseline (SASRec) & 0    & 0.0457 & 0.2482 & 0.1315 \\
  \midrule
\multirow{2}{*}{Interaction data} & 0.5  & 0.0494 & 0.2538 & 0.1362 \\
& 0.99 & \textbf{0.0517} & \textbf{0.2567} & \textbf{0.1385} \\
  \midrule
\multirow{2}{*}{Movie genres} & 0.5  & 0.0492 & 0.2527 & 0.1352 \\
& 0.99 & 0.0508 & 0.2549 & 0.1371 \\
  \bottomrule
\end{tabular}
  }
\end{table}

Sometimes embeddings are expected to be smooth w.r.t. a prior knowledge.
For example, in addition to the interaction data, each movie in the MovieLens-1M dataset is associated with at least one genre.
It is natural to assume that movies of the same genre are related to each other.
Heuristically, we can define the similarity $w_{ij}$ to be 1 if $v_i$ and $v_j$ belong to the same genre and 0 otherwise.

As can be seen in Figure~\ref{fig-ml-visualization}, such smoothness constraint can also be fulfilled through SEvo, 
leading to progressively stronger clustering effects as $\beta$ increases.
However, the resulting performance gains are slightly less than those based on interaction data (see Table~\ref{table-similarity-prior}).
One possible reason is that the movie genres are too coarse to provide particularly useful information. 
In conclusion, while smoothness is an appealing inductive bias, its utility depends on how well the imposed structural information agrees with the performance metrics of interest.

\subsection{SEvo for Knowledge Distillation}
\label{section-sevo-kd}

\begin{table}[h]
  \caption{
    Knowedge distillation from Teacher (SASRec with a embedding size of 200) to Student (SASRec with a embedding size of 20).
    The results are averaged over 5 independent runs on the Beauty dataset.
    10-nearest neighbors (\ie, $K=10$) are selected for each node.
  }
  \label{table-KD}
  \centering
  \scalebox{0.8}{
\begin{tabular}{lccccc}
  \toprule
                         & HR@1   & HR@5   & HR@10  & NDCG@5 & NDCG@10 \\
  \midrule
                         Teacher          & 0.0198 & 0.0544 & 0.0786 & 0.0374 & 0.0452  \\
  \midrule
                         Student          & 0.0094 & 0.0327 & 0.0526 & 0.0210 & 0.0275  \\
\multicolumn{1}{l}{$\quad$+KD \cite{KD:KD:Hinton:2015}}      & 0.0105 & 0.0352 & 0.0552 & 0.0229 & 0.0294  \\
\multicolumn{1}{l}{$\quad$+RKD \cite{KD:RKD:Park:2019}}     & 0.0082 & 0.0311 & 0.0515 & 0.0196 & 0.0262  \\
\multicolumn{1}{l}{$\quad$+HTD \cite{KD:HTD:Kang:2021}}     & 0.0085 & 0.0344 & 0.0549 & 0.0215 & 0.0281  \\
\multicolumn{1}{l}{$\quad$+DKD \cite{KD:DKD:Zhao:2022}}     & 0.0138 & 0.0389 & \textbf{0.0577} & 0.0265 & 0.0325  \\
  \midrule
                        Student          & 0.0094 & 0.0327 & 0.0526 & 0.0210 & 0.0275  \\
\multicolumn{1}{l}{$\quad$+SEvo}    & 0.0107 & 0.0364 & 0.0576 & 0.0236 & 0.0304  \\
\multicolumn{1}{l}{$\quad\quad$+DKD} & \textbf{0.0166} & \textbf{0.0407} &	0.0568 & \textbf{0.0289} &	\textbf{0.0341} \\

  \bottomrule
\end{tabular}
  }
\end{table}

In addition to the affinity matrix extracted from interaction data or relation data,
the pairwise similarity can also be estimated from a heavy-weight teacher model.
Recall that Knowledge Distillation (KD)~\cite{KD:KD:Hinton:2015} encourages a light-weight student model to mimic the behaviors (\eg, output distribution) of the teacher model,
so the learned student model achieves both accuracy and efficiency.
In general, higher-dimensional embeddings are capable of better fitting the underlying distribution between entities.
The pairwise similarities extracted from a teacher model, needless to say, can be used to guide the embedding evolution of a student model.
Unlike interaction or relation data, the deduced graph is dense if only the distance function is applied.
Therefore, some graph construction steps including sparsification and reweighting should be involved as well.
We attempt to use the widely used KNN graph here, 
and leave a more comprehensive study of graph construction \cite{jebara2009graph} as a future work.

Specifically, the distance between each pair $v_i, v_j$ is estimated using a cosine similarity distance function:
\begin{align}
  d_{ij} = 2 - 2\frac{\mathbf{e}_i^T \mathbf{e}_j}{\|\mathbf{e}_i\|_2 \|\mathbf{e}_j\|_2}.
\end{align}
Then, $K$-nearest neighbors are selected for each node; that is
\begin{align}
  (i, j) \in \mathcal{E}, \: i \not = j \text{ iff } |\{k \not = i: d_{ik} \le d_{ij}\}| \le K.
\end{align}
This sparsification is neccessary for several reasons:
1) SEvo with a dense adjacency matrix is computationally prohibitive to conduct;
2) Generally, only the top-ranked neighbors are reliable for next distillation.
Finally, the adjacency matrix is obtained through reweighting and symmetrizing:
\begin{align*}
w_{ij} = \hat{w}_{ij} + \hat{w}_{ji}, \quad
\hat{w}_{ij} := \exp(-d_{ij} / \tau),
\end{align*}
where $\tau > 0$ is the kernel bandwidth parameter.

Table \ref{table-KD} reports the results of the SASRec backbone with different embedding sizes (200 versus 20).
Although a student equipped with SEvo can only derive guidance from the teacher in terms of embedding modeling,
it has surpassed RKD and HTD that focus on feature/output alignments.
Recall that SEvo only needs to access the teacher model once for adjacency matrix construction, 
whereas other knowledge distillation approaches require accessing the teacher model for each update.
SEvo is arguably an efficient tool for transferring embedding knowledge.
Nevertheless, SEvo alone cannot be expected to facilitate the learning of the other modules,
which consequently is still inferior to state-of-the-art methods such as DKD.
Fortunately, SEvo and DKD can work together to further boost the recommendation performance.



\section{Additional Experimental Results}

\subsection{SEvo for GNN-based models}

\begin{table}[h]
  \centering
  \caption{
    Beauty recommendation performance comparison.
    SEvo-enhanced AdamW is applied to LESSR and MAERec.
  }
  \label{fig-sevo-lessr-maerec}
\begin{tabular}{lccccc}
  \toprule
                         & HR@1   & HR@5   & HR@10  & NDCG@5 & NDCG@10 \\
  \midrule
LESSR                    & 0.0088 & 0.0322 & 0.0506 & 0.0205 & 0.0264  \\
\rowcolor{bgc} \multicolumn{1}{r}{+SEvo} & 0.0126 & 0.0405 & 0.0625 & 0.0267 & 0.0338  \\
      Improv.                   & 43.2\% & 26.0\% & 23.5\% & 30.4\% & 27.9\%  \\
  \midrule
MAERec                   & 0.0113 & 0.0424 & 0.0662 & 0.0269 & 0.0346  \\
\rowcolor{bgc} \multicolumn{1}{r}{+SEvo} & 0.0120 & 0.0441 & 0.0677 & 0.0283 & 0.0358  \\
 Improv.     & 6.7\%  & 4.0\%  & 2.3\%  & 4.9\%  & 3.6\%   \\
  \bottomrule
\end{tabular}
\end{table}

In Section~\ref{section-experiments} we have comprehensively validated the effectiveness of SEvo for classic sequence models.
It is also of interest to explore the impact on GNN-based models that have learned certain structural information.
Table~\ref{fig-sevo-lessr-maerec} reports the results on LESSR and MAERec:
SEvo not only facilitates the learning of LESSR, but also helps MAERec that already utilizes global graph information. 
This implies that previous efforts fail to fully exploit structural information, while SEvo demonstrates superior performance in this regard.

\subsection{$\mathcal{J}_{smoothness}$ as a Regularization Term}

\begin{figure}[h]
	\centering
  \includegraphics[width=0.7\textwidth]{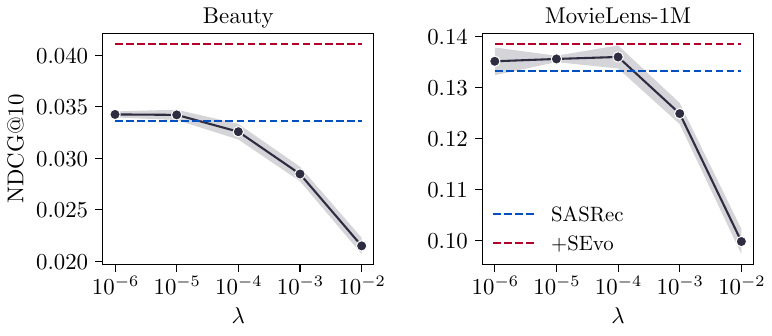}
	\caption{
    Smoothness constraints through a regularization term.
	}
	\label{fig-regularization}
\end{figure}

Structural information may be injected by imposing $\mathcal{J}_{smoothness}$ as a regularization term; that is,
\begin{align}
  \min_{\mathbf{E}, \bm{\theta}} \quad \mathcal{L}(\mathbf{E}, \bm{\theta}) + \lambda \mathcal{J}_{smoothness} (\mathbf{E}; \mathcal{G}),
\end{align}
where $\lambda \ge 0$ is a hyperparameter governing the degree of smoothness.
We conduct this ablation study in Figure~\ref{fig-regularization} with a $\lambda$ from $10^{-6}$ to $0.01$.
As can be seen, incorporating a smoothness regularization term could slightly improve the recommendation performance, but it is not optimal.
SEvo performs better because the gradient of the regularization term may be in conflict with the primary loss function.

\subsection{$L$-layer Approximation}

\begin{table}[h]
  \centering
  \caption{
  SEvo using different approximation layers $L$.
  }
  \label{table-layer-approximation}
  \scalebox{0.8}{
  \begin{tabular}{l||ccc|ccc}
    \toprule
      & \multicolumn{3}{c|}{Beauty}                          & \multicolumn{3}{c}{MovieLens-1M}                     \\
    \midrule
      & HR@1            & HR@10           & NDCG@10         & HR@1            & HR@10           & NDCG@10         \\
    \midrule
  \textit{L}=0 & 0.0124          & 0.0664          & 0.0353          & 0.0465          & 0.2487          & 0.1321          \\
  \textit{L}=1 & 0.0140          & 0.0717          & 0.0388          & 0.0498          & 0.2562          & 0.1372          \\
  \textit{L}=2 & 0.0152          & 0.0740          & 0.0403          & 0.0511          & \textbf{0.2589} & \textbf{0.1389} \\
  \textit{L}=3 & \textbf{0.0154} & \textbf{0.0759} & \textbf{0.0411} & \textbf{0.0517} & 0.2567          & 0.1385          \\
  \textit{L}=4 & 0.0153          & 0.0755          & 0.0408          & 0.0510          & 0.2576          & 0.1383          \\
  \textit{L}=5 & 0.0150          & 0.0750          & 0.0403          & 0.0492          & 0.2581          & 0.1382         \\
    \bottomrule
  \end{tabular}
  }
\end{table}

As $L$ increases, SEvo gets closer to the exact solution while accessing higher-order neighborhood information.
Table \ref{table-layer-approximation} lists the performance of different layers, which reaches its peak around $L=3$ and starts to decrease then.
A possible reason is that the higher-order information is over-smoothed and thus not as reliable and easy to use as the lower order information.
Similar phenomena have been found in previous works \cite{xu2023stablegcn,chen2020handling} on applying GNNs to recommendation.

\subsection{Training and Inference Times}
\label{section-efficiency-details}

The time complexity of SEvo is mainly determined by the arithmetic operations of $\adj^l \Delta \emb, l=1,2,\ldots, L$.
Assuming that the number of non-zero entries of $\adj$ is $S$, the complexity required is about $\mathcal{O}(LSd)$.
Because the recommendation datasets are known for high sparsity (\ie, $S$ is very small), 
the actual computational overhead can be reduced to a very low level, almost negligible.
Table~\ref{table-time-all} provides the actual training and inference times.

\begin{table*}[h]
  \setlength{\tabcolsep}{3pt}
  \centering
  \caption{
    Training and inference times.
    The wall time (seconds) here is evaluated on an Intel Xeon E5-2620 v4 platform and a single GTX 1080Ti GPU,
    while the results in Table~\ref{table-large-scale} are tested on an Intel Xeon CPU E5-2680 v4 platform and a single RTX 3090 GPU.
  }
  \label{table-time-all}
  \scalebox{0.67}{
\begin{tabular}{c|l|rrr|rrr|rrr|rrr}
    \toprule
                                                & \multicolumn{1}{c}{} & \multicolumn{3}{|c}{Beauty}                                                                & \multicolumn{3}{|c}{Toys}                                                                  & \multicolumn{3}{|c}{Tools}                                                                 & \multicolumn{3}{|c}{MovieLens-1M}                                                          \\
                                                & \multicolumn{1}{c}{} & \multicolumn{1}{|c}{Training} & \multicolumn{1}{c}{Inference} & \multicolumn{1}{c}{Epochs} & \multicolumn{1}{|c}{Training} & \multicolumn{1}{c}{Inference} & \multicolumn{1}{c}{Epochs} & \multicolumn{1}{|c}{Training} & \multicolumn{1}{c}{Inference} & \multicolumn{1}{c}{Epochs} & \multicolumn{1}{|c}{Training} & \multicolumn{1}{c}{Inference} & \multicolumn{1}{c}{Epochs} \\
    \midrule
\multirow{4}{*}{\rotatebox{90}{GNN}}                      & LightGCN             & 2000.50                      & 1.07                          & 1000                       & 1461.14                      & 0.97                          & 900                        & 922.60                       & 0.78                          & 600                        & 6898.42                      & 1.95                          & 600                        \\
                                                & SR-GNN               & 25837.60                     & 14.52                         & 300                        & 13711.61                     & 11.82                         & 200                        & 9455.18                      & 12.23                         & 150                        & 126129.93                    & 6.16                          & 150                        \\
                                                & LESSR                & 19686.60                     & 14.35                         & 300                        & 15923.59                     & 9.67                          & 300                        & 10994.38                     & 8.30                          & 300                        & 203226.08                    & 5.02                          & 300                        \\
                                                & MAERec               & 23956.43                     & 3.60                          & 100                        & 43233.90                     & 2.60                          & 200                        & 16920.16                     & 2.31                          & 100                        & 42586.58                     & 2.41                          & 200                        \\
    \midrule
    \midrule
\multirow{10}{*}{\rotatebox{90}{MF or   RNN/Transformer}} & MF-BPR               & 1781.65                      & 0.96                         & 1000                       & 1508.73                      & 0.91                          & 1000                       & 1326.93                      & 0.71                          & 1000                       & 6837.01                      & 1.95                          & 600                        \\
                                                & \multicolumn{1}{r|}{+SEvo}                 & 1937.32                      & 0.96                          & 1000                       & 1572.91                      & 0.91                          & 1000                       & 1510.23                      & 0.71                          & 1000                       & 7128.00                      & 1.95                          & 600                        \\
    \cmidrule{2-14}
                                                & GRU4Rec              & 927.98                       & 1.98                          & 300                        & 646.51                       & 1.77                          & 300                        & 638.57                       & 1.65                          & 300                        & 12116.04                     & 1.78                          & 300                        \\
                                                & \multicolumn{1}{r|}{+SEvo}                & 987.47                       & 1.98                          & 300                        & 791.19                       & 1.77                          & 300                        & 661.83                       & 1.65                          & 300                        & 12387.25                     & 1.78                          & 300                        \\
    \cmidrule{2-14}
                                                & SASRec               & 445.68                       & 2.16                          & 200                        & 413.35                       & 2.30                          & 200                        & 353.10                       & 1.81                          & 200                        & 919.27                       & 1.24                          & 200                        \\
                                                & \multicolumn{1}{r|}{+SEvo}                & 469.27                       & 2.16                          & 200                        & 480.37                       & 2.30                          & 200                        & 378.25                       & 1.81                          & 200                        & 954.77                       & 1.24                          & 200                        \\
    \cmidrule{2-14}
                                                & BERT4Rec             & 1470.76                      & 2.03                          & 500                        & 1330.58                      & 1.71                          & 500                        & 1092.10                      & 1.51                          & 500                        & 1131.62                      & 1.18                          & 500                        \\
                                               & \multicolumn{1}{r|}{+SEvo}                & 1965.97                      & 2.03                          & 500                        & 1595.66                      & 1.71                          & 500                        & 1374.72                      & 1.51                          & 500                        & 1243.08                      & 1.18                          & 500                        \\
    \cmidrule{2-14}
                                                & STOSA                & 2253.98                      & 9.84                          & 500                        & 2049.00                      & 8.42                          & 500                        & 1827.60                      & 6.65                          & 500                        & 2220.18                      & 1.98                          & 500                        \\
                                                & \multicolumn{1}{r|}{+SEvo} & 2491.54                      & 9.84                          & 500                        & 2231.11                      & 8.42                          & 500                        & 1879.62                      & 6.65                          & 500                        & 2259.66                      & 1.98                          & 500                       \\
    \bottomrule
\end{tabular}
  }
\end{table*}

\section{Broader Impact and Limitations}
\label{section-impact-limitation}

Utilizing both sequential and structural information is beneficial yet challenging,
and SEvo proposed in this paper suggests a novel and effective technical route for this purpose.
Compared to other explicit GNN modules, SEvo is light-weight and easy-to-use in practice.
These insights may inspire future research efforts regarding structure-aware optimization.
However, there are still some limitations.
Firstly, the training-free nature of SEvo makes it versatile, 
but also limits the expressive power. 
For a specific task, a sophisticated GNN module may be more desirable for achieving higher recommendation accuracy.
Secondly, it might not be so straightforward to apply SEvo to the scenario involving multiple types of prior knowledge.
Some efforts~\cite{tang2009clustering,nie2017self} in the field of multiple graph learning have proposed some technically feasible solutions. 
However, these approaches still encounter challenges in terms of efficiency, particularly in the context of recommendation systems.

\end{document}